\newtheorem{theorem}{Theorem}[section]
\newtheorem{proposition}{Proposition}[section]
\newtheorem{algorithm}{Algorithm}[section]
\newcommand{\AN}{{\mathcal A}_N}
\newcommand{\R}{{\mathbb R}}
\newcommand{\Z}{{\mathbb Z}}
\newcommand{\RN}{{\mathcal R}_N}
\newcommand{\rhobar}{{\overline{\rho}}}
\title{Pair densities in density functional theory}
\author{Huajie Chen\thanks{Zentrum Mathematik, Technische Universit\"{a}t M\"{u}nchen,
Boltzmannstra{\ss}e 3, 85747 Garching, Germany.
E-mail: {\tt chenh@ma.tum.de}.}
$\;$and Gero Friesecke\thanks{Zentrum Mathematik, Technische Universit\"{a}t M\"{u}nchen,
Boltzmannstra{\ss}e 3, 85747 Garching, Germany.
E-mail: {\tt gf@ma.tum.de}.}}
\date{}
\begin{document}
\maketitle

\begin{abstract}
The exact interaction energy of a many-electron system is determined by the
electron pair density, which is not well-approximated in standard Kohn-Sham density functional models. Here we study the (complicated but well-defined) exact universal map from density to pair density. We survey how many common functionals, including the most basic version of the LDA (Dirac exchange with no correlation contribution), arise from particular approximations of this map. We develop an algorithm to compute the map numerically, and apply it to  one-parameter families $\{\alpha\rho(\alpha x)\}_{\alpha>0}$ of one-dimensional homogeneous and inhomogeneous single-particle densities. We observe that the pair density develops remarkable multiscale patterns which strongly depend on both the particle number and the ``width'' $\alpha^{-1}$ of the single-particle density. The simulation results are confirmed by rigorous asymptotic results in the limiting regimes $\alpha>>1$ and $\alpha<<1$. 
For one-dimensional homogeneous systems, we show that the whole spectrum of patterns is reproduced surprisingly well by a simple asymptotics-based ansatz which slowly smoothens out the `strictly correlated' $\alpha=0$ pair density while slowly turning on the $\alpha=\infty$ `exchange' terms as $\alpha$ increases.
Our findings lend theoretical support to the celebrated semi-empirical idea \cite{becke93} to mix in a fractional amount of exchange, albeit not to assuming the mixing to be additive and taking the fraction to be a system-independent constant. 
\end{abstract}

\section{Introduction}\label{sec-introduction}\setcounter{equation}{0}

Density functional theory (DFT) \cite{hohenberg64,kohn65,PY89} provides the most widely used models for computing ground state electronic energies and densities in chemistry, materials science, biology, and nanosciences.
The success of DFT lies in the use of exchange-correlation functionals that model the intricate many-body interaction energy by explicit expressions in terms of the one-body density or the one-body Kohn-Sham orbitals.   
Although currently available approximations, such as B3LYP \cite{becke93,lee88} or PBE \cite{PBE}, perform remarkably well for a wide range of systems, the DFT models 
exhibit well-known failures when strong correlation effects are present, as arising for example in the breaking of chemical bonds \cite{cohen12}. Therefore, finding an accurate single-particle formalism that remains reliable in strongly correlated regimes remains a major challenge. 
\\[2mm]
In this paper we shed new light on this challenge by studying the exact map 
$\rho\mapsto\rho_2$ from single-particle density to {\it pair density} whose existence is assured by abstract DFT.
The exact interaction energy is obtained by integrating the pair density against the Coulomb repulsion potential (see \eqref{v_ee_rho2} below), so any approximation $\rho\mapsto\tilde{\rho}_2$ yields an approximate interaction energy functional. We take the view, first advocated by Gunnarsson and Lundqvist \cite{GunLun}, that the exact density-to-pair-density map $\rho\mapsto\rho_2$ is a better starting point to understand or design model interaction energy functionals than the commonly used density-to-interaction-energy map $\rho\mapsto V_{ee}[\rho]$. This is because the pair density, a function on two-body configuration space, encodes a wealth of physically and mathematially interesting information about a many-body quantum system which is ``averaged out'' in the interaction energy, a mere number. In particular, comparing 
exact and approximate pair densities does not just yield a total interaction energy error, but also reveals where in two-body configuration space the error is localized.
\\[2mm]
The main results in our paper are careful simulations of the exact density-to-pair-density map for typical one-dimensional model systems, with different electron numbers and different density profiles varying from ``concentrated'' to ``dilute''. The algorithm we develop for this purpose allows one to deal with the infinite-dimensional, nonlinear constraint of fixed single-particle density which appears in the definition of the map. To obtain a simple form of this constraint after discretization, we use a finite-element basis from computational mathematics instead of the usual basis sets of quantum chemistry. We observe remarkable multi-scale patterns in the pair density which strongly depend on both the particle number and the ``width'' of the density profile. See e.g. Figure \ref{fig-fermion-convex} in Section \ref{sec-fermion}. These patterns are not accurately captured (except in extreme regimes) by any of the currently used DFT models, but in our view constitute fundamental low-dimensional manifestations of exact DFT. We thus hope that our simulations, despite the limitation to one-dimensional model densities, offer exciting glimpses of possible future DFT models. 
\\[2mm]
In the remainder of this Introduction we informally discuss the definition of the exact density-to-pair density map, display two instructive extreme and opposite approximations in the DFT literature, and explain how our simulations seamlessly connect all three. The exact density-to-pair-density map is constructed as follows:
\begin{equation} \label{rho2_exact}
   \rho \mapsto \Psi \mapsto \rho_2
\end{equation}
where $\rho\mapsto\Psi$ is the map obtained by Levy-Lieb constrained search \cite{levy79,lieb83}, i.e. $\Psi$ is the $N$-electron wavefunction which minimizes kinetic plus potential energy, $T[\Psi]+V_{ee}[\Psi]$, subject to the constraint that $\Psi$ has single-particle density $\rho$ (see Section 2 for notation, function spaces, and further explanation), and the pair density associated to any $N$-electron wavefunction $\Psi$ is
\begin{equation}\label{rho2_def}
\rho_2^\Psi(x,y) = \binom{N}{2}\int_{(\mathbb{R}^3)^{N-2}}\sum_{\sigma_1,\cdots,\sigma_N\in\mathbb{Z}_2}
|\Psi(x,\sigma_1,y,\sigma_2,x_3,\sigma_3,\cdots,x_N,\sigma_N)|^2 dx_3\cdots dx_N.
\end{equation}
Here $(x_i,\sigma_i)\in\mathbb{R}^3\times\mathbb{Z}_2$
are space-spin coordinates for the $i^{th}$ electron. Minimizing $\Psi$'s always exist \cite{lieb83}, and the complication of possible non-uniqueness is discussed in Section \ref{sec-universalmap}. The electron-electron interaction energy is a simple explicit functional of the pair density,
\begin{equation}\label{v_ee_rho2}
V_{ee}[\Psi]=\int_{\mathbb{R}^6}\frac{\rho_2^{\Psi}(x,y)}{|x-y|}dx\, dy.
\end{equation}
Hence any approximate expression of the pair density in terms of the single-particle density gives, by substitution into \eqref{v_ee_rho2}, an approximate interaction energy functional. Numerous functionals have been formulated in this way \cite{GunLun, becke87, PerdewWang, perdew95, PerdewBurkeWang}.   
Standard DFT models start from a {\it statistical independence ansatz}
\begin{equation}\label{rho2_mean_intro}
\rho_2(x,y)=\frac{1}{2}\rho(x)\rho(y),
\end{equation}
and include all the many-body effects in a correcting exchange-correlation energy functional.
Opposite to this uncorrelated ansatz, there is the more recent {\it strictly correlated electrons} (SCE) model \cite{seidl99a,seidl99b,seidl07}, which is attracting attention in the mathematics literature \cite{cotar13a,friesecke13,colombo,Ghoussoub} due to its connection with optimal transportation theory and which arises from neglecting the kinetic energy in the constrained search in \eqref{rho2_exact}. The corresponding ansatz for the pair density is
\begin{equation}\label{rho2_sce_intro}
\rho_2(x,y)=\frac{1}{2N}\sum_{i\neq j}\int_{\mathbb{R}^3}\rho(z)\delta(x-T_i(z))\delta(y-T_j(z))dz
\end{equation}
with $T_i:\mathbb{R}^3\rightarrow\mathbb{R}^3,~i=1,\cdots,N$ being certain optimal transport maps, see Section \ref{sec:approx} for more details.

The formalisms \eqref{rho2_mean_intro} and \eqref{rho2_sce_intro} give ``extreme'' pair densities. True pair densities, unlike \eqref{rho2_mean_intro}, are expected to localize in certain regions due to shell structure or ionicity avoidance; but \eqref{rho2_sce_intro} emphasizes this localization too much and misses its quantum features.  
Most of the practically interesting models, such as the local density approximation (see Section \ref{sec:approx}), lie ``inbetween'' these two extreme distributions. But what kind of ``interpolation'' is the right one, and captures true pair densities \eqref{rho2_exact}?

A marvellous tool to approach this question is {\it density scaling}, introduced in the context of exact DFT by Levy and Perdew \cite{PerdewLevy}, which is closely related to the {\it adiabatic connection} utilized in many DFT studies (see e.g. \cite{GunLun,PerdewLevy,seidl99b,seidl07,becke14}). 
Alongside a given density $\rho \, : \, \R^d\to\R$, consider -- as we shall in our simulations -- its re-scalings
\begin{equation} \label{scale_intro}
           (D_\alpha\rho)(x) = \alpha^d \rho(\alpha x), \;\;\;\alpha\in(0,\infty).
\end{equation}
The parameter $\alpha$ seamlessely rescales a dilute system ($\alpha<<1$) into a concentrated one ($\alpha>>1$). But the associated pair densities do {\it not} just change by a rescaling,
that is to say $\rho_2[D_\alpha\rho]\neq D_\alpha\rho_2[\rho]$ (where $D_\alpha$ acts on pair densities as $(D_\alpha\rho_2)(x,y)=\alpha^{2d}\rho_2(\alpha x,\alpha y)$). 
Instead, it follows from the arguments in \cite{PerdewLevy} that the following diagram commutes:
\begin{equation} \label{pdscale_intro}
\begin{xy}
  \xymatrix{{\rho} \ar[r]^{{\rm scale}} 
            \ar[d]_{\min \, {\alpha} T + V_{ee}} & 
                        {D_{\alpha}\rho} \ar[d]^{\min \, T + V_{ee}} \\
            {\;\;D_{\alpha^{-1}}\rho_2[D_\alpha\rho]\;\;}  &
                {\;\;\;\rho_2[D_\alpha\rho]\;\;\;} \ar[l]_{\;\;\;\;{\rm scale}\;{\rm back}}
                } 
\end{xy}
\end{equation}
Here `min' means find the minimizing wavefunction under the constraint of the given one-body density and take the resulting pair density.
See Proposition \ref{propo-Falpha-1} below. Thus the scaling parameter $\alpha$ in 
\eqref{scale_intro} acts as a coupling constant in the one-parameter family of variational problems on the left which govern the pair density. This family ``adiabatically'', i.e. while keeping the density fixed, connects the problem of minimizing just $V_{ee}$ ($\alpha=0$) via $T+V_{ee}$ ($\alpha=1$) to minimizing just $T$ ($\alpha=\infty$).\footnote{Andreas Savin suggested to us the name {\it two-sided adiabatic connection} because it combines the classical connection to $T$ (in our parametrization, $1/\alpha\in [0,1]$) with the more recent one to $V_{ee}$ \cite{seidl07} ($\alpha\in[0,1]$).} Nontrivial but well known formal asymptotics for the minimizing wavefunction for $\alpha\to 0$ \cite{seidl99a} and $\alpha\to\infty$ (see e.g. \cite{becke14})
together with formula \eqref{rho2_def} then suggests the following: the true pair density is asymptotic to that of the SCE state, eq. \eqref{rho2_sce_intro}, as $\alpha\to 0$ \cite{seidl99a}, and to that of the Slater determinant of the Kohn-Sham orbitals as $\alpha\to\infty$. The latter reduces to \eqref{rho2_mean_intro} when $N=2$ or when the particles are bosons, and in addition contains `exact exchange' (see Section \ref{sec:approx}) for higher $N$. See Sections \ref{sec-hk}, \ref{sec:asy} for more details and rigorous proofs in special cases. 

Now back to our central question: which ``interpolation'' between the extreme pair densities \eqref{rho2_mean_intro} and \eqref{rho2_sce_intro} is right? Our numerical results for 
typical families \eqref{scale_intro} of one-dimensional densities with different particle numbers show that {\it many different interpolations are right}. See, e.g., Figure \ref{fig-fermion-convex}. The true pair densities form a two-parameter family which strongly depend on both the particle number $N$ and the scaling parameter $\alpha$. At fixed $N$ they steadily ``cross over'' from \eqref{rho2_mean_intro} (plus exact exchange when $N>2$) to \eqref{rho2_sce_intro}. The impractical, highly implicit definition \eqref{rho2_def} is able to pick out the right parameter values from the density, but {\it simple explicit formulae will not}.  
In the very special case of homogeneous systems in one dimension (see Figure \ref{fig-fermion-periodic}) we design an ansatz which does. The idea is to simultaneously smoothen out the pair densities from the strongly interacting limit and fading out the exchange terms from the weakly interacting limit. But the correct smoothing lengthscale and the correct fraction of exchange keep changing with $N$ and $\alpha$. See Table \ref{table-appro-fermion}. Our simulations to some extent support the celebrated idea \cite{becke93}\footnote{According to a recent article in {\it Nature} (29.10.2014), one of the Top Ten most highly cited scientific papers of all time, and the most highly cited one written after 1990.} underlying the functional B3LYP to mix in a fraction of exact exchange. But they show that the right fraction, taken to be $0.2$ in B3LYP \cite{becke93}, is in fact {\it not} constant. At present we have no proposal how the right fraction could be adaptively picked out in realistic (inhomogeneous, 3D) simulations. 


The remainder of this paper is arranged as follows.
In the next two sections, we recall basic aspects of DFT and give the precise definition of the universal density-to-pair-density map. We then show in Section \ref{sec:approx} how some common DFT functionals arise from approximations of this map.
Section \ref{sec-numerical} describes our numerical simulations of the true pair densities
of homogeneous and inhomogeneous one dimensional systems for both bosons and fermions. In Section \ref{sec:asy} we present rigorous asymptotic results which confirm the numerical findings. In Section \ref{sec-ansatz}, we propose an ansatz for approximating pair densities of one dimensional homogeneous electron systems.
Finally, in Section \ref{sec-conclusion} we give conclusions and some future perspectives.

\section{Density functional theory}\label{sec-pairdensity}
\setcounter{equation}{0}\setcounter{figure}{0}

Here we recall the basic functionals of DFT which will be needed in the following. A standard reference is \cite{PY89}. Readers familiar with DFT might want to skip this section. We consider a general system of $N$ nonrelativistic electrons in $\R^d$ under the influence of an external potential $v_{ext} \, : \, \R^d\to\R$ and a repulsive pair potential $v_{ee}\, : \, \R^d\to\R$. Prototypically, for real physical systems, 
$$
   d=3, \;\;\; v_{ee}(x-y)=\frac{1}{|x-y|} \;\;(x,y\in\R^d),
$$
and $v_{ext}$ is the electrostatic potential generated by $M$ nuclei of charges $Z_1,..,Z_M>0$ located at positions $R_1,..,R_M\in\R^3$,
\begin{eqnarray*}
v_{ext}(x)=-\sum_{I=1}^{M}\frac{Z_I}{|x-R_I|}\quad (x\in\mathbb{R}^3).
\end{eqnarray*}
The quantum mechanical ground state energy of the system is given by 
\begin{equation}\label{GSE}
    E_0 = \inf_{\Psi\in\AN} \Bigl( T[\Psi] + V_{ee}[\Psi] + V[\Psi]\Bigr),
\end{equation}
where $\AN$ is the following class of admissible wavefunctions
\begin{equation} \label{AN}
\AN = \left\{ \Psi
\in L^2((\mathbb{R}^d\times\mathbb{Z}_2)^N),
~\nabla\Psi\in L^2, ~\Psi~{\rm antisymmetric},~\|\Psi\|_{L^2}=1 \right\},
\end{equation}
$\mathbb{Z}_2=\{\uparrow,\downarrow\}$, 
and $T$, $V_{ee}$, $V$ are the following functionals: 
\begin{equation}\label{eq-T}
T[\Psi] = \frac{1}{2}\int_{(\mathbb{R}^d)^N}\sum_{\sigma_1,\cdots,\sigma_N\in\mathbb{Z}_2}
\sum_{i=1}^N|\nabla_{x_i}\Psi(x_1,\sigma_1,\cdots,x_N,\sigma_N)|^2
dx_1\cdots dx_N
\end{equation}
(kinetic energy),
\begin{equation}\label{eq-ee}
V_{ee}[\Psi] = \int_{(\mathbb{R}^d)^N}\sum_{\sigma_1,\cdots,\sigma_N\in\mathbb{Z}_2}
\sum_{1\leq i<j\leq N} v_{ee}(x_i-x_j)|\Psi(x_1,\sigma_1,\cdots,x_N,\sigma_N)|^2 dx_1\cdots dx_N
\end{equation}
(electron-electron interaction energy), and
\begin{equation}
V[\Psi] = \int_{(\mathbb{R}^d)^N}\sum_{\sigma_1,\cdots,\sigma_N\in\mathbb{Z}_2}
     \sum_{i=1}^N v_{ext}(x_i)|\Psi(x_1,\sigma_1,\cdots,x_N,\sigma_N)|^2 dx_1\cdots dx_N
\end{equation}
(external potential energy). 
\\[2mm]
A central result of DFT going back to Hohenberg and Kohn is the following. We state the result here in the form discovered by M.Levy \cite{levy79} and made rigorous by \cite{lieb83}. The quantum mechanical ground state energy \eqref{GSE} can be recovered exactly by minimizing a certain density functional,
\begin{equation} \label{HKthm}
   E_0 = \inf_{\rho\in\RN} \Bigl( F_{HK}[\rho] + \int_{\R^d} v_{ext}\rho\Bigr),
\end{equation}
where
\begin{equation}\label{FHK}
F_{\rm HK}[\rho] = \min_{\Psi\in\AN,\Psi\mapsto\rho} \left\{ T[\Psi] + V_{ee}[\Psi] \right\}.
\end{equation}
Here $\Psi\mapsto\rho$ means that $\Psi$ has single-particle density $\rho$, i.e.
\begin{equation}\label{single-density}
\rho(x) = N\int_{(\mathbb{R}^3)^{N-1}}\sum_{\sigma_1,\cdots,\sigma_N\in\mathbb{Z}_2}
|\Psi(x,\sigma_1,x_2,\sigma_2,\cdots,x_N,\sigma_N)|^2 dx_2\cdots dx_N ,
\end{equation}
and $\RN$ is the space of densities arising via \eqref{single-density} from wavefunctions $\Psi\in\AN$.
Note that the space $\RN$ of densities is known explicitly: by a result of Lieb \cite{lieb83},
\begin{equation}\label{RN}
\RN=\left\{\rho\, : \, \R^d\to\R\, | \, \rho\geq 0, \; \sqrt{\rho}\in H^1(\R^d),\, \int_{\R^d}\rho=N \right\}.
\end{equation}
We also note that $F_{\rm HK}$ is a universal functional of $\rho$, in the sense that it does not depend on the external potential $v_{ext}$. 
Minimizers in \eqref{FHK} always exist provided $\rho\in\mathcal{R}_N$ \cite{lieb83}.

The complexity of the DFT model \eqref{HKthm} lies in that no tractable
expression for $F_{\rm HK}$ is known that could be used in numerical simulations.
In practice, $F_{\rm HK}[\rho]$ is approximated by the sum of a kinetic part and an interaction part,
\begin{equation}\label{add}
F_{\rm HK}[\rho] \approx \tilde{T}[\rho]+\tilde{V}_{ee}[\rho],
\end{equation}
leading to an approximate expression for the ground state energy,
\begin{equation}\label{GSEapprox}
    E_0 \approx \tilde{E}_0 = \inf_{\rho\in\RN} \Bigl( \tilde{T}[\rho] + \tilde{V}_{ee}[\rho] + \int_{\R^d} v_{ext}\rho\Bigr).
\end{equation}
Many clever and useful approximate functionals $\tilde{T}$ and $\tilde{V}_{ee}$ have  been proposed and utilized to simulate a wide range of systems (see e.g. \cite{PY89,becke14}). Particularly fruitful has been the idea of Kohn and Sham \cite{kohn65} to construct a kinetic energy functional $\tilde{T}$ with the help of single-particle orbitals of a non-interacting reference system:
\begin{multline}\label{T-KS}
\tilde{T}[\rho] = T_{\rm KS}[\rho] = \min\left\{ \frac{1}{2}\sum_{i=1}^{N}\int|\nabla\phi_i|^2,
~ \phi_i\in H^1(\mathbb{R}^d\times\mathbb{Z}_2), \right. \\
\left. ~\int\overline{\phi_i}\phi_j=\delta_{ij},~
\sum_{i=1}^{N}\sum_{\sigma\in\mathbb{Z}_2}|\phi_i(x,\sigma)|^2=\rho(x) \right\},
\quad\quad
\end{multline}
where for any function $f=f(x,\sigma)$ of $(x,\sigma)\in\R^d\times\Z_2$ we use the abbreviation
$\int f = \sum_{\sigma\in\mathbb{Z}_2}\int_{\mathbb{R}^d}f(x,\sigma)dx$. It is easy to see that $\tilde{T}$ is the same as the functional obtained by omitting $V_{ee}$ in \eqref{FHK} and restricting the minimization to Slater determinants
\begin{equation}\label{slater}
  \Psi(x_1,\sigma_1,..,x_N,\sigma_N) =  \frac{1}{\sqrt{N!}}
  \left|\begin{array}{cccc}
\varphi_{1}(x_1,\sigma_1) & \cdots & \varphi_{N}(x_1,\sigma_1) \\[1ex]
\vdots & \ddots & \vdots \\[1ex]
\varphi_{1}(x_N,\sigma_N) & \cdots & \varphi_{N}(x_N,\sigma_N)
\end{array}\right| .
\end{equation}
Minimizers in \eqref{T-KS} always exist provided $\rho\in\mathcal{R}_N$ \cite{lieb83}.
\\[2mm]
Using \eqref{T-KS} as the kinetic energy in \eqref{add}, as is done in almost all simulations to date, and the orbitals $\Phi=(\phi_1,\cdots,\phi_N)$
as the basic variable, the ground state energy of the system becomes
\begin{equation}\label{min-ks}
E_0 \approx \tilde{E}_0 = \inf \left\{ \frac{1}{2}\sum_{i=1}^{N}\int|\nabla\phi_i|^2
+ \int_{\mathbb{R}^d}v_{ext}\rho_{\Phi} + \tilde{V}_{ee}[\rho_{\Phi}], ~~
\phi_i\in H^1(\mathbb{R}^3),~\int \overline{\phi_i}\phi_j=\delta_{ij} \right\}
\end{equation}
with the single-particle density 
$$
  \rho_{\Phi}(x) = \sum_{s\in\Z_2}\sum_{i=1}^{N}|\phi_i(x,s)|^2.
$$
The remaining problem, and the one of interest to us, is to design accurate approximations for $\tilde{V}_{ee}[\rho_{\Phi}]$.

\section{Universal density-to-pair-density map}\label{sec-universalmap}
\setcounter{equation}{0}\setcounter{figure}{0}
Our starting point for looking at interaction energy functionals will be the universal, exact density to pair density map delivered by abstract DFT. 
Following Levy \cite{levy79} this map is defined as follows. Recall from \eqref{rho2_def} that $\rho_2^\Psi$ denotes the pair density of the wavefunction $\Psi$. 
\\[2mm]
{\bf Definition} (Universal density to pair density map) {\it For any one-body density $\rho$ of an $N$-electron system, that is to say for any $\rho$ belonging to the class $\RN$ in \eqref{RN}, } 
\begin{equation} \label{UPD}
  \rho_2[\rho] = \{ \rho_2^\Psi \, | \, \Psi\in\AN \mbox{ is a minimizer of }T+V_{ee} \mbox{ subject to }\Psi\mapsto\rho\}. 
\end{equation}
Just like the map $\rho\mapsto F_{HK}[\rho]$, the map $\rho\mapsto\rho_2[\rho]$ is universal, i.e. independent of the external potential. The above definition requires, and it was proved mathematially by Lieb \cite{lieb83}, that a minimizing $\Psi$ exists. Note however that the minimizer may not be unique. Hence the map is possibly multi-valued, that is to say $\rho_2[\rho]$ is possibly a {\it set} of pair densities rather than a single pair density. Simple explicit examples of nonuniqueness in the case when $T+V_{ee}$ is replaced by $T$ are given in Section \ref{sec:asy}.
\\[2mm]
The physical significance of $\rho_2[\rho]$ comes from the following direct consequence of formulae \eqref{GSE}, \eqref{HKthm}: if $\Psi$ is any exact quantum mechanical ground state, i.e. a minimizer of the right hand side of \eqref{GSE} for some external potential $v_{ext}$, and $\Psi$ has one-body density $\rho$, then $\rho_2[\rho]$ is the exact pair density of $\Psi$, and the functional 
\begin{equation}\label{Veeex}
   \overline{V}_{ee}[\rho] := \int_{\R^d} v_{ee}(x-y)\rho_2[\rho](x,y) \, dx \, dy
\end{equation}
agrees with the exact interaction energy $V_{ee}[\Psi]$ from \eqref{eq-ee}. 
\section{Approximate density-to-pair-density maps} \label{sec:approx}
It is obvious that substituting any approximation $\tilde{\rho}_2[\rho]$ of the density to pair density map $\rho_2[\rho]$ into \eqref{Veeex} yields an approximate interaction energy functional $\tilde{V}_{ee}[\rho]$. Conversely, we now show that many basic approximate functionals used in practice can be derived in this way. 
In some cases, such as the bare Hartree functional or `exact exchange' (Examples 1 and 3), this is trivial. For the LDA in its most basic form (Dirac exchange with no correlation contribution, Example 2) it is not, and we are not aware that an {\it exact} equivalence to a pair density model for any inhomogeneous density as given below has been stated previously, even though good approximate pair density formulations are well known \cite{GunLun}. For interesting work relating advanced DFT functionals to pair density approximations we refer to \cite{becke87, PerdewWang, PerdewBurkeWang}. 
\\[2mm]
{\bf Example 1.} (statistical independence)
The simplest idea is to assume statistical independence,
\begin{eqnarray}\label{mean-field}
\tilde{\rho}_2[\rho](x,y) = \frac{1}{2}\rho(x)\rho(y).
\end{eqnarray}
Substituting this density to pair density map into the right-hand side of \eqref{Veeex} leads to
the Hartree functional
\begin{equation} \label{hartree}
\tilde{V}_{ee}[\rho] = \frac{1}{2}\int_{\mathbb{R}^6}\frac{\rho(x)\rho(y)}{|x-y|} dx\, dy.
\end{equation}
While never used on its own, together with some correcting exchange-correlation functional $E_{xc}[\rho]$ it is contained in virtually all DFT models, including state of the art ones like B3LYP \cite{lee88, becke93} or PBE \cite{PBE}. 
\\[2mm]
{\bf Example 2.} (Local density approximation with Dirac exchange) 
For the free (i.e., noninteracting) electron gas, the pair density can be determined explicitly (see e.g. \cite{PY89} and, for a mathematical account, \cite{friesecke97}).
In this case the single-particle density is a constant, $\rho(x)\equiv\bar{\rho}$, and the
pair density is
\begin{equation}\label{rho2-free}
  \rho_2(x,y) = \frac{1}{2}\bar{\rho}^2 -\frac{1}{4}\bar{\rho}^2 h^2((3\pi^2\bar{\rho})^{1/3}|x-y|),
\end{equation}
where $h(s) = 3(\sin s-s\cos s)/s^3$. We claim that the inhomogeneous version 
\begin{equation}\label{rho2-nonunif}
  \tilde{\rho}_2[\rho](x,y) = \frac12 \rho(x)\rho(y) - \frac18 \rho(x)^2 h^2((3\pi^2\rho(x))^{1/3}|x-y|)
     - \frac18 \rho(y)^2 h^2((3\pi^2\rho(y))^{1/3}|x-y|)
\end{equation}
yields the interaction energy
\begin{equation}\label{Vee-Xalpha}
\tilde{V}_{ee}[\rho] = \frac{1}{2}\int_{\mathbb{R}^6}\frac{\rho(x)\rho(y)}{|x-y|}dx\, dy
-c_x\int_{\mathbb{R}^3}\rho(x)^{4/3}dx
\end{equation}
with constant $c_x=\frac{3}{4}(\frac{3}{\pi})^{1/3}$. This can be seen as follows. For each of the non-mean-field terms, just integrate out the variable not contained in the argument of $h$, e.g., using spherical polar coordinates for $y$ centered at $x$ and abbreviating $k_F(x)=(3\pi^2\rho(x))^{1/3}$,
$$
     \int_{\mathbb{R}^3}h^2(k_F(x)|x-y|) \, dy = 4\pi \int_0^\infty h^2(k_F(x)r)r^2 dr = \frac{4\pi}{(3\pi^2\rho(x))^{2/3}} \int_0^\infty h^2(r'){r'}^2 dr', 
$$
and determine the remaining one-dimensional integal as in the discussion of the homogeneous case in \cite{PY89, friesecke97}.
Eq. \eqref{Vee-Xalpha}
is the simplest of the local density approximations (LDA) \cite{kohn65,PY89,martin05}. The second term of \eqref{Vee-Xalpha} is the celebrated Dirac exchange functional \cite{dirac30}. We remark that from Dirac's original derivation it is not clear how to relate this functional to the pair density as he used a semiclassical limit argument for the (one-body) energy density per unit volume. 

Strange as the model \eqref{rho2-nonunif} for the pair density may look, it provides a precise way to state what the LDA really does: the pair density is assumed to be independent at long range (note that $h(r)$ goes to zero as $r$ gets large), while at short range it contains an ``exchange hole'' \footnote{see e.g. \cite{martin05} for more information about this semi-empirical notion} of fixed shape coming from free electron gas theory whose diameter is of order $\rho(x)^{-1/3}$. 
\\[3mm]
{\bf Example 3.} (exact exchange)
To obtain ``exact'', i.e. Hartree-Fock-like, exchange \cite{becke93}, one takes
\begin{equation} \label{exex}
  \tilde{\rho}_2[\rho](x,y) = \rho_2^{\Psi}(x,y),
\end{equation}
where $\Psi$ is the Slater determinant \eqref{slater} composed of the ($\rho$-dependent) minimizing orbitals  $\varphi_1,...,\varphi_N$ in the definition of the Kohn-Sham kinetic energy functional \eqref{T-KS}. A more explicit expression for $\tilde{\rho}_2$ is obtained by using the well known expression for the pair density of a Slater determinant (see e.g. \cite{helgaker00}):
\begin{equation} \label{exex'}
   \tilde{\rho}_2[\rho](x,y) = \frac12 \rho(x)\rho(y)- \frac12 \tau(x,y) 
   \quad \mbox{with} \quad
   \tau(x,y)=\sum_{\sigma,\sigma'}\left|\sum_{i=1}^N
   \phi_i(x,\sigma)\overline{\phi_i(y,\sigma')}\right|^2.
\end{equation}
Thus, just as in Example 2 the pair density naturally decomposes into a statistically independent term plus an exchange hole, but here the shape of the hole is no longer fixed but adapts itself to the density at hand. Expression \eqref{exex'} 
results in the interaction energy 
\begin{equation}
   \tilde{V}_{ee}[\rho] = \frac12 \int_{\R^6} \frac{\rho(x)\rho(y) - \tau(x,y)}{|x-y|} dx\, dy.
\end{equation}
The correction to \eqref{hartree} is known as exact exchange. 
Note that the resulting ground state energy \eqref{GSEapprox} is not quite the Hartree-Fock energy. This is because the orbitals are only determined via minimization of kinetic energy, rather than self-consistently accounting also for exchange. 
However, if one treats the orbitals $\Phi=(\phi_1,\cdots,\phi_N)$ as the basic variable, views the right hand side of \eqref{exex'} as an orbitals-to-pair-density map $\rho_2[\Phi](x,y)$, and substitutes into \eqref{Veeex} and \eqref{min-ks} one obtains precisely the Hartree-Fock energy. 
\\[3mm]
{\bf Example 4.} (Hybrid models) 
If we take some convex combination of \eqref{rho2-nonunif} and \eqref{exex'}, the interaction energy begins to resemble, up to certain further corrections, state of the art hybrid functionals such as B3LYP \cite{becke93,lee88,stephens94}, which are widely used in contemporary computations.
\\[3mm]
{\bf Example 5.} (strictly correlated electrons)
A more recent construction is the SCE (strictly correlated electrons) functional \cite{seidl99a,seidl99b,seidl07} 
\begin{equation} \label{VeeSCEttilde}
    \tilde{V}_{ee}[\rho] = {V}_{ee}^{SCE}[\rho] = \inf_{{T}_1,..,{T}_N}\int_{\R^3}\frac{\rho(z)}{N}
    \sum_{1\le i<j\le N} \frac{1}{|{T}_i(z)-{T}_j(z)|}dz,
\end{equation}
with the infimum taken over maps $T_1,...,T_N$ from $\R^3$ to $\R^3$ which satisfy $T_1(x)=x$ and which preserve $\rho$, that is to say
$$
 \int_A \rho = \int_{T_i(A)}\rho \mbox{ for all measurable sets }A\subset\R^3.
$$
This corresponds to the following density-to-pair-density map which we call $\rho_{2}^{SCE}[\rho]$: 
\begin{equation}\label{rho2-sce}
\tilde{\rho}_2[\rho](x,y) = \rho_2^{SCE}[\rho](x,y) = \frac{1}{2N}\sum_{i\neq j}\int_{\mathbb{R}^3}\rho(z)\delta(x-T_i(z))\delta(y-T_j(z))dz
\end{equation}
with the $T_i$ being minimizing maps. 
The physical meaning of the $T_i$ is that the position of one electron (at $x=T_1(x)$) fixes the positions of all the other $N-1$ electrons (at $T_i(x)$ with $2\leq i\leq N$).
Mathematically, the variational problem in \eqref{VeeSCEttilde} is a multi-marginal optimal transport problem. Minimizers are known to exist when $N=2$ \cite{cotar13a,buttazzo12} and $d=1$ \cite{colombo}. It is believed (and has been proved mathematically for $N=2$ \cite{cotar13a}) that ${V}_{ee}^{SCE}$ agrees with the lowest expectation of Coulomb repulsion energy with a given single-particle density $\rho$,
\begin{equation}\label{VeeSCE}
  \bar{V}_{ee}^{SCE}[\rho] = \inf_{\Psi\in\AN, \, \Psi\mapsto\rho}V_{ee}[\Psi].
\end{equation}
To derive \eqref{VeeSCEttilde} from \eqref{VeeSCE}, one notes that the infimum in \eqref{VeeSCE} is not attained in any reasonable wavefunction class such as \eqref{AN} or $\{\Psi\in L^2((\R^3\times\Z_2)^N)\, : \, \Psi \mbox{ antisymmetric, }||\Psi||_{L^2}=1\}$.
Therefore, one needs to augment the admissible $N$-body densities $\rho_N=\sum_{s_1,..,s_N\in\Z_2}|\Psi|^2$ in \eqref{VeeSCE} from integrable functions to probability measures, i.e. considers
\begin{equation}\label{Kant}
  \min_{\rho_N\mapsto\rho} \int_{\R^{3N}}\sum_{i<j}\frac{1}{|x_i-x_j|}d\rho_N,
\end{equation}
and makes the ansatz \cite{seidl99a,seidl99b}
\begin{equation} \label{Monge}
   \rho_N(x_1,\cdots,x_N) = \frac{1}{N!}\sum_{\mathcal{P}}\int_{\mathbb{R}^3}
   \frac{\rho(z)}{N}\prod_{i=1}^N\delta(x_i-T_{\mathcal{P}(i)}(z)) dz,
\end{equation}
where the sum runs over all permutations $\mathcal{P}$ of $\{1,..,N\}$.
Note that \eqref{rho2-sce} is obtained by integrating out all but two electron coordinates from this $\rho_N$.
The ansatz \eqref{Monge}, which reduces the high-dimensional problem \eqref{Kant} to a computationally feasible one, was later understood \cite{cotar13a,buttazzo12} as an instance of the mathematical belief that ``Kantorovich equals Monge'', i.e. that optimal Kantorovich transportation plans in are induced by Monge maps for well behaved marginal densities $\rho$ (see \cite{brenier,GMC,gangbo98} for pioneering results and \cite{villani09} for a comprehensive survey).   

Examples 1 to 4 are based on a non-interacting picture and treat many-body effects as corrections. Despite their great successes, these models exhibit known failures for strongly interacting systems \cite{cohen12}.
By comparison, Example 5 takes the strongly interacting limit and has been proved to be good at
simulating some strongly correlated model systems
(e.g. \cite{malet12,mendl13}), but severely underestimates the true ground state energy in standard regimes (see e.g. the dissociation curve of the hydrogen dimer calculated in \cite{ChenEtAl}). It is therefore of great interest to enquire as to the structure and behaviour of the true pair densities $\rho_2[\rho]$. 
\section{Density scaling, adiabatic connection, formal asymptotics}\label{sec-hk}
\setcounter{equation}{0}\setcounter{figure}{0}
In order to naturally access pair densities in different correlation regimes without changing the ``shape'' of the one-body density, we will from now on look at one-parameter families of one-body densities obtained by rescaling a fixed reference density $\rho \, : \, \R^d\to\R$ (see eq. \eqref{scale_intro}). The associated pair densities do {\it not} just change by a rescaling (see the Introduction). This reflects the physical phenomenon that electron correlation in dilute systems ($\alpha<<1$) is completely different from electron correlation in high-density systems ($\alpha>>1$). The governing variational principle for the resulting constrained-search wavefunction in \eqref{UPD} was found by Levy and Perdew \cite{PerdewLevy}. As a straightforward corollary of their analysis we obtain the behaviour of the density-to-pair-density map under density scaling:
\begin{proposition} (Density scaling) \label{propo-Falpha-1} 
Let $\alpha>0$ and let $\rho$ be any single-particle density on $\R^d$, i.e. any function belonging to the class $\RN$. Then the diagram \eqref{pdscale_intro} commutes. In other words, if $\rho_{2,\alpha}[\rho]$ denotes the density-to-pair-density map along the adiabatic connection (left arrow in the diagram), that is to say
\begin{equation} \label{SPD}
   \rho_{2,\alpha}[\rho] := \{\rho_2^{\Psi}\, | \, \Psi \mbox{ is a minimizer of }\alpha T+V_{ee} \mbox{ on } \AN \mbox{ s/to }\Psi\mapsto\rho\},
   \footnote{``s/to" means ``subject to" throughout this paper.}
\end{equation}
and $\rho_2[\rho]$ is the original map \eqref{UPD}, then 
\begin{equation}\label{diagram}
   D_{\alpha^{-1}}\rho_2[D_\alpha\rho] = \rho_{2,\alpha}[\rho].
\end{equation}
\end{proposition}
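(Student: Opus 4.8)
The plan is to reduce the whole statement to the elementary scaling behaviour of $N$-electron wavefunctions. For $\Psi\in\AN$ and $\alpha>0$, introduce the dilated wavefunction $\Psi^\alpha$ given by $\Psi^\alpha(x_1,\sigma_1,\dots,x_N,\sigma_N)=\alpha^{dN/2}\Psi(\alpha x_1,\sigma_1,\dots,\alpha x_N,\sigma_N)$ (the spins are untouched). First I would record the following identities, each obtained by the substitution $y_i=\alpha x_i$ together with the inert spin summation: (i) $\Psi^\alpha\in\AN$, $\|\Psi^\alpha\|_{L^2}=\|\Psi\|_{L^2}$, and $(\Psi^\alpha)^{1/\alpha}=\Psi$, so $\Psi\mapsto\Psi^\alpha$ is a bijection of $\AN$ onto itself; (ii) $T[\Psi^\alpha]=\alpha^2\,T[\Psi]$; (iii) $V_{ee}[\Psi^\alpha]=\alpha\,V_{ee}[\Psi]$ for the Coulomb interaction $v_{ee}(x)=1/|x|$, the only place where homogeneity of degree $-1$ is used; (iv) if $\Psi\mapsto\rho$ then $\Psi^\alpha\mapsto D_\alpha\rho$; and (v) $\rho_2^{\Psi^\alpha}=D_\alpha\rho_2^{\Psi}$. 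Identity (iv), combined with Lieb's explicit description \eqref{RN}, also shows that $D_\alpha$ maps $\RN$ into $\RN$ (positivity, the $H^1$ regularity of $\sqrt{\rho}$, and the normalisation $\int\rho=N$ are all preserved), so $D_\alpha\rho$ is a legitimate input to \eqref{UPD}.

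Next I would assemble these into a correspondence between the two constrained-search problems entering \eqref{diagram}. Fix $\rho\in\RN$. By (i) and (iv), $\Psi\mapsto\Psi^\alpha$ restricts to a bijection from $\{\Psi\in\AN:\Psi\mapsto\rho\}$ onto $\{\Phi\in\AN:\Phi\mapsto D_\alpha\rho\}$. By (ii)--(iii), $(T+V_{ee})[\Psi^\alpha]=\alpha^2 T[\Psi]+\alpha V_{ee}[\Psi]=\alpha\bigl(\alpha T[\Psi]+V_{ee}[\Psi]\bigr)$. Since $\alpha>0$ is a fixed constant, scaling the objective by $\alpha$ leaves the set of minimizers unchanged, so $\Psi$ minimizes $\alpha T+V_{ee}$ on $\{\Psi\mapsto\rho\}$ if and only if $\Phi=\Psi^\alpha$ minimizes $T+V_{ee}$ on $\{\Phi\mapsto D_\alpha\rho\}$. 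Existence of minimizers on both sides follows from Lieb's theorem \cite{lieb83}, so neither set is empty and the correspondence is between genuinely nonempty sets of optimizers.

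Finally I would push this bijection of minimizers through the pair-density map. If $\Phi=\Psi^\alpha$ is a minimizer of $T+V_{ee}$ subject to $\Phi\mapsto D_\alpha\rho$, then by (v) its pair density is $\rho_2^{\Phi}=D_\alpha\rho_2^{\Psi}$, and applying $D_{\alpha^{-1}}$ (which satisfies $D_{\alpha^{-1}}D_\alpha=\mathrm{id}$ on pair densities, once more by change of variables) gives $D_{\alpha^{-1}}\rho_2^{\Phi}=\rho_2^{\Psi}$. Letting $\Phi$ range over all minimizers for $D_\alpha\rho$ — equivalently, letting $\Psi$ range over all minimizers of $\alpha T+V_{ee}$ for $\rho$ — yields
\[
   D_{\alpha^{-1}}\rho_2[D_\alpha\rho]=\{\,D_{\alpha^{-1}}\rho_2^{\Phi}\,\}=\{\,\rho_2^{\Psi}\,\}=\rho_{2,\alpha}[\rho],
\]
which is exactly \eqref{diagram}; commutativity of \eqref{pdscale_intro} is a restatement. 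I do not expect a real obstacle here: the entire content sits in the scaling identities (i)--(v), and the only points needing care are the bookkeeping of the powers of $\alpha$ in (ii), (iv), (v) and the observation that the $\min$ in the diagram denotes a possibly multi-valued selection, so the claim must be read — and is proved — as an equality of \emph{sets}. The one hypothesis to flag explicitly is the degree $-1$ homogeneity of $v_{ee}$ in (iii); for a general pair potential the right arrow of \eqref{pdscale_intro} would additionally rescale $v_{ee}$.
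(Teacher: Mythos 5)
Your proposal is correct and follows essentially the same route as the paper: rescale $\Psi\mapsto\Psi_\alpha$, use $T[\Psi_\alpha]=\alpha^2T[\Psi]$ and $V_{ee}[\Psi_\alpha]=\alpha V_{ee}[\Psi]$ to identify the two constrained minimizer sets, and then transfer the identification to the pair densities. Your extra bookkeeping (the explicit pair-density scaling identity, the set-valued reading, and the remark that degree $-1$ homogeneity of $v_{ee}$ is the one essential hypothesis) is consistent with, and slightly more detailed than, the paper's argument.
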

\begin{proof} For convenience of the reader we include the simple proof. For any $\alpha>0$ and any $\Psi\in\AN,\,\Psi\mapsto\rho$, we can rescale $\Psi$ by
$$
\Psi_{\alpha}(x_1,\cdots,x_N)=\alpha^{dN/2}\Psi(\alpha x_1,\cdots,\alpha x_N).
$$
We have that $\Psi_{\alpha}$ belongs to $\AN$ and has one-body density $D_\alpha\rho$. 
Moreover 
\begin{eqnarray*}
T[\Psi_{\alpha}]=\alpha^2T[\Psi]
\quad\quad{\rm and}\quad\quad
V_{ee}[\Psi_{\alpha}]=\alpha V_{ee}[\Psi].
\end{eqnarray*}
It follows that $\Psi_\alpha$ is a minimizer of $T+V_{ee}$ subject to $\Psi_\alpha\mapsto D_\alpha\rho$ if and only if $\Psi$ is a minimizer of $\alpha T + V_{ee}$ subject to $\Psi\mapsto\rho$. By definition, the pair densities of the minimizing $\Psi_\alpha$'s yield the set $\rho_2[D_\alpha\rho]$, whereas the pair densities of the associated $\Psi$'s give the set $\rho_{2,\alpha}[\rho]$. 
\end{proof}
From now on, instead of considering the scaled densities \eqref{scale_intro} and applying the original density-to-pair-density map, it is more convenient for us to fix a reference single-particle density and vary the coupling constant $\alpha$ in the constrained-search problem in \eqref{SPD} (i.e., in the adiabatic connection) to investigate systems in different correlation regimes. 

We note that definition \eqref{SPD} stays unchanged under multiplying $\alpha T + V_{ee}$ by a positive constant, so one might as well use $T+\alpha^{-1}V_{ee}$. In particular, one has a well-defined density-to-pair-density map at $\alpha=\infty$:
$$
   \rho_{2,\infty}[\rho] = \{\rho_2^\Psi\, : \, \Psi\in\AN, \, \Psi \mbox{ is a minimizer of }T\mbox{ on }\AN \mbox{ s/to }\Psi\mapsto\rho\}. 
$$
It is considered well-established in the physics literature (see e.g. \cite{Seidl2000}) that the minimizing wavefunction in \eqref{SPD} has the following asymptotic behaviour: 
\begin{equation} \label{asylarge}
   \Psi \approx \mbox{Slater determinant of the KS orbitals 
    from Ex. 3, Section \ref{sec:approx}} \;\;\; (\alpha >> 1)
\end{equation}
and 
\begin{equation} \label{asysmall}
   \sum_{s_1,..,s_N\in\Z_2}|\Psi|^2 \approx \mbox{$N$-point density of the SCE state, eq. \eqref{Monge}} \;\;\; (\alpha << 1).
\end{equation}
Taking pair densities leads to 
\begin{equation} \label{pdasy}
   \rho_{2,\alpha} \approx \begin{cases} \mbox{\eqref{exex'}}, & \alpha>>1, \\
                                         \mbox{\eqref{rho2-sce}}, & \alpha << 1. \end{cases}
\end{equation}
Complete mathematical proofs are not available for general $\rho$.
It is not clear in which sense to measure convergence, nor what happens if the ground state is degenerate.
In fact, even much more basic things such as existence of optimal maps or continuity of the HK functional have not been proved.  
The rigorous analysis of 1D examples in Section \ref{sec:asy} shows that things are not quite as simple as one might intuitively expect. For instance, in case of orbital degeneracies the assertion \eqref{asylarge} can be true for {\it some} choices of minimizing KS orbitals but not for others.   

At least for $N=2$ or in the case of bosons we can offer a general result. For bosons, the set $\AN$ of antisymmetric wavefunctions has to be replaced by 
\begin{equation}\label{BN}
   {\cal B}_N = \{\Psi\in L^2(\R^{d\cdot N}), \, \nabla\Psi\in L^2, \, \Psi \mbox{ symmetric}, \, ||\Psi||_{L^2(\R^{d\cdot N})}=1\}.    
\end{equation}
\begin{proposition}\label{propo-Falpha-3}
Let $\rho$ be any single-particle density of an $N$-particle system, i.e. $\rho\in\RN$.
If $N=2$, or if the particles are bosons, then the independent pair density
\begin{equation}\label{pdbos}
    \frac12 \Bigl(1-\frac1{N}\Bigr)\rho(x)\rho(y)
\end{equation}
belongs to the set $\rho_{2,\infty}[\rho]$.
\end{proposition}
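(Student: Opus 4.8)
The plan is to exhibit, in each case, an \emph{explicit} minimizer of the kinetic energy $T$ subject to the density constraint whose pair density equals \eqref{pdbos}; by definition this puts \eqref{pdbos} into $\rho_{2,\infty}[\rho]$.

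The one ingredient I would invoke is the density lower bound for the kinetic energy (Hoffmann--Ostenhof inequality): for every admissible $\Psi$ (in $\AN$, resp.\ in ${\cal B}_N$) with single-particle density $\rho$ one has $T[\Psi]\ge \frac12\int_{\R^d}|\nabla\sqrt{\rho}|^2$, the right-hand side being finite since $\sqrt\rho\in H^1(\R^d)$ by \eqref{RN}. (If desired this is a two-line computation: express $\nabla\rho$ through $\Psi$ via \eqref{single-density}, apply Cauchy--Schwarz in the $N-1$ spectator space variables and the spins, and use the symmetry of $|\Psi|^2$.) Consequently, any admissible $\Psi$ with $T[\Psi]=\frac12\int|\nabla\sqrt\rho|^2$ is a minimizer of $T$.

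Now set $\phi:=\sqrt{\rho/N}\in H^1(\R^d)$. For bosons take $\Psi(x_1,\dots,x_N):=\phi(x_1)\cdots\phi(x_N)$; for $N=2$ fermions take $\Psi(x_1,\sigma_1,x_2,\sigma_2):=\phi(x_1)\phi(x_2)\,\chi(\sigma_1,\sigma_2)$ with $\chi$ the normalized antisymmetric spin singlet. In both cases one checks routinely that $\Psi$ lies in the correct space, that $\Psi\mapsto\rho$, and that $T[\Psi]=\tfrac{N}{2}\int|\nabla\phi|^2=\tfrac12\int|\nabla\sqrt\rho|^2$, so $\Psi$ is a $T$-minimizer by the previous step. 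Plugging $\Psi$ into \eqref{rho2_def}, integrating out the $N-2$ spectator coordinates (and, for $N=2$, summing $|\chi|^2$ to $1$) gives $\rho_2^\Psi(x,y)=\binom N2|\phi(x)|^2|\phi(y)|^2=\frac12\bigl(1-\tfrac1N\bigr)\rho(x)\rho(y)$, as claimed.

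I do not expect a genuine obstacle; the only subtle point is why the argument is confined to $N=2$ or to bosons. For bosons all particles may occupy the single orbital $\phi$, and for two fermions the full antisymmetry can be carried by the spin singlet while the spatial factor is still $\phi(x_1)\phi(x_2)$ --- this device fails for $N\ge3$ fermions (one cannot place three fermions in the same space-spin state), which is exactly why the Slater-determinant pair density \eqref{exex'} then carries nontrivial exchange terms beyond $\frac12(1-1/N)\rho(x)\rho(y)$.
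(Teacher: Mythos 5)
Your proposal is correct and follows essentially the same route as the paper: the lower bound $T[\Psi]\ge\frac12\int|\nabla\sqrt\rho|^2$ (which the paper derives directly by the same Cauchy--Schwarz argument you sketch, writing it as $\frac18\int|\nabla\rho|^2/\rho$), saturated by the bosonic product state $\prod_i\sqrt{\rho(x_i)/N}$ and, for $N=2$ fermions, by the singlet Slater determinant with both orbitals $\sqrt{\rho/2}$ --- which is exactly your spatial product times the spin singlet. No gap; the approaches coincide.
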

\begin{proof} We claim that the product wave function
$
             \tilde{\Psi}(x_1,..,x_N)=\prod_{i=1}^N\sqrt{\rho(x_i)/N},
$
which has pair density \eqref{pdbos}, is a minimizer of $T$ on ${\cal B}_N$ subject to the constraint $\Psi\mapsto\rho$. 
To see this, consider a general $\Psi\in{\cal B}_N$ with $\Psi\mapsto\rho$, and estimate
\begin{eqnarray*}
T[\Psi] &=& \frac{1}{2}\sum_{i=1}^N \int_{\mathbb{R}^{dN}}|\nabla_{x_i}\Psi|^2dx_1\dots dx_N 
= \frac{N}{2}\int_{\R^{dN}} |\nabla_{x_1}\Psi|^2dx_1\dots dx_N \\[1ex]
&\geq & \frac{N}{2}\int_{{\R}^d}
\frac{\left| \int_{\mathbb{R}^{d(N-1)}}{\rm Re}(\overline{\Psi}\nabla_{x_1}\Psi)dx_2\dots dx_N\right|^2}
{\int_{\mathbb{R}^{d(N-1)}}|\Psi|^2 dx_2\dots dx_N} dx_1 \\[1ex]
&=& \frac{N}{8}\int_{\mathbb{R}^d}
\frac{\left|\nabla_{x_1}\int_{\mathbb{R}^{d(N-1)}}|\Psi|^2dx_2 \dots dx_N\right|^2}
{\rho(x_i)} dx_1 = \frac{1}{8}\int_{\mathbb{R}^d}\frac{|\nabla\rho(x)|^2}{\rho(x)}dx.
\end{eqnarray*}
On the other hand, by an elementary calculation, $T[\tilde{\Psi}]$ is equal to the expression in the last line. For fermions with $N=2$, analogous arguments show that 
the Slater determinant with orbitals $\sqrt{\rho(x)/N}\uparrow(s)$, $\sqrt{\rho(x)/N}\downarrow(s)$ is a minimizer.
\end{proof}
%
%

\section{Numerical investigations of the pair densities}\label{sec-numerical}
\setcounter{equation}{0}\setcounter{figure}{0}

We now turn to the intermediate regime where $\alpha$ lies somewhere inbetween zero and infinity, and investigate numerically how the crossover between the limit behaviour 
\eqref{asysmall} and \eqref{asylarge} occurs. To this end we compute, for simple reference densities $\rho$, the whole one-dimensional family of pair densities $\rho_{2,\alpha}[\rho]$ 
($\alpha\in(0,\infty)$) along the adiabatic connection. Recall that each $\rho_{2,\alpha}$ arises, up to a re-scaling, as a true pair density (see \eqref{diagram}). 

Due to the nontrivial (infinite-dimensional, nonlinear) constraint $\Psi\mapsto\rho$ and the need to resolve $N$-electron wavefunctions, we limit ourselves here for simplicity to one-dimensional reference densities $\rho$ and particle numbers $N=2,3,4$. We hope that our results are nevertheless of some physical and chemical interest. 

Note that the one dimensional Coulomb repulsion can not be described by $1/|x|$ since the latter function is not integrable near $0$. We therefore 
use an effective potential $c(|x|)$
which is obtained by integrating the Coulomb repulsion in $\mathbb{R}^3$ in a thin wire over the lateral degrees of freedom \cite{bednarek03}. Explicitly,
\begin{eqnarray*}
c(r) = \frac{\sqrt{\pi}}{2b}{\rm exp}\left(\frac{r^2}{4b^2}\right){\rm erfc}\left(\frac{r}{2b}\right),
\end{eqnarray*}
where $b$ is a constant and ${\rm erfc}$ is the complementary error function.
We set $b=0.1$ in our simulations (see Figure \ref{fig-br}).

Let $\Omega=[-L,L]$ (with $L=5.0$ in the simulations) and let $N$ be the particle number.
We consider two typical systems on $\Omega$ (see Figure \ref{fig-rho}):
a homogeneous density with periodic boundary condition
\begin{equation}\label{rho-hom}
\rho(x)\equiv\frac{N}{2L};
\end{equation}
and a smoothly varying density with zero Dirichlet boundary condition
\begin{eqnarray}\label{rho-inhom}
\rho(x)=\frac{N}{2L}\left(1+\cos(\frac{\pi}{L}x)\right).
\end{eqnarray}
Both of these two single-particle densities belong to space \eqref{RN} (with $\R^d$ replaced by $\Omega$).

\begin{figure}[ht]
	\begin{minipage}[t]{0.5\linewidth}
		\centering
		\includegraphics[height=5.0cm]{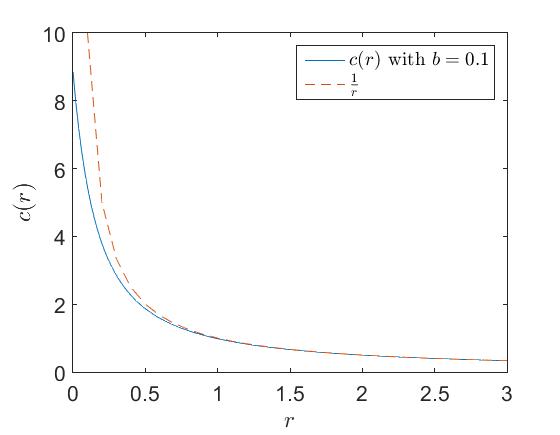}
		\caption{The one dimensional effective Coulomb potential.}
		\label{fig-br}
	\end{minipage}
	\hskip 0.2cm
	\begin{minipage}[t]{0.5\linewidth}
		\centering
		\includegraphics[height=5.0cm]{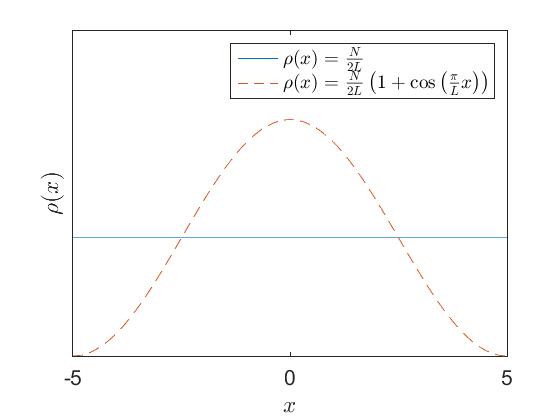}
		\caption{The homogeneous and inhomogeneous electron density (6.1) and (6.2).}
		\label{fig-rho}
	\end{minipage}
\end{figure}

For later purposes, we calculate the optimal transport maps by using the formulae in \cite{seidl07} (which were recently justified rigorously in \cite{cotar13a} for $N=2$ and in \cite{colombo} for general $N$, and are described in Theorem \ref{ThmA} below) 
and present them in Figure \ref{fig-rho_per} and \ref{fig-rho_cos} for the two systems with 2, 3, and 4 particles. 

\begin{figure}[!htb]
\centering
\includegraphics[width=4.5cm]{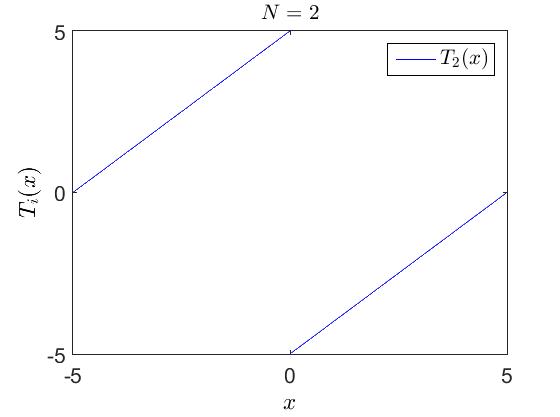}
\includegraphics[width=4.5cm]{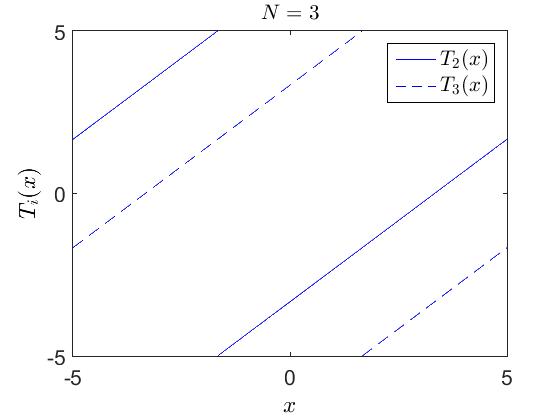}
\includegraphics[width=4.5cm]{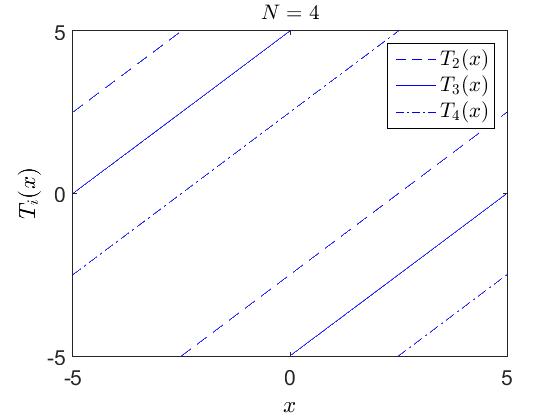}
\caption{The optimal transport maps of the 
	density $\rho(x)\equiv\frac{N}{2L}$.}
\label{fig-rho_per}
\end{figure}

\begin{figure}[!htb]
\centering
\includegraphics[width=4.5cm]{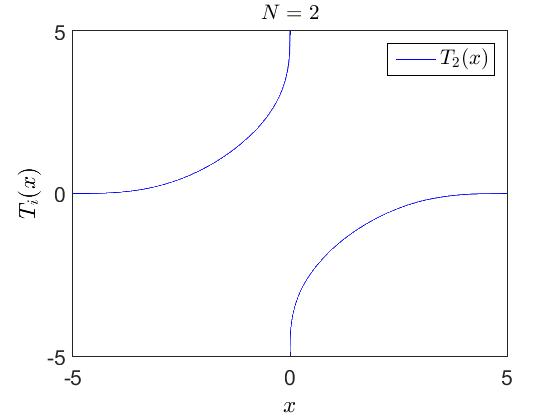}
\includegraphics[width=4.5cm]{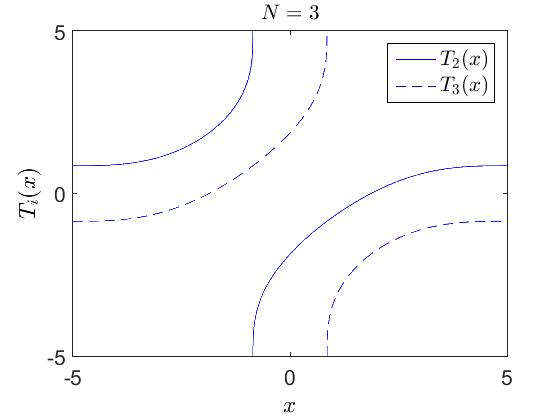}
\includegraphics[width=4.5cm]{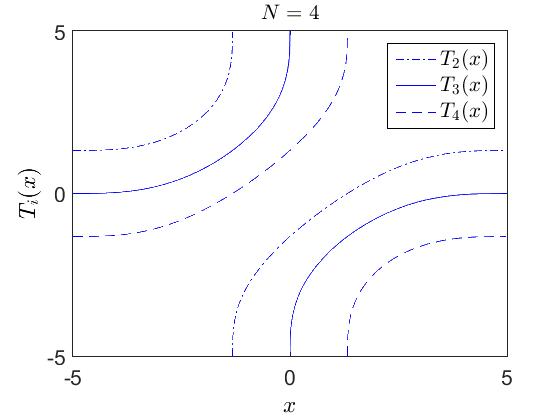}
\caption{The optimal transport maps of the 
	density $\rho(x)=\frac{N}{2L}(1+\cos(\frac{\pi}{L}x))$.}
\label{fig-rho_cos}
\end{figure}

Moreover, in one dimension it is known rigorously \cite{colombo} that the maps $T_1,..,T_N$ are cyclic, that is to say
$$
  \underbrace{T_2\circ ... \circ T_2}_{N\mbox{ times}} = id, \;\;\;
  \underbrace{T_2\circ ... \circ T_2}_{j-1\mbox{ times}} = T_j \; (j=2,..,N).
$$
For related insights see \cite{Ghoussoub}.
This allows to simplify formula \eqref{rho2-sce} for the $\alpha=0$ pair density $\rho_{2}^{SCE}$. Namely, a change of variables shows that in this case the sum over $j$ in \eqref{rho2-sce} is independent of $i$. This together with the fact that the normalized line element (one-dimensional Hausdorff measure) $ds$ on the one-dimensional curve {\it graph} $\,T_j$=$\{(x,T_j(x))\, : \, x\in\R\}$ is given by 
$$
    ds = \sqrt{1+T_j'(x)^2}dx ,
$$
which yields the expression
\begin{equation} \label{rho2geo}
   \rho_{2}^{SCE}[\rho](x,y) = \frac12 \sum_{j=2}^N \frac{\rho(x)}{\sqrt{1+T_j'(x)^2}} ds \Big|_{y=T_j(x)}.
\end{equation}
This remarkable formula shows that the maps $T_i$, and hence the full $N$-body SCE density, can be {\it explicitly read off} from the SCE pair density!

To obtain the true pair densities of our two typical systems
for finite coupling constant $\alpha$, we need to simulate the constrained-search 
problem in \eqref{SPD}. In our case this problem is 
given, for a one-dimensional single-particle density $\rho_0$, by
\begin{multline}\label{min_F_alpha}
 \mbox{Minimize }\sum_{\sigma_1,..,\sigma_N\in\Z_2}\int_{\Omega^{N}}
\left(\frac{\alpha}{2}\sum_{i=1}^N \left|\frac{\partial\Psi}{\partial x_i}\right|^2
+ \sum_{1\leq i<j \leq N} |\Psi|^2 c(|x_i-x_j|)\right) dx_1\cdots dx_N \\
 \mbox{ s/to }\Psi\in\AN,\,\Psi\mapsto\rho_0. \quad
\end{multline}
Here for the inhomogeneous density \eqref{rho-inhom} $\AN$ is the standard wavefunction class \eqref{AN} with $\R^d$ replaced by $\Omega=[-L,L]$, 
\begin{equation}\label{ANDir}
  \AN = \Bigl\{\Psi\, : \, \Psi,\,\nabla\Psi\in L^2(([L,L]\times\Z_2)^N), \, \Psi\mbox{ antisymmetric}, \, ||\Psi||_{L^2}=1\Bigr\}.
\end{equation}
Since $\rho$ is zero at $\pm L$, the constraint $\Psi\mapsto\rho$ automatically implies Dirichlet zero boundary conditions $\Psi\!\Big|_{x_i=\pm L}=0$. For the homogeneous density \eqref{rho-hom}, we use periodic wavefunctions
\begin{eqnarray} 
  \AN &=& \Bigl\{ \Psi\, : \, \Psi,\,\nabla\Psi\in L^2(([-L,L]\times\Z_2)^N), \nonumber \\
  & & \Psi \mbox{ antisymmetric}, \, \Psi\Bigl|_{x_i=L}=\Psi\Bigr|_{x_i=-L}\mbox{ for all }i, 
       \, ||\Psi||_{L^2}=1 \Bigr\} . \label{ANper}
\end{eqnarray}  
Recall that the constraint $\Psi\mapsto\rho_0$ means that integrating $|\Psi|^2$ over all but one electron positions and summing over all spins gives the single-particle density $\rho_0$.
By the symmetry of $|\Psi|^2$, one can leave any of the electron coordinate not to be integrated.
Therefore, the associated Lagrange function of \eqref{min_F_alpha} is, abbreviating $z_i=(x_i,\sigma_i)\in\Omega\times\Z_2$ and $\sum_{\sigma_i}\int_{\Omega}dx_i = \int_{\Omega\times\Z_2} dz_i$, 
\begin{multline*}
\mathcal{L}(\Psi,\lambda_1,\cdots,\lambda_N) = \int_{(\Omega\times\Z_2)^{N}}\left(\frac{\alpha}{2}
\sum_{i=1}^N \left|\frac{\partial\Psi}{\partial x_i}\right|^2
+ \sum_{1\leq i<j \leq N} |\Psi|^2 c(|x_i-x_j|)\right) dz_1\cdots dz_N \\[1ex]
+ \int_{\Omega}\lambda_1(x)\left(\rho_0(x) - N\sum_{\sigma\in\Z_2}\int_{(\Omega\times\Z_2)^{N-1}}
|\Psi(x,\sigma, z_2,\cdots,z_N)|^2 dz_2\cdots dz_N\right) dx \\[1ex]
+ \cdots + \int_{\Omega}\lambda_N(x)\left(\rho_0(x) - N\sum_{\sigma\in\Z_2}\int_{(\Omega\times\Z_2)^{N-1}}
|\Psi(z_1,\cdots,z_{N-1},x,\sigma)|^2 dz_1\cdots dz_{N-1}\right) dx
\end{multline*}
with the Lagrange multipliers $\lambda_1(x),\cdots,\lambda_N(x)$.
By the symmetry of $|\Psi|^2$, we have $\lambda_1(x)=\cdots=\lambda_N(x) :=\lambda(x)$.
Therefore minimizers of \eqref{min_F_alpha} satisfy the following Euler-Lagrange equation
\begin{eqnarray}\label{eq-euler-lag}
\left\{\begin{array}{l} \displaystyle
\left(-\frac{\alpha}{2}\Delta + \sum_{1\leq i<j\leq N}c(|x_i-x_j|)\right) \Psi(x_1,\sigma_1,\cdots,x_N,\sigma_N) 
\\ \qquad\qquad\qquad\qquad\qquad\qquad\qquad\qquad  \displaystyle
= \left(\sum_{i=1}^N\lambda(x_i)\right) \Psi(x_1,\sigma_1,\cdots,x_N,\sigma_N) \\ 
\Psi\in\AN,~\Psi\mapsto\rho_0
\end{array}\right..\quad
\end{eqnarray}

Formally, the Lagrange multiplier $\lambda(x)$ equals the functional derivative of the Hohenberg-Kohn functional \eqref{min_F_alpha} with respect to electron density, and $-\lambda(x)$ equals the external potential $v_0$ for which $\rho_0$ is the ground state of the system,
$$
\lambda = \left.\frac{\partial F_{\alpha}[\rho]}{\partial\rho}\right|_{\rho=\rho_0} = -v_0
$$
with $F_{\alpha}[\rho] := \min_{\Psi\in\AN,\Psi\mapsto\rho} \left\{ \alpha T[\Psi] + V_{ee}[\Psi] \right\}$.
Therefore, by using the Euler-Lagrange equation \eqref{eq-euler-lag} we implicitly require that the density $\rho_0$ can be generated by some external potential.\footnote{We thank Eric Cances for this remark.} 
This is called ``$v$-representability" \cite{martin05}, and the conditions for such densities are not known in general. In particular, we do not know rigorously whether the single-particle densities given by \eqref{rho-hom} and \eqref{rho-inhom} are $v$-representable.
Nevertheless, after numerical discretization it can easily be shown that
the Lagrange multiplier $\lambda(x)$ for the ensuing finite dimensional problem exists. Moreover our numerical Lagrange multipliers stayed stable under refining the mesh, suggesting that $v$-representability holds. Establishing this rigorously is an interesting open problem.

Next we describe our algorithm for solving \eqref{eq-euler-lag}. We drop the spin variables for simplicity; extension to the spin-dependent case is straightforward. Equation \eqref{eq-euler-lag} looks like an eigenvalue problem, but the ``eigenvalue" $\sum_{i=1}^N\lambda(x_i)$
depends on a function on $\Omega$, and moreover, the ``eigenfunction" $\Psi$ has to satisfy some
nonlinear marginal constraints. Due to these difficulties, there is no simple way for us to solve this problem directly.
If we look at the equation \eqref{eq-euler-lag} the other way around by assuming that $\lambda(x) = \tilde{\lambda}(x)$
with some given function $\tilde{\lambda}$, then the problem is reduced to the following generalized eigenvalue problem:
Find $\mu\in\mathbb{R}$ and $0\neq\tilde{\Psi}\in H^1(\Omega^N)$, such that
\begin{eqnarray}\label{general-eigen}
\left(-\frac{\alpha}{2}\Delta + \sum_{1\leq i<j\leq N}c(|x_i-x_j|)\right) \tilde{\Psi}(x_1,\cdots,x_N)
= \mu\left(\sum_{i=1}^N\tilde{\lambda}(x_i)\right) \tilde{\Psi}(x_1,\cdots,x_N)
\end{eqnarray}
with $\|\tilde{\Psi}\|_{L^2}=1$ and $\mu$ being the lowest eigenvalue.
We thus obtain an eigenfunction $\tilde{\Psi}$ with corresponding single-particle density
\begin{eqnarray*}
\tilde{\rho}(x)=N\int_{\Omega^{N-1}}\left|\tilde{\Psi}(x,x_2,\cdots,x_N)\right|^2dx_2\cdots dx_N.
\end{eqnarray*}
We denote the above process (from $\tilde{\lambda}$ to $\tilde{\rho}$) by $\mathcal{F}$,
that is, $\tilde{\rho}=\mathcal{F}(\tilde{\lambda})$.
We have that \eqref{eq-euler-lag} is equivalent to the nonlinear problem
\begin{eqnarray}\label{rho_lambda}
\rho_0 = \mathcal{F}(\lambda).
\end{eqnarray}
We resort to the following Newton algorithm for solving this nonlinear problem.

\begin{algorithm}\label{algorithm-newton}
{\bf Newton algorithm for solving \eqref{rho_lambda}}
\begin{enumerate}
\item
Fix $\varepsilon>0$. Let $k=1$ and initialize $\lambda_1\neq 0$.
\item
Solve \eqref{general-eigen} with $\tilde{\lambda}=\lambda_k$ to obtain $(\mu_k,\Psi_k)$.
\item
Let $\lambda'_k=\mu_k\lambda_k$ and
$\rho_k(x)=N\int_{\Omega^{N-1}}|\Psi_k(x,x_2,\cdots,x_N)|^2 dx_2\cdots dx_N$.
\item
If $\|\rho_0-\rho_k\|<\varepsilon$, get $\Psi=\Psi_k$ and goto 5; else, let
\begin{eqnarray}\label{iteration}
\lambda_{k+1}=\lambda'_k + \left(\left.\frac{\partial \mathcal{F}}{\partial\lambda}
\right|_{\lambda=\lambda'_k}\right)^{-1}(\rho_0-\rho_k).
\end{eqnarray}
Take $k=k+1$ and goto 2.
\item
Calculate the pair density
$\rho_2(x,y)=\binom{N}{2}\int_{\Omega^{N-2}}|\Psi_k(x,y,x_3,\cdots,x_N)|^2 dx_3\cdots dx_N$.
\end{enumerate}
\end{algorithm}

Note that the operator
$\left(\left.\frac{\partial \mathcal{F}}{\partial\lambda}\right|_{\lambda=\lambda_k}\right)^{-1}$
in \eqref{iteration} can not be obtained explicitly, an approximation for it has to be made.
We abbreviate $\Lambda_k(x_1,\cdots,x_N)=\sum_{i=1}^N\lambda_k(x_i)$ and obtain by the chain rule that
\begin{eqnarray}\label{chain-rule}
\left.\frac{\partial \mathcal{F}}{\partial\lambda}\right|_{\lambda=\lambda_k}
= \frac{\partial\rho_k}{\partial\Psi_k} \cdot \frac{\partial\Psi_k}{\partial\Lambda_k}
\cdot \frac{\partial\Lambda_k}{\partial\lambda_k}.
\end{eqnarray}
The first and third factors on the right-hand side of \eqref{chain-rule} can be obtained explicitly.
To calculate the second term, we observe that
\begin{eqnarray}\label{eq-PsiLambda}
\mathcal{H}\Psi_k=\Lambda_k\Psi_k \quad{\rm with}~~
\mathcal{H}=-\frac{\alpha}{2}\Delta + \sum_{1\leq i<j\leq N}c(|x_i-x_j|).
\end{eqnarray}
By differentiating \eqref{eq-PsiLambda} with respect to $\Lambda_k$ and ignoring the $\Lambda_k$-dependence of $\Psi_k$ on the right-hand side, we can obtain the approximation
$$
\frac{\partial\Psi_k}{\partial\Lambda_k} \approx \mathcal{H}^{-1}\Psi_k.
$$
In our numerical experiments, the single-particle densities $\rho_k$ generated by Algorithm \ref{algorithm-newton}
always converged to $\rho_0$ steadily (see Figure \ref{fig-newton} in Section \ref{sec-fermion} as an example).

We can solve any discretization of \eqref{min_F_alpha} numerically by using Algorithm \ref{algorithm-newton}, and further obtain the true pair densities for different coupling constants.
We perform all our following computations in double precision arithmetic on a PC with 16GB RAM using Matlab.

\subsection{Bosons}\label{sec-boson}

To elucidate pure correlation effects undiluted by exchange, we first neglect the spin variables and assume that $\Psi$ is symmetric,
that is, we assume that the particles under consideration are bosons. 

Let $\mathcal{T}$ be a partition of $\Omega=[-L,L]$ ($L=5$) with equally spaced nodes $a^1<a^2<\cdots<a^m$.
Denote by $\chi_j(x)$ the piecewise linear function with value 1 at node $a^j$ and 0 otherwise.
Then the functions
\begin{eqnarray*}
\{\chi_j(x):~j=1,\cdots,m\}
\end{eqnarray*}
form a linear finite element basis set on $\Omega$,
which gives a discretization for the single-particle space.
Denote the finite dimensional space span$\{\chi_j:~j=1,\cdots,m\}$ by $V_m$.

Since the wavefunction $\Psi$ in \eqref{min_F_alpha} is a function on $\Omega^N$, we shall generate a basis set in $N$-particle space by taking tensor products of the $\{\chi_j\}$:
\begin{eqnarray}\label{basis-tensor}
\psi_{\bf j}({\bf x}) = \prod_{l=1}^N \chi_{j_l}(x_l)
\quad{\rm with}~~{\bf j}=(j_1,\cdots,j_N)\in\{1,\cdots,m\}^N.
\end{eqnarray}
Note that the number of degrees of freedom for this basis set is $m^N$.
We denote by $\mathcal{B}_m^N$ the $N$-boson space spanned by the basis functions $\{\psi_{\bf j}\}$.

With the above discretization, we have the following variational formulation of \eqref{eq-euler-lag}:
Find $\lambda\in V_m$ and $\Psi\in\mathcal{B}_m^N$ such that
\begin{eqnarray}\label{eq-euler-boson-dis}
\left\{\begin{array}{l} \displaystyle
\frac{\alpha}{2}(\nabla\Psi,\nabla v) + \sum_{1\leq i<j\leq N}(c(|x_i-x_j|)\Psi,v)
=\sum_{i=1}^N(\lambda(x_i)\Psi,v)\quad \forall~v\in\mathcal{B}_m^N \\ [1ex]
\displaystyle \rho(x)=N\int_{\Omega^{N-1}}|\Psi(x,x_2,\cdots,x_N)|^2 dx_2\cdots dx_N
\quad{\rm with}~ x=a^1,a^2,\cdots,a^m
\end{array}\right..\quad
\end{eqnarray}
The second line of \eqref{eq-euler-boson-dis} is a discretization of the
marginal constraint, which is only imposed on the nodes of $\mathcal{T}$.
Within this discretization, $\rho-\rho_k$ in Algorithm \ref{algorithm-newton} is
calculated as a vector $\{\rho(a^j)-\rho_k(a^j)\}_{j=1}^m$ on the nodes.

For the homogeneous density \eqref{rho-hom} and the inhomogeneous density \eqref{rho-inhom} with $N=2$, 3, and 4 particles,
we compute their pair densities by 
using Algorithm \ref{algorithm-newton} with $\varepsilon=10^{-4}$ and $m=40$ for $N=2,3$, $m=32$ for $N=4$.
The results for different values of $\alpha$ are presented in
Figure \ref{fig-bos-periodic} and \ref{fig-bos-convex}.
When $\alpha=0$, the electrons are strictly correlated to each other: the position of
one electron fixes all positions of the other electrons, and the pair densities are given by
\eqref{rho2geo}, with support ${\rm supp}(\rho_{2}^{SCE}[\rho]) = \{(x,T_i(x))\, : \, x\in\Omega,\, i=2,..,N\}$. To visualize this limiting pair density, we plot, above each curve 
$\{(x,T_i(x))\, : \, x\in\Omega\}$, the prefactor  $\rho(x)/\sqrt{1+(T'_i)^2}$ of the normalized line element $ds$ along the curve. 

\begin{figure}[htbp]
\centering

\begin{picture}(100,580)
\put(-40,540){\makebox(2,2){$N = 2$}}
\put(80,540){\makebox(2,2){$N = 3$}}
\put(200,540){\makebox(2,2){$N = 4$}}

\put(-140,480){\makebox(2,2){$\alpha = 0$}}
\put(-100,430){\includegraphics[width=4.5cm]{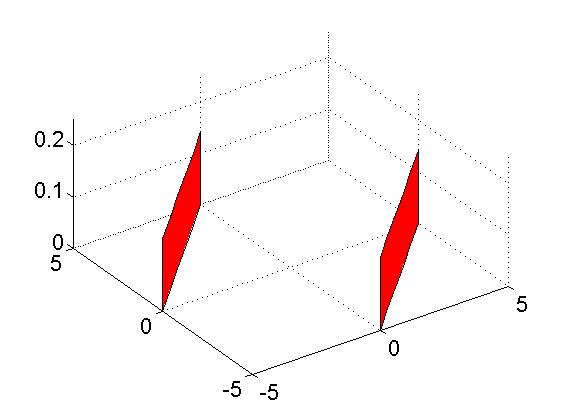}}
\put(20,430){\includegraphics[width=4.5cm]{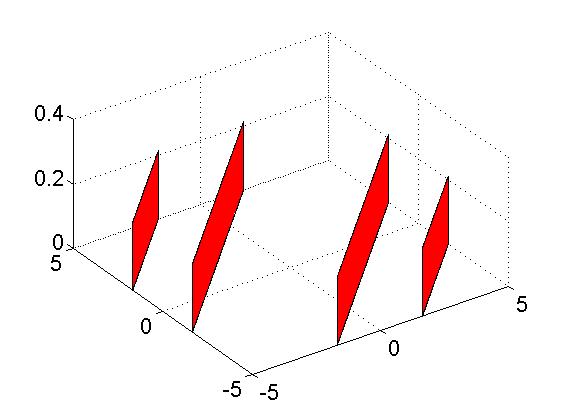}}
\put(140,430){\includegraphics[width=4.5cm]{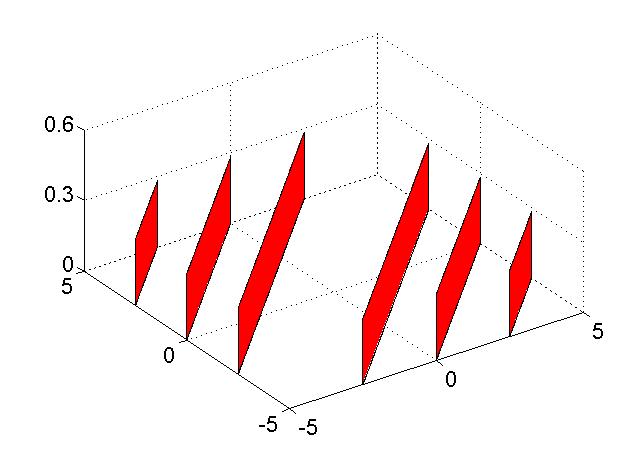}}

\put(-140,380){\makebox(2,2){$\alpha = 0.1$}}
\put(-100,330){\includegraphics[width=4.5cm]{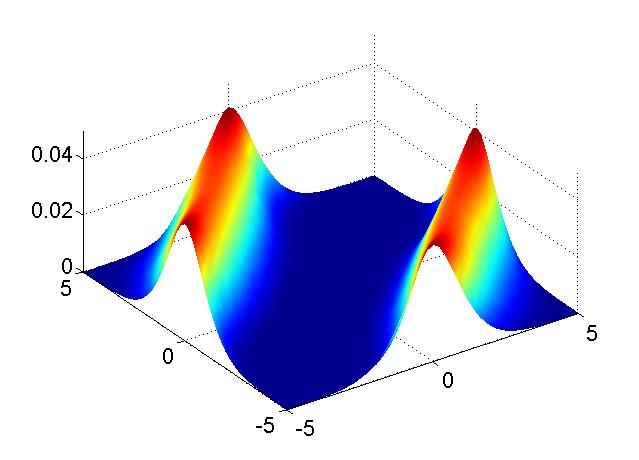}}
\put(20,330){\includegraphics[width=4.5cm]{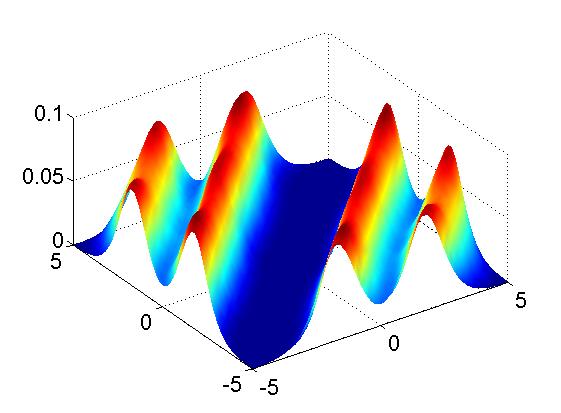}}
\put(140,330){\includegraphics[width=4.5cm]{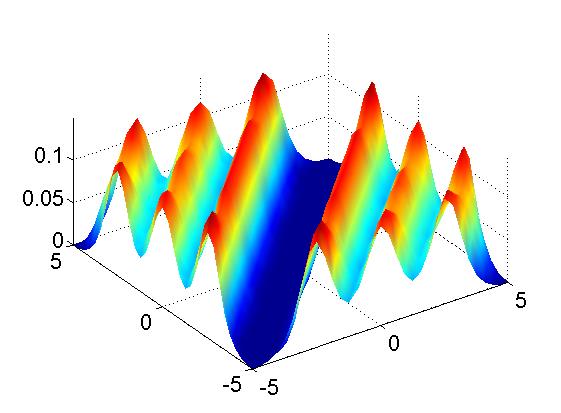}}

\put(-140,280){\makebox(2,2){$\alpha = 1$}}
\put(-100,230){\includegraphics[width=4.5cm]{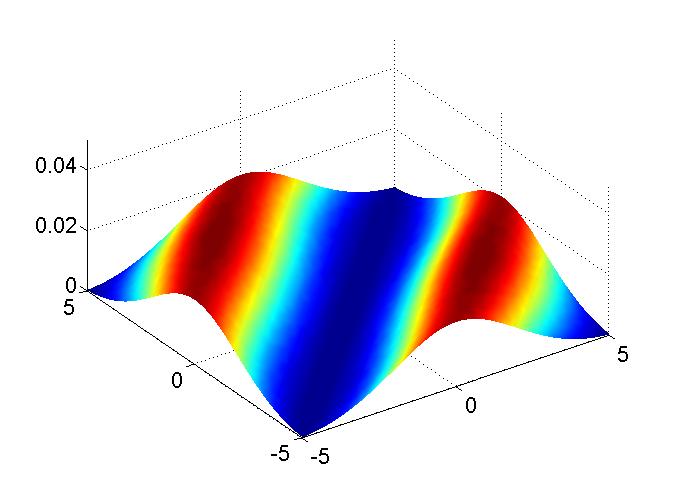}}
\put(20,230){\includegraphics[width=4.5cm]{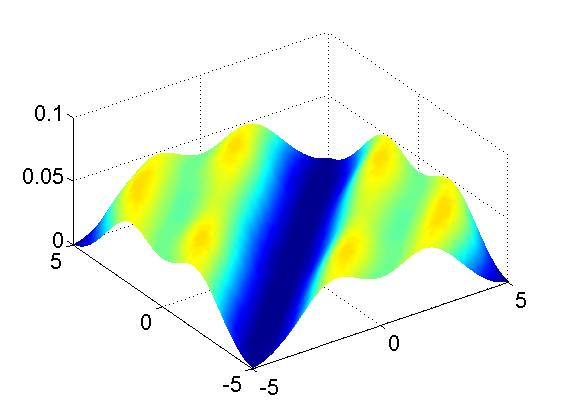}}
\put(140,230){\includegraphics[width=4.5cm]{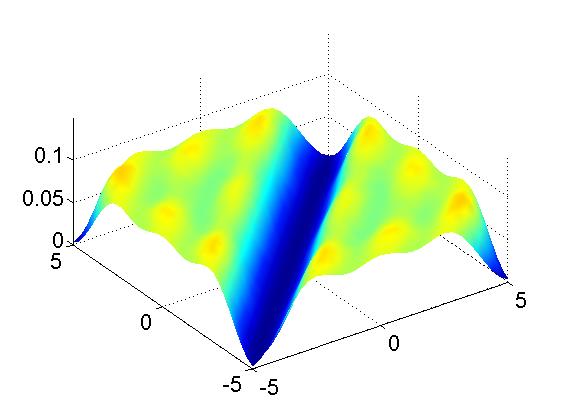}}

\put(-140,180){\makebox(2,2){$\alpha = 10$}}
\put(-100,130){\includegraphics[width=4.5cm]{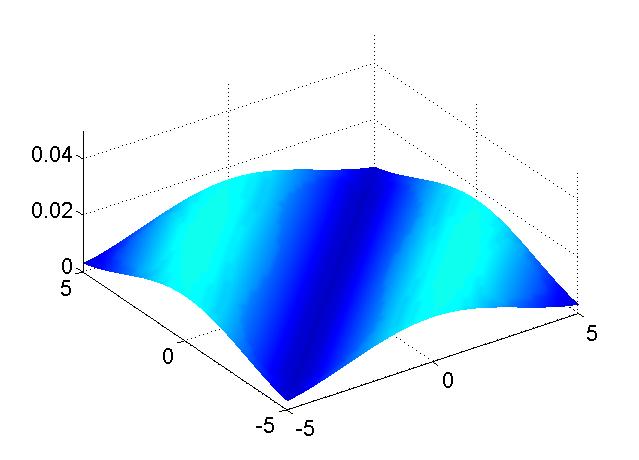}}
\put(20,130){\includegraphics[width=4.5cm]{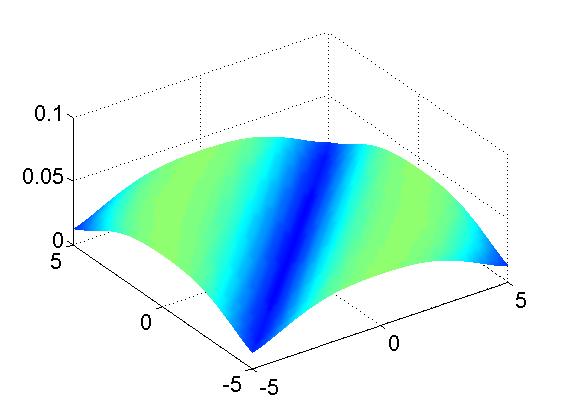}}
\put(140,130){\includegraphics[width=4.5cm]{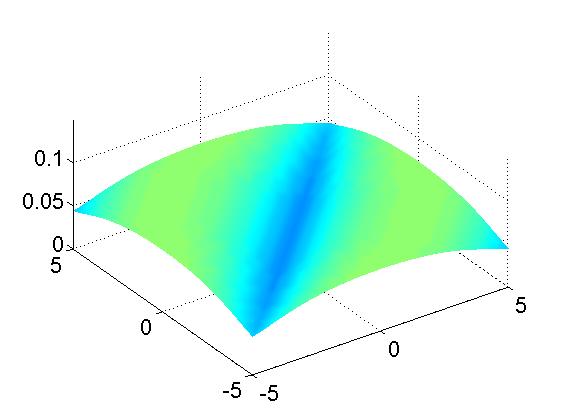}}

\put(-140,80){\makebox(2,2){$\alpha = 100$}}
\put(-100,30){\includegraphics[width=4.5cm]{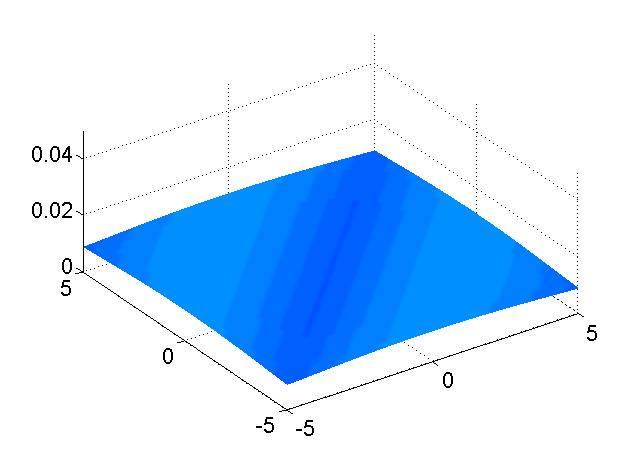}}
\put(20,30){\includegraphics[width=4.5cm]{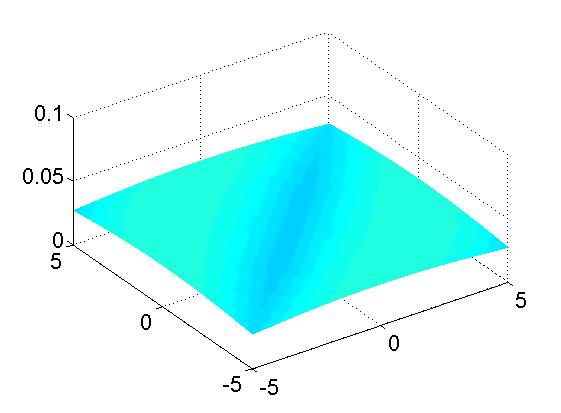}}
\put(140,30){\includegraphics[width=4.5cm]{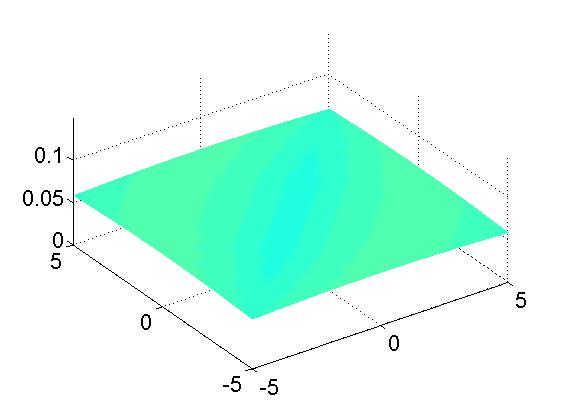}}

\end{picture}
\caption{(bosons) Pair densities with $\rho(x)\equiv\frac{N}{2L}\;$ (eqs.~\eqref{SPD},~\eqref{ANper}, but with spinless symmetric $\Psi$). Top row: SCE/optimal transport (based on exact results \cite{seidl99a}).}
\label{fig-bos-periodic}
\end{figure}

\begin{figure}[htbp]
\centering

\begin{picture}(100,580)
\put(-40,540){\makebox(2,2){$N = 2$}}
\put(80,540){\makebox(2,2){$N = 3$}}
\put(200,540){\makebox(2,2){$N = 4$}}

\put(-140,480){\makebox(2,2){$\alpha = 0$}}
\put(-100,430){\includegraphics[width=4.5cm]{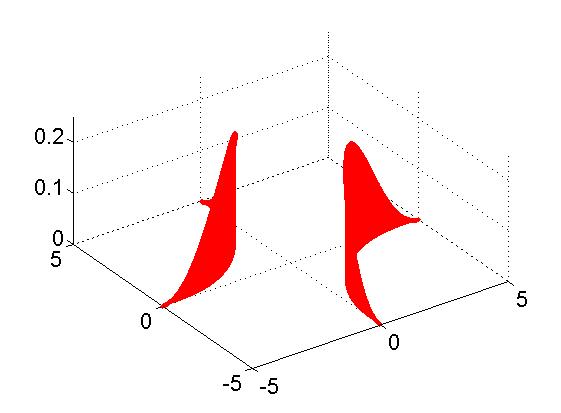}}
\put(20,430){\includegraphics[width=4.5cm]{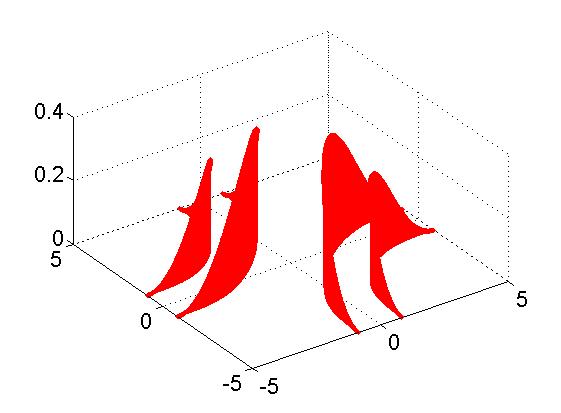}}
\put(140,430){\includegraphics[width=4.5cm]{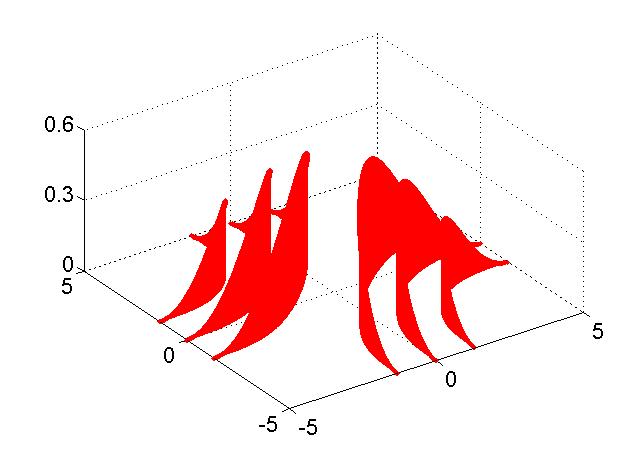}}

\put(-140,380){\makebox(2,2){$\alpha = 0.1$}}
\put(-100,330){\includegraphics[width=4.5cm]{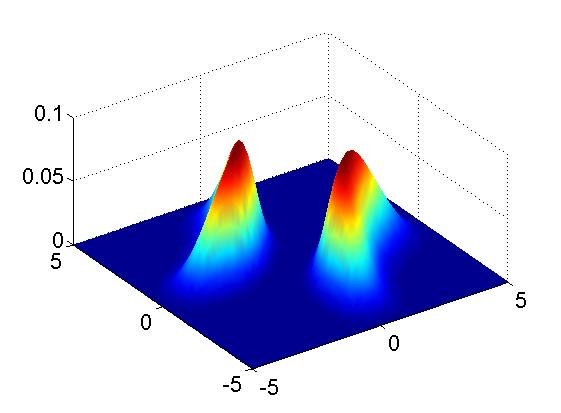}}
\put(20,330){\includegraphics[width=4.5cm]{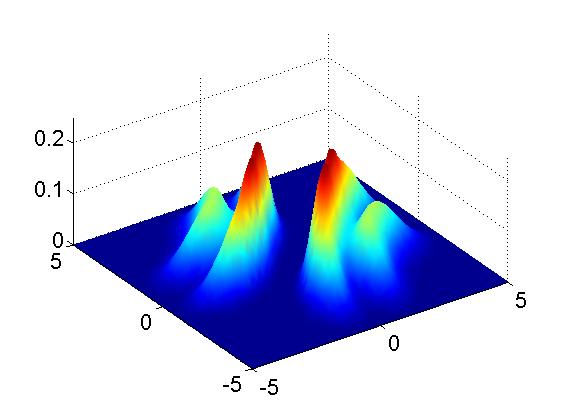}}
\put(140,330){\includegraphics[width=4.5cm]{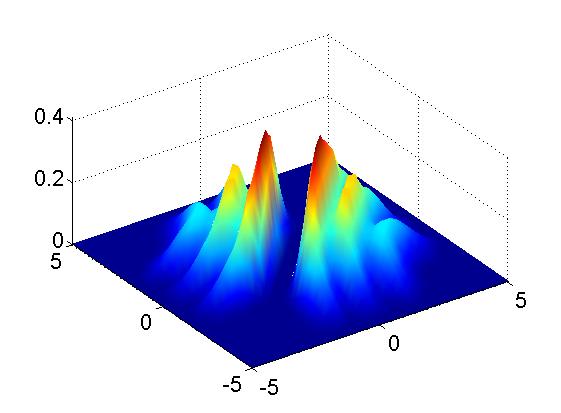}}

\put(-140,280){\makebox(2,2){$\alpha = 1$}}
\put(-100,230){\includegraphics[width=4.5cm]{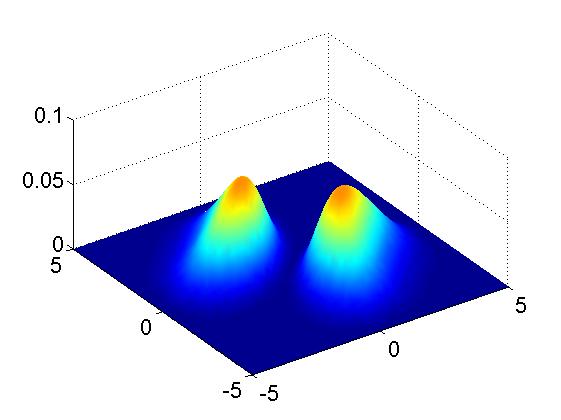}}
\put(20,230){\includegraphics[width=4.5cm]{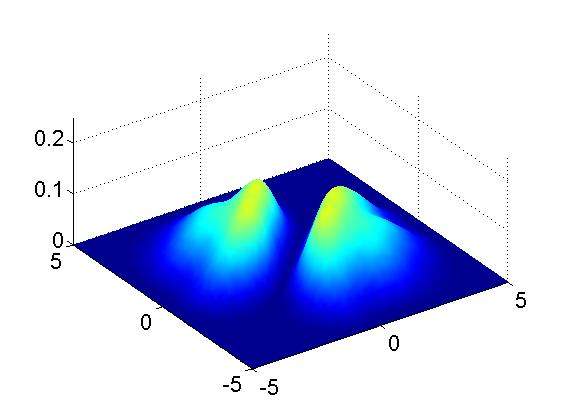}}
\put(140,230){\includegraphics[width=4.5cm]{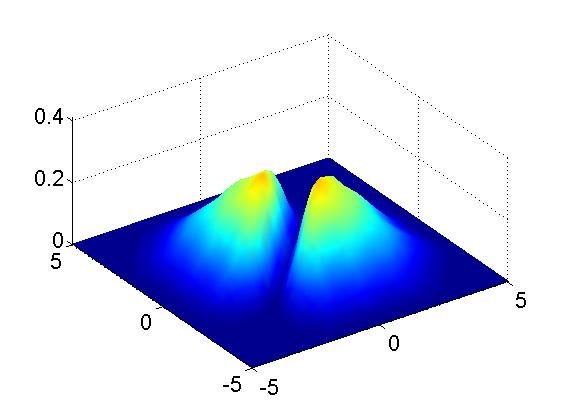}}

\put(-140,180){\makebox(2,2){$\alpha = 10$}}
\put(-100,130){\includegraphics[width=4.5cm]{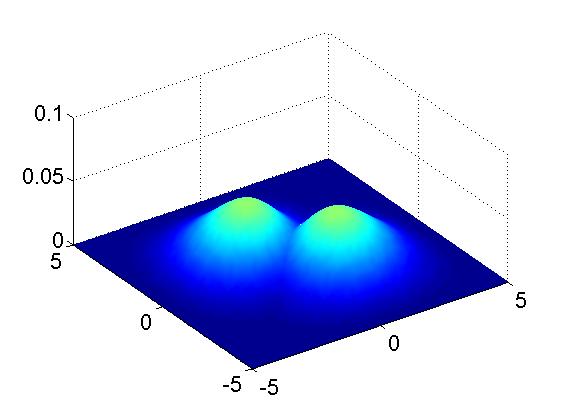}}
\put(20,130){\includegraphics[width=4.5cm]{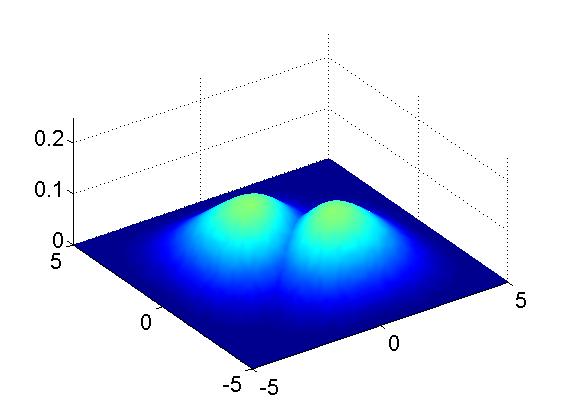}}
\put(140,130){\includegraphics[width=4.5cm]{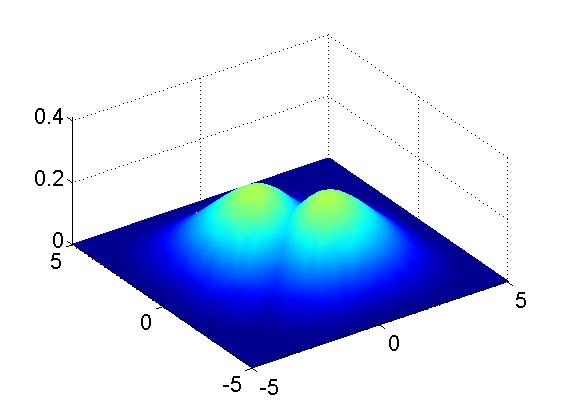}}

\put(-140,80){\makebox(2,2){$\alpha = 100$}}
\put(-100,30){\includegraphics[width=4.5cm]{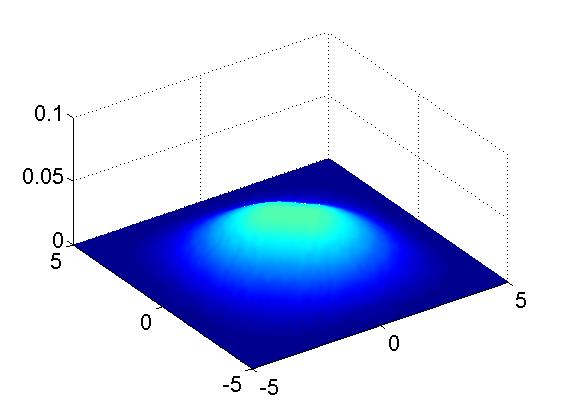}}
\put(20,30){\includegraphics[width=4.5cm]{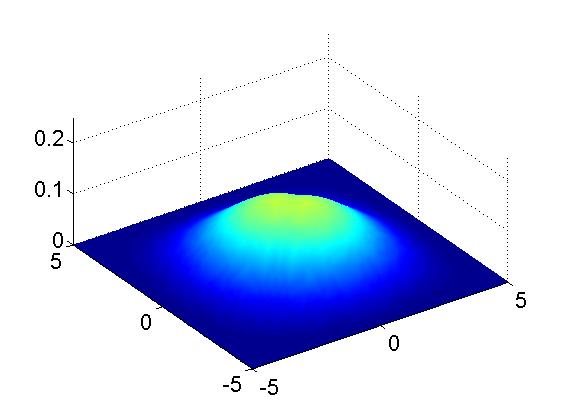}}
\put(140,30){\includegraphics[width=4.5cm]{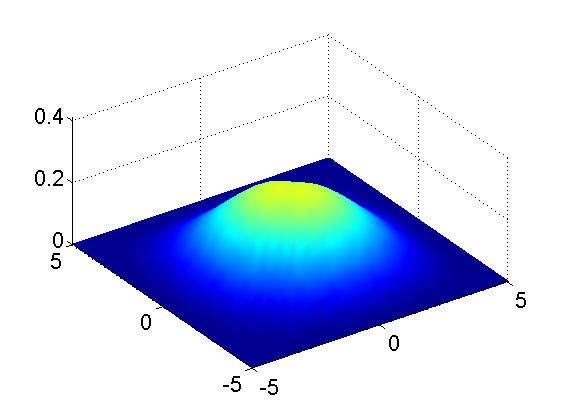}}

\end{picture}
\caption{(bosons) Pair densities with 
$\rho(x)\!=\!\frac{N}{2L}\left(1\! +\!\cos(\frac{\pi}{L}x)\right)\;$ (eqs.~\eqref{SPD},~\eqref{ANDir}, but with spinless symmetric $\Psi$). Top row: SCE/optimal transport (based on exact results \cite{seidl99a}).}
\label{fig-bos-convex}
\end{figure}

We observe that when $\alpha$ is small (e.g., $\alpha=0.1$), the pair densities
are highly localized as $2(N-1)$ ridges around ${\rm supp}(\rho_2^{SCE})$.
As $\alpha$ increases, the pair densities are smoothed out gradually.
The $2(N-1)$ ridges are still visible when $\alpha=1$ but merge with each other when $\alpha=$10 and 100.
The profiles of the pair densities strongly reflect the number of particles
(particularly when $\alpha$ is small), a phenomenon that is missed by the standard DFT models.
When $\alpha$ equals 100, the pair densities are very close to the statistically independent
function $\frac{1}{2}\left(1-\frac{1}{N}\right)\rho(x)\rho(y)$ predicted in Proposition \ref{propo-Falpha-3}.
In fact, the behavior of the pair densities as $\alpha$ goes from 0 to infinity can be viewed as
a process in which the Coulomb holes fade away and the correlations are smoothed out towards statistical independence.

Moreover, we plot the Lagrange multipliers $\lambda(x)$ for systems with 4 bosons
and different values of $\alpha$ in Figure \ref{fig-lambda-boson}.
We have mentioned that $-\lambda(x)$ can be viewed as the external potential that
has $\rho$ as the ground state density.
Therefore, shifting $\lambda(x)$ by an additive constant makes no difference,
and we can use an appropriate shift to allow better comparisons in the picture.
When $\alpha=0$, the SCE Lagrange multiplier can be calculated according to the formulae in \cite{seidl07,ChenEtAl}.
When $\alpha$ is small, the potentials are actually quite close to the SCE case.
As $\alpha$ increases, the potentials converge to constant functions for homogeneous systems
and become steeper and steeper for inhomogeneous systems to cancel the kinetic energy and constrain the particles.

\begin{figure}[ht]
\centering
\includegraphics[height=4.5cm]{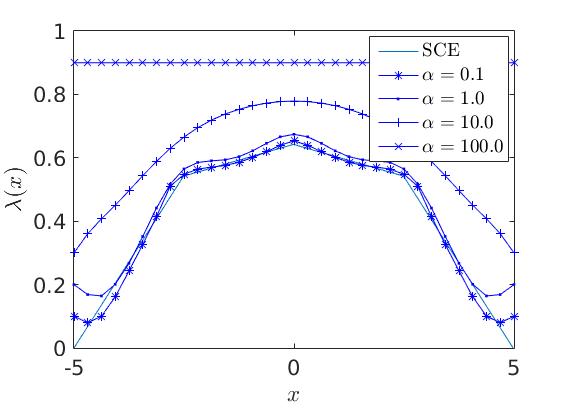}
\includegraphics[height=4.5cm]{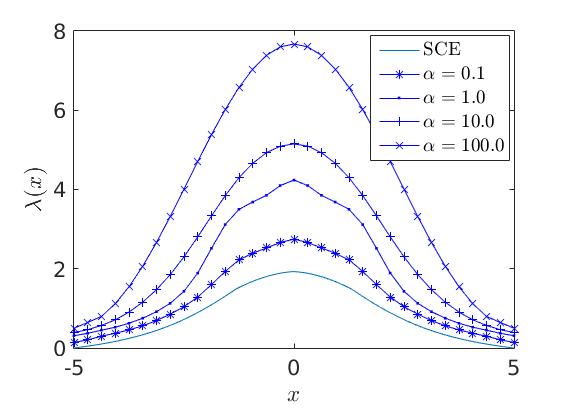}
\caption{The Lagrange multipliers $\lambda(x)$ for systems with 4 bosons.
Left: $\rho(x)\equiv\frac{N}{2L}$. Right: $\rho(x)=\frac{N}{2L}\left(1+\cos(\frac{\pi}{L}x)\right)$.}
\label{fig-lambda-boson}
\end{figure}

\subsection{Fermions}\label{sec-fermion}

Let us now come back to the fermions with spin variables and antisymmetry constraint in $\AN$.
For simplicity, we use the notations $\uparrow$ and $\downarrow$ for spin up and spin down, respectively.
With the same partition $\mathcal{T}$ of $\Omega$ as that in Section \ref{sec-boson},
the single-particle basis set becomes
\begin{eqnarray*}
\{\chi_{j,s}(x,\sigma):~j=1,\cdots,m,~s=\uparrow,\downarrow\},
\end{eqnarray*}
where $x\in\Omega$, $\sigma\in\{\uparrow,\downarrow\}$, and $\chi_{j,s}(x,\sigma)=\chi_j(x)$ if $s=\sigma$ and $0$ otherwise. The classical product \eqref{basis-tensor} needs to be replaced by the Slater determinant
$\psi_{{\bf j},{\bf s}}({\bf x},{\pmb \sigma})$ of the $N$ one-body basis functions $\chi_{j_1,s_1},...,\chi_{j_N,s_N}$, 
with ${\bf j}=(j_1,\cdots,j_N)\in\{1,\cdots,m\}^N$ and ${\bf s}=(s_1,\cdots,s_N)\in\{\uparrow,\downarrow\}^N$.
The number of degrees of freedom is $\binom{2m}{N}$. We denote by $\mathcal{V}_m^N$
the $N$-fermion space spanned by the basis functions $\{\psi_{{\bf j}, {\bf s}}\}$.

The corresponding variational formulation of \eqref{eq-euler-lag} now reads as follows:
Find $\lambda\in V_m$ and $\Psi\in\mathcal{V}_m^N$ such that
\begin{eqnarray}\label{eq-euler-fermion-dis}
\left\{\begin{array}{l} \displaystyle
\frac{\alpha}{2}(\nabla\Psi,\nabla v) + \sum_{1\leq i<j\leq N}(c(|x_i-x_j|)\Psi,v)
=\sum_{i=1}^N(\lambda(x_i)\Psi,v)\quad \forall~v\in\mathcal{V}_m^N \\ [1ex]
\begin{array}{r}
\displaystyle \rho(x) = N\sum_{{\pmb\sigma}\in\{\uparrow,\downarrow\}^N}
\int_{\Omega^{N-1}}|\Psi(x,\sigma_1,x_2,\sigma_2,\cdots,x_N,\sigma_N)|^2 dx_2\cdots dx_N \quad\quad\\
\quad{\rm with}~ x=a^1,a^2,\cdots,a^m \end{array}
\end{array}\right..\quad
\end{eqnarray}
Using Algorithm \ref{algorithm-newton} with $\varepsilon=10^{-4}$ and $m=40$ for $N=2,3$, $m=32$ for $N=4$, we calculate $\lambda$ and $\Psi$ for homogeneous and inhomogeneous electron densities given by \eqref{rho-hom} and \eqref{rho-inhom}.
The ground state pair densities are depicted in Figures \ref{fig-fermion-periodic} and \ref{fig-fermion-convex}.
As an illustration of the efficiency and stability of Algorithm \ref{algorithm-newton},
we present a convergence curve of $\|\rho_0-\rho_k\|$ in Figure \ref{fig-newton}.

\begin{figure}[htbp]
\centering

\begin{picture}(100,580)
\put(-40,540){\makebox(2,2){$N = 2$}}
\put(80,540){\makebox(2,2){$N = 3$}}
\put(200,540){\makebox(2,2){$N = 4$}}

\put(-140,480){\makebox(2,2){$\alpha = 0$}}
\put(-100,430){\includegraphics[width=4.5cm]{figs/2_sce_per.jpg}}
\put(20,430){\includegraphics[width=4.5cm]{figs/3_sce_per.jpg}}
\put(140,430){\includegraphics[width=4.5cm]{figs/4_sce_per.jpg}}

\put(-140,380){\makebox(2,2){$\alpha = 0.1$}}
\put(-100,330){\includegraphics[width=4.5cm]{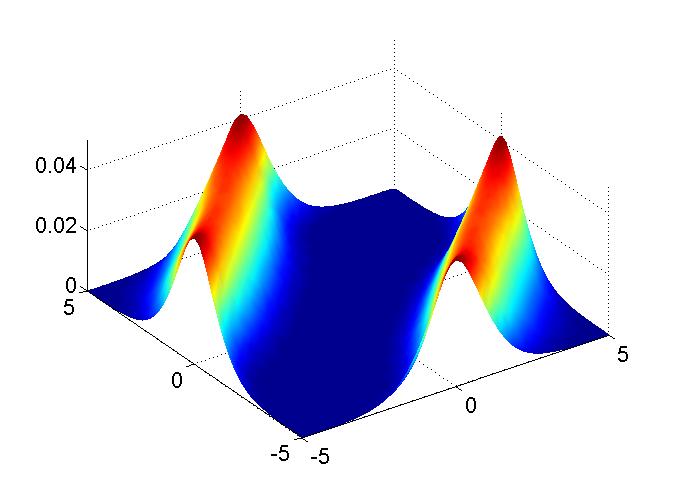}}
\put(20,330){\includegraphics[width=4.5cm]{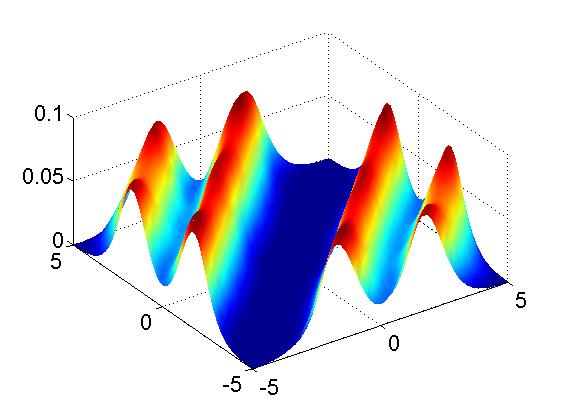}}
\put(140,330){\includegraphics[width=4.5cm]{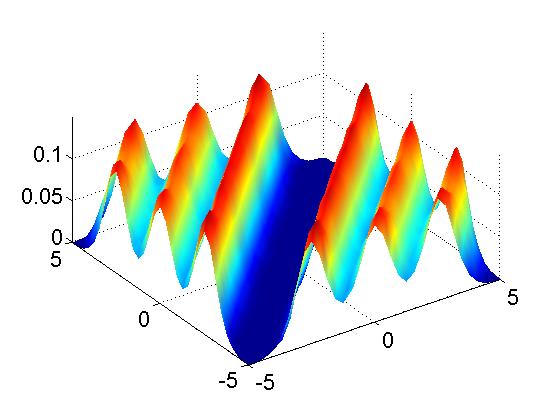}}

\put(-140,280){\makebox(2,2){$\alpha = 1$}}
\put(-100,230){\includegraphics[width=4.5cm]{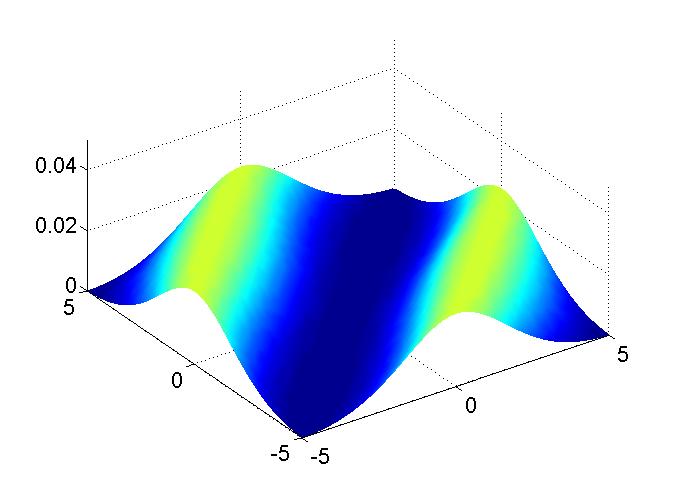}}
\put(20,230){\includegraphics[width=4.5cm]{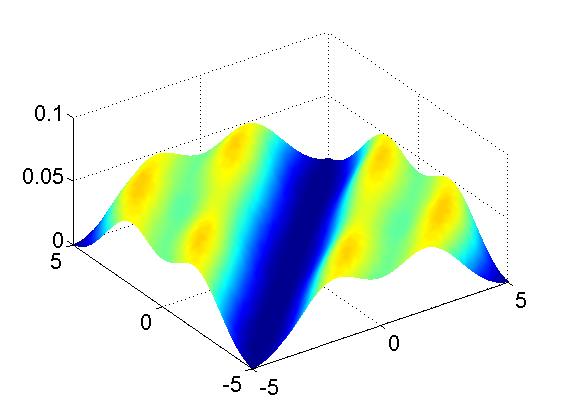}}
\put(140,230){\includegraphics[width=4.5cm]{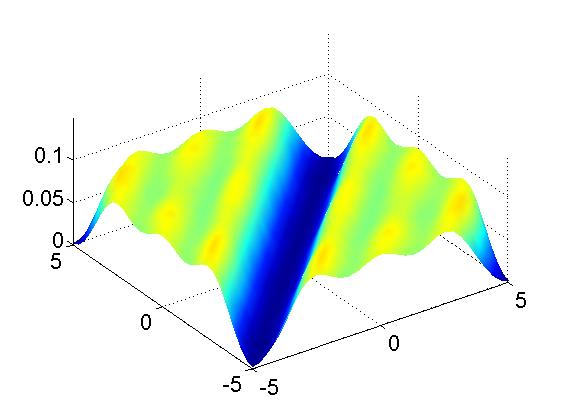}}

\put(-140,180){\makebox(2,2){$\alpha = 10$}}
\put(-100,130){\includegraphics[width=4.5cm]{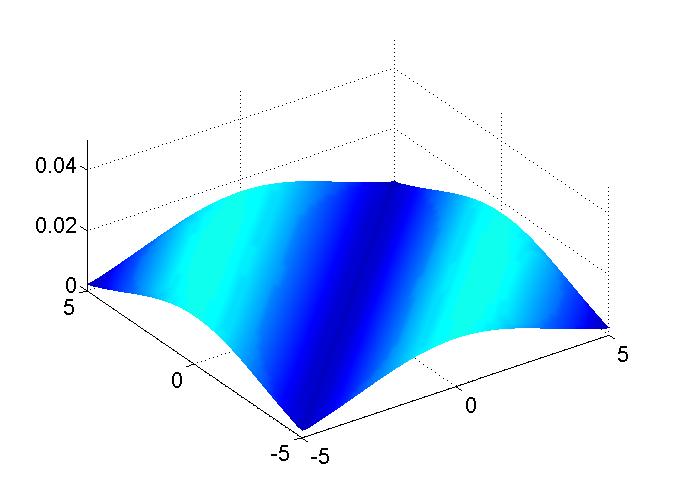}}
\put(20,130){\includegraphics[width=4.5cm]{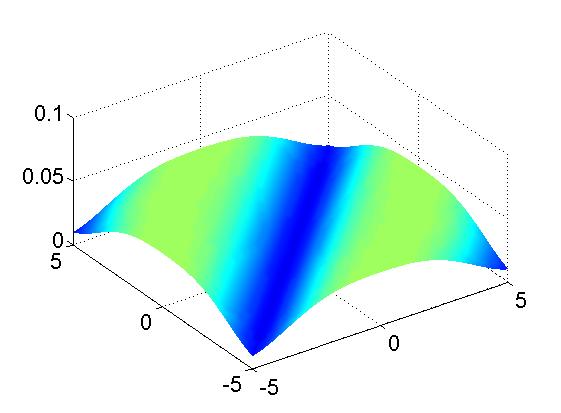}}
\put(140,130){\includegraphics[width=4.5cm]{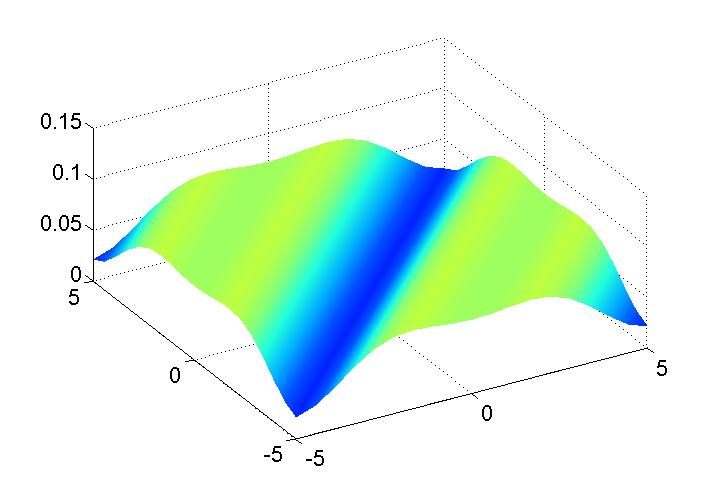}}

\put(-140,80){\makebox(2,2){$\alpha = 100$}}
\put(-100,30){\includegraphics[width=4.5cm]{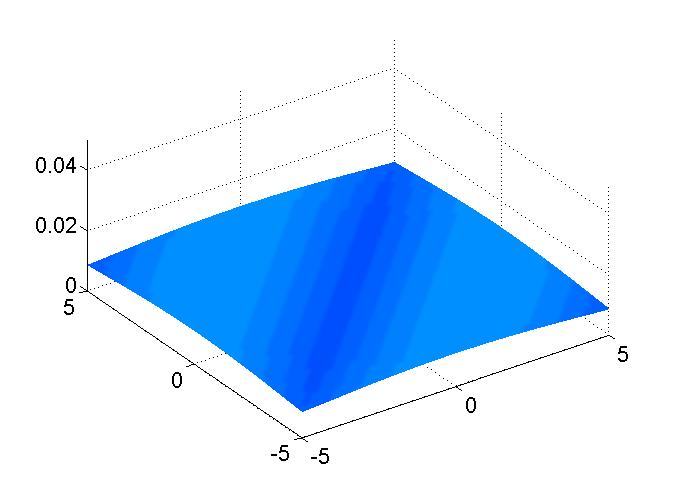}}
\put(20,30){\includegraphics[width=4.5cm]{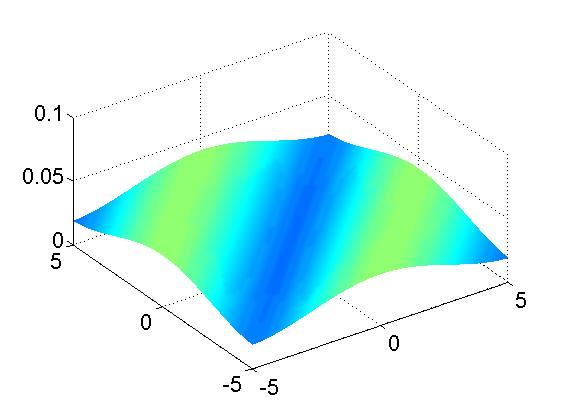}}
\put(140,30){\includegraphics[width=4.5cm]{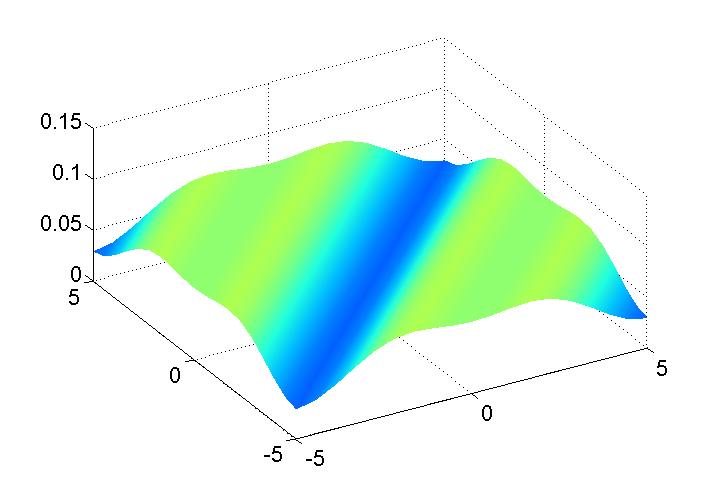}}

\end{picture}
\caption{(fermions) Pair densities with $\rho(x)\equiv\frac{N}{2L}\;$ (eqs.~\eqref{SPD},~\eqref{ANper}). Top row: SCE/optimal transport (based on exact results \cite{seidl99a}, see also \cite{cotar13a,colombo}).}
\label{fig-fermion-periodic}
\end{figure}

\begin{figure}[htbp]
\centering

\begin{picture}(100,580)
\put(-40,540){\makebox(2,2){$N = 2$}}
\put(80,540){\makebox(2,2){$N = 3$}}
\put(200,540){\makebox(2,2){$N = 4$}}

\put(-140,480){\makebox(2,2){$\alpha = 0$}}
\put(-100,430){\includegraphics[width=4.5cm]{figs/2_sce_cos.jpg}}
\put(20,430){\includegraphics[width=4.5cm]{figs/3_sce_cos.jpg}}
\put(140,430){\includegraphics[width=4.5cm]{figs/4_sce_cos.jpg}}

\put(-140,380){\makebox(2,2){$\alpha = 0.1$}}
\put(-100,330){\includegraphics[width=4.5cm]{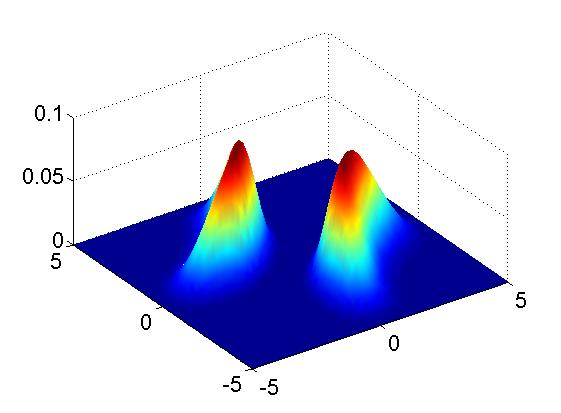}}
\put(20,330){\includegraphics[width=4.5cm]{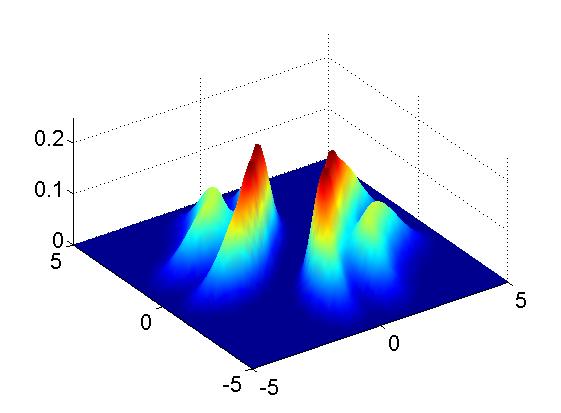}}
\put(140,330){\includegraphics[width=4.5cm]{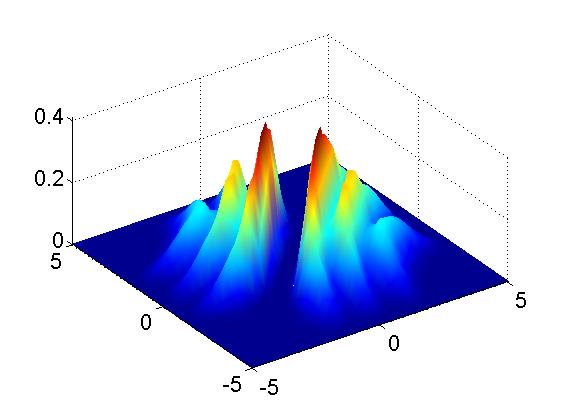}}

\put(-140,280){\makebox(2,2){$\alpha = 1$}}
\put(-100,230){\includegraphics[width=4.5cm]{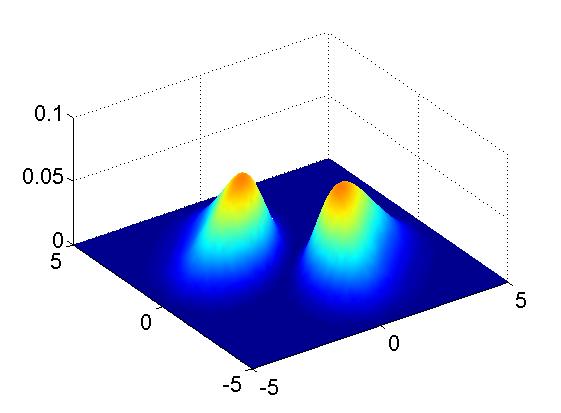}}
\put(20,230){\includegraphics[width=4.5cm]{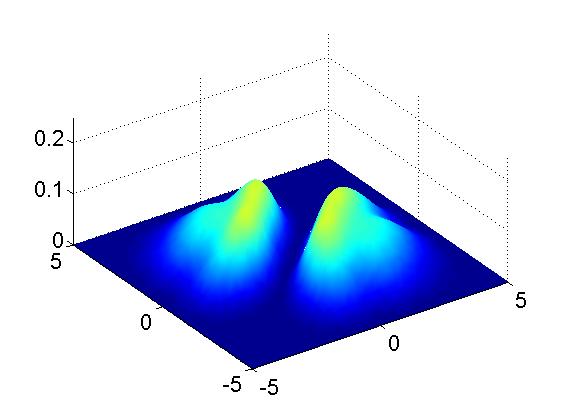}}
\put(140,230){\includegraphics[width=4.5cm]{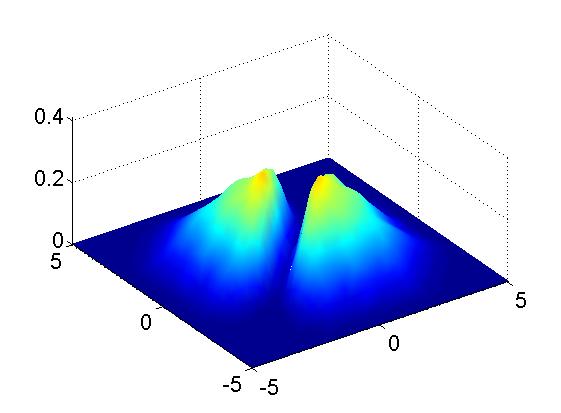}}

\put(-140,180){\makebox(2,2){$\alpha = 10$}}
\put(-100,130){\includegraphics[width=4.5cm]{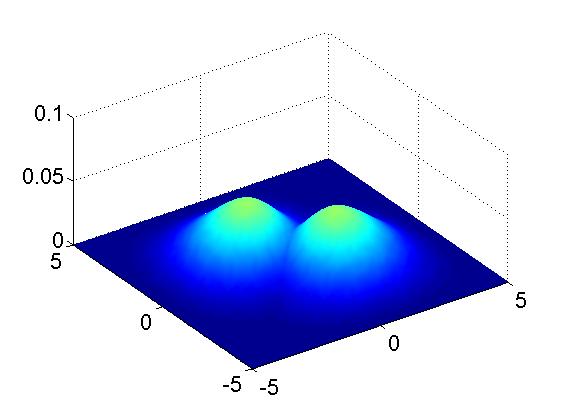}}
\put(20,130){\includegraphics[width=4.5cm]{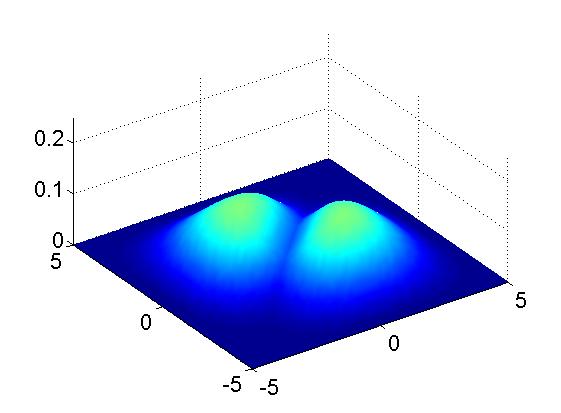}}
\put(140,130){\includegraphics[width=4.5cm]{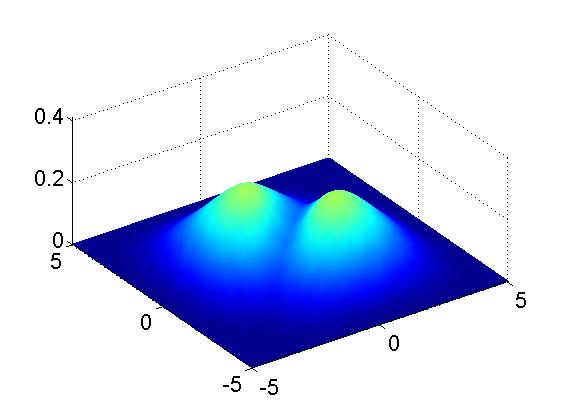}}

\put(-140,80){\makebox(2,2){$\alpha = 100$}}
\put(-100,30){\includegraphics[width=4.5cm]{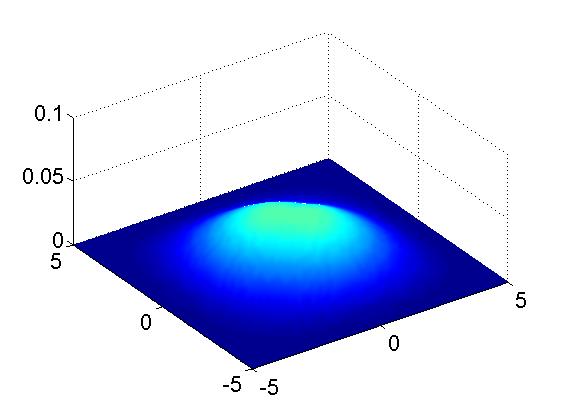}}
\put(20,30){\includegraphics[width=4.5cm]{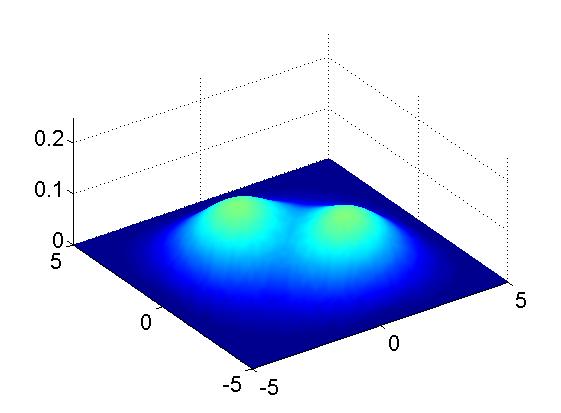}}
\put(140,30){\includegraphics[width=4.5cm]{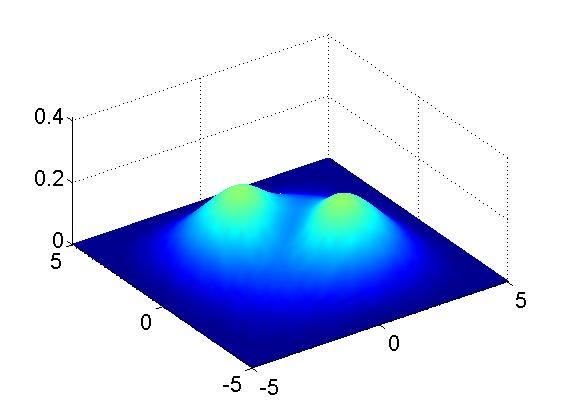}}

\end{picture}
\caption{(fermions) Pair densities with $\rho(x)=\frac{N}{2L}\left(1+\cos(\frac{\pi}{L}x)\right)\;$ (eqs.~\eqref{SPD},~\eqref{ANDir}).  Top row: SCE/optimal transport (based on exact results \cite{seidl99a}, see also \cite{cotar13a,colombo}).}
\label{fig-fermion-convex}
\end{figure}

First of all, when $N=2$, we have the same pair densities as those of bosons.
This is easy to understand since for two-particle systems, the two spin variables are always paired up, and the antisymmetry constraint does not affect the spatial variables.

We also find similar pair densities for bosons and fermions when $\alpha$ is small (e.g., $\alpha=0.1$).
In this case, the particles are strongly correlated to each other for both bosons and fermions,
and are always localized in different regions of space that have very little overlap.
Therefore, the pair densities are almost independent of the choice of the spin variables:
both the symmetric and antisymmetric choice give very similar spatial distributions,
and the particles do not sense very much whether they are fermions or bosons.
From the pictures, we can also draw some similar conclusions as those for bosons:
When $\alpha$ is small, the particle number can be recovered by counting the number of ridges of the pair densities. 
As $\alpha$ increases, the $2(N-1)$ ridges merge together.

A significant difference between bosons and fermions is that,
when $\alpha$ goes towards infinity, the pair densities of fermions do not become statistically independent if $N>2$, but are depleted near the diagonal $x=y$, a phenomenon known as ``exchange holes'' (see e.g. \cite{martin05}). 
As $\alpha$ increases, the effects of Coulomb repulsion get weaker and weaker and the Coulomb holes
are fading out, whilst the exchange holes take over.
For comparison, the theoretical $\rho_2$ as $\alpha\to\infty$ (for homogeneous $\rho$ with $N=4$) 
is plotted in Figure \ref{fig-slater}. It corresponds to a Hund's rule selection from the degenerate ground state of $T$ (see Theorem \ref{ThmB}), consisting of the orbitals $0\!\uparrow$, $0\!\downarrow$, $1\!\uparrow$, $(-1)\!\uparrow$ (in the notation \eqref{orbdef}).
We observe that it is extremely close to the numerically computed pair density at large $\alpha$ (shown for $\alpha=100$ in Fig.~\ref{fig-fermion-periodic}, bottom right panel).

\begin{figure}[ht]
\begin{minipage}[t]{0.5\linewidth}
\centering
\includegraphics[width=6.0cm]{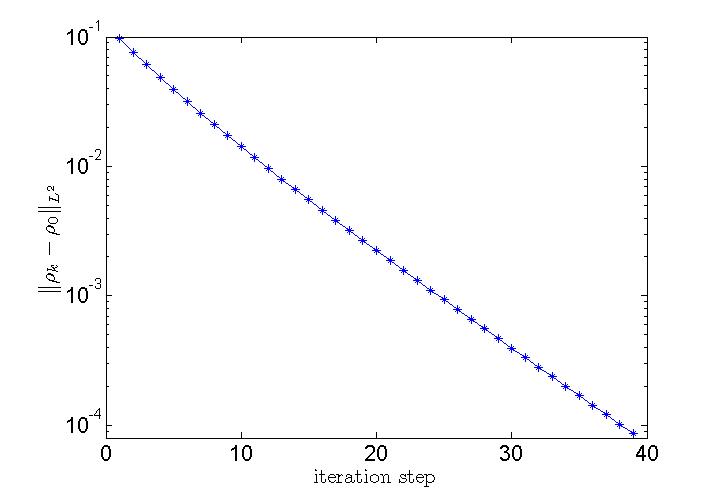}
\caption{Convergence curve of $\|\rho_k-\rho_0\|$ of Algorithm \ref{algorithm-newton} for $N=2$ and $\alpha=0.1$.}
\label{fig-newton}
\end{minipage}
\hskip 0.3cm~
\begin{minipage}[t]{0.5\linewidth}
\centering
\includegraphics[width=6.0cm]{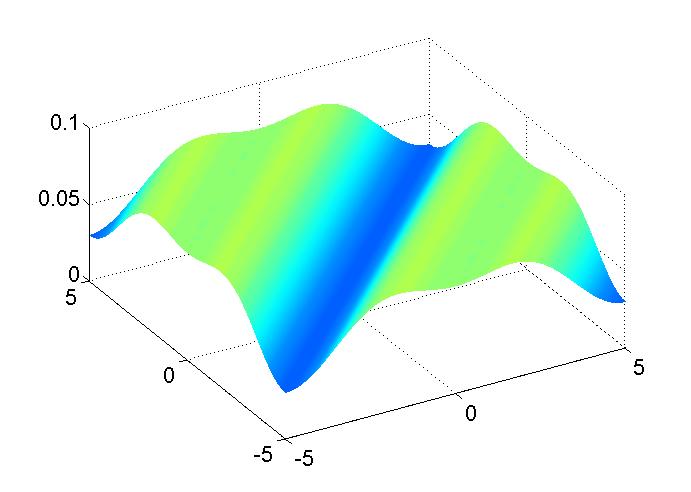}
\caption{Theoretical pair density as $\alpha\to\infty$, coming from the spin-polarized Slater determinant
$|0\!\uparrow 0\!\downarrow 1\!\uparrow (\!-\! 1)\!\uparrow\rangle$ (Theorem \ref{ThmB}).}
\label{fig-slater}
\end{minipage}
\end{figure}

The Lagrange multipliers for systems with 4 fermions are presented in Figure \ref{fig-lambda-fermion}.
In comparison with those of bosons, they also converge to a constant potential as $\alpha$ increases
for homogeneous systems, and have a steeper potential at the same value of $\alpha$ for inhomogeneous systems.

\begin{figure}[ht]
\centering
\includegraphics[height=4.5cm]{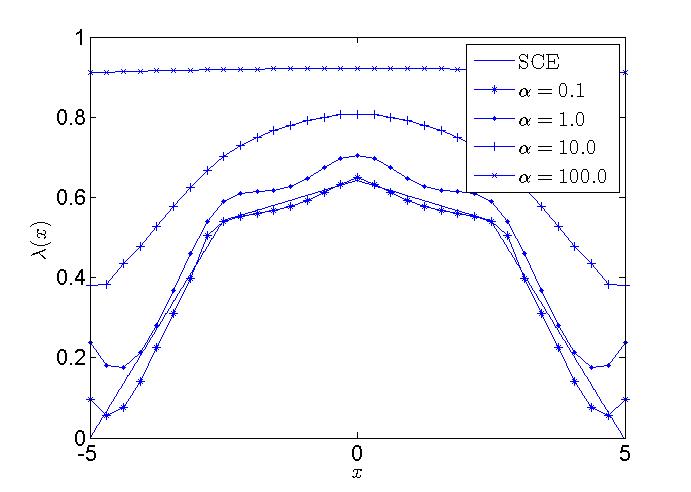}
\includegraphics[height=4.5cm]{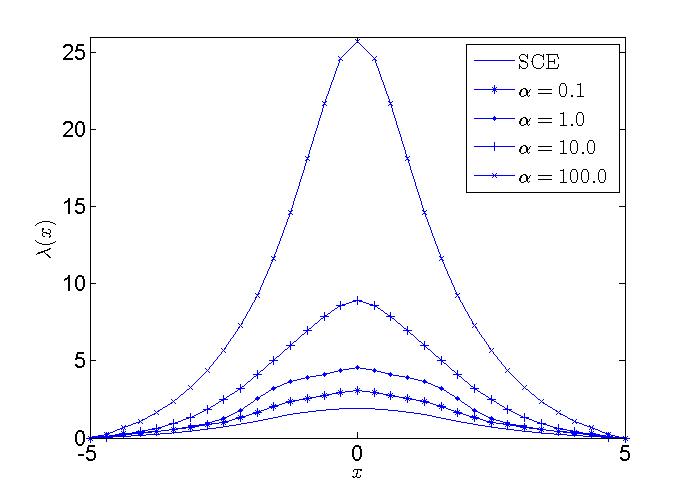}
\caption{The Lagrange multipliers $\lambda(x)$ for systems with 4 fermions.
Left: $\rho(x)\equiv\frac{N}{2L}$. Right: $\rho(x)=\frac{N}{2L}\left(1+\cos(\frac{\pi}{L}x)\right)$.}
\label{fig-lambda-fermion}

\end{figure}

From the numerical simulations in Section \ref{sec-boson} and \ref{sec-fermion}, we conclude
that the pair densities across the whole range of coupling constants are deformed versions of the two limit cases $\alpha=0$ and $\alpha=\infty$, with a slow and steady cross-over and without any additional effects appearing. The ``information'' in the pair densities for all $\alpha$ can somehow be recovered from just the two end values $\alpha=0$ and $\alpha=\infty$. By contrast, none of the end-value pair densities gives useful information about what happens at the other end. This lends theoretical support to the idea in \cite{Seidl2000} of two-end interpolation functionals. It should be very interesting to try to relate the specific functional proposed there to an underlying pair density model and compare to a theoretical adiabatic connection curve.

Let us also emphasize the strong pair density localization without single-particle localization and the strong $N$-dependence. The latter is missed completely by the local density approximation (LDA), which is based on uniform electron gas theory ($N=\infty$). As regards the former effect, it is not clear (at least to the authors) to what extent it is accounted for by {\it any} of the models used in practice. For homogeneous $\rho$ and large $N$, the true pair density profile is captured implicitly through use of the LDA correlation energy; but we do not know what happens implicitly to the pair density when applying, say, the LDA or gradient corrections or a fraction of exact exchange to a typical inhomogeneous $\rho$. See Section \ref{sec-ansatz} for further discussion.

To end this section, we summarize some of the characteristics of the pair densities in Table \ref{table-sum}.

\begin{table}[ht]
\centering
\begin{tabular}{|c|c|c|}\hline
       pair densities   & {\bf bosons}    &  {\bf fermions}        \\\hline
       $\alpha=0$       & SCE         & SCE               \\\hline
       $\alpha=\infty$  & statistical independence    &  single slater determinant      \\\hline
       \multirow{2}{*}{Coulomb holes}    & yes         & yes                \\
       & fade out as $\alpha$ increases    & fade out as $\alpha$ increases \\\hline
       \multirow{2}{*}{exchange holes}   & no          & yes                \\
       &    &   fade out as $\alpha$ decreases   \\\hline
       $N$-dependence       & yes         & yes               \\\hline
\end{tabular}
\caption{Summary of the characteristics of pair densities.}
\label{table-sum}
\end{table}

\section{Rigorous asymptotic results} \label{sec:asy}
The following asymptotic results in 1D support our numerical findings, and were used to test the correctness of our code. Results of this type are well-known in the physics literature (except perhaps those on ``selection rules'' which emerge in the non-interacting limit in case of orbital degeneracies) and the novelty consists only in providing rigorous proofs. The reader is reminded that on the rigorous level very little is known about exact DFT and even basic issues as raised in \cite{lieb83} such as continuity of the HK functional remain open. 

Recall from Section \ref{sec-hk} the scaled density-to-pair-density map $\rho_{2,\alpha}[\rho]=D_{\alpha^{-1}}\rho_2[D_\alpha\rho]$, where $\rho_2$ is the original density-to-pair-density map. 

\begin{theorem} \label{ThmA} (Small $\alpha$ limit, 1D systems) Let $\rho$ be any single-particle density on $\R$ belonging to the class $\RN$ (see \eqref{RN}), $N\ge 2$.
Assume that $\rho>0$ in some finite or infinite interval $(a,b)$, and $\rho=0$ outside. Let 
$T_1,..,T_N$ be the following optimal transport maps found in \cite{seidl99a} and justified rigorously in \cite{cotar13a} for $N=2$ and in \cite{colombo} for general $N$: let $d_0=a<d_1<...<d_{N-1}<d_N=b$ be the partition of $(a,b)$ into $N$ sub-intervals of equal mass, i.e. 
$$    \int_{d_{i-1}}^{d_{i}} \rho = 1 \;\;\;(i=1,...,N),      $$
and let $T_2$ be the unique $\rho$-preserving map which monotonically maps each interval
$[d_{i-1},d_i]$ (i=1,...,N-1) to the next interval $[d_i,d_{i+1}]$ and the last interval $[d_{N-1},d_N]$ to the first, $[d_0,d_1]$. Let $T_1(x)=x$, and let $T_j$, $j=3,..,N$, be the (j-1)-fold composition of $T_2$ with itself. (See Figures \ref{fig-rho_per}, \ref{fig-rho_cos}.) Then 
$$
    \lim_{\alpha\to 0} \rho_{2,\alpha}[\rho] = \frac12 \sum_{j=2}^N \frac{\rho(x)}
    {\sqrt{1 + T_j'(x)^2}} ds \Big|_{y=T_j(x)},
$$
the limit being in the sense of weak* convergence of Radon measures. 
\end{theorem}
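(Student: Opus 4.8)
The plan is to recognize the family of constrained problems $\min\{\alpha T[\Psi]+V_{ee}[\Psi]:\Psi\in\AN,\ \Psi\mapsto\rho\}$ as $\Gamma$-convergent, as $\alpha\to 0^+$, to the multi-marginal Kantorovich problem \eqref{Kant} with one-body marginal $\rho/N$, and then to invoke the fact that in one dimension this limiting problem has a \emph{unique} minimizer, namely the Monge plan \eqref{Monge} built from the cyclic maps $T_1,\dots,T_N$ (this uniqueness, together with ``Monge $=$ Kantorovich'' in 1D, is what \cite{colombo} provides; see also \cite{Ghoussoub}). Granting this, I would take any minimizer $\Psi_\alpha$ of $\alpha T+V_{ee}$ with $\Psi_\alpha\mapsto\rho$ (these exist by Lieb's theorem), deduce that its $N$-point density $\rho_N^{\Psi_\alpha}$ converges weak$^*$ to the SCE $N$-point density, and then push forward to the two-coordinate marginal — a weak$^*$-continuous operation on the tight family at hand — to get $\rho_2^{\Psi_\alpha}\rightharpoonup\rho_2^{SCE}[\rho]$. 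The final identification of $\rho_2^{SCE}[\rho]$ with the stated formula is already done in \eqref{rho2geo}, so no further work is needed there.

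For the \emph{compactness and lower bound}: every $\rho_N^{\Psi_\alpha}$ is (up to the factor $N$) a probability measure whose one-body marginals are all $\rho$, so the family is tight and weak$^*$-precompact; I would let $\gamma$ be the limit along some sequence $\alpha_k\to 0$. Weak$^*$ limits preserve marginals, so $\gamma$ is admissible in \eqref{Kant}. Since $\sum_{i<j}v_{ee}(x_i-x_j)$ is nonnegative and lower semicontinuous, the cost functional $\gamma\mapsto\int\sum_{i<j}v_{ee}\,d\gamma$ is weak$^*$ lower semicontinuous, giving
$$
\int\sum_{i<j}v_{ee}(x_i-x_j)\,d\gamma\ \le\ \liminf_k V_{ee}[\Psi_{\alpha_k}]\ \le\ \liminf_k\big(\alpha_k T[\Psi_{\alpha_k}]+V_{ee}[\Psi_{\alpha_k}]\big)\ =\ \liminf_k F_{\alpha_k}[\rho].
$$

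It then remains to prove the \emph{upper bound} $\limsup_{\alpha\to 0}F_\alpha[\rho]\le C_K$, where $C_K$ denotes the value of \eqref{Kant}: combined with the previous display and with admissibility of $\gamma$ (which forces $\int\sum_{i<j}v_{ee}\,d\gamma\ge C_K$), this shows $\gamma$ is optimal for \eqref{Kant}, hence $\gamma$ is the SCE plan \eqref{Monge} by 1D uniqueness; since the limit does not depend on the chosen sequence, the whole family $\rho_N^{\Psi_\alpha}$ converges to it and the argument closes. For the upper bound I would build a trial wavefunction by mollifying the singular SCE $N$-point density $\gamma^{SCE}$ at scale $\epsilon$, restoring the exact marginal $\rho$ by a measure-preserving rearrangement, and antisymmetrizing with the help of fine spatial oscillations and/or the spin label in such a way as not to raise $V_{ee}$ appreciably (using here that the SCE energy is insensitive to the symmetry type, as already visible numerically in Section~\ref{sec-fermion}). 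This yields $\Psi^\epsilon\in\AN$ with $\Psi^\epsilon\mapsto\rho$, finite kinetic energy, and $V_{ee}[\Psi^\epsilon]\to C_K$ as $\epsilon\to 0$ (by uniform continuity of $v_{ee}$ off the diagonal, with a separate short estimate near the diagonal in the Coulomb case); taking $\epsilon=\epsilon(\alpha)\to 0$ slowly gives $F_\alpha[\rho]\le\alpha T[\Psi^{\epsilon(\alpha)}]+V_{ee}[\Psi^{\epsilon(\alpha)}]\to C_K$.

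The step I expect to be the main obstacle is this trial-function construction for the upper bound: mollifying the concentrated transport plan while simultaneously (i) keeping the one-body marginal exactly $\rho$, (ii) retaining finite kinetic energy, and (iii) honouring the antisymmetry constraint without inflating the interaction energy. By contrast, uniqueness of the optimal one-dimensional multi-marginal plan is quoted from \cite{colombo}, and the compactness and lower-semicontinuity half of the argument is routine.
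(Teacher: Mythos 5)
Your argument follows the same skeleton as the paper's proof: take minimizers $\Psi_\alpha$, pass to their $N$-point densities, use tightness and preservation of marginals to extract a weak* limit $\gamma$, use nonnegativity and lower semicontinuity of $\sum_{i<j}v_{ee}$ to get the lower bound, conclude optimality of $\gamma$ for \eqref{Kant}, invoke the one-dimensional uniqueness of the optimal plan (Seidl's maps, made rigorous in \cite{colombo}) to upgrade subsequential to full convergence, pass to the two-point marginal, and identify the limit via \eqref{rho2geo}. The one place where you genuinely diverge is the upper bound $\limsup_{\alpha\to0}F_\alpha[\rho]\le C_K$: the paper does not construct trial states at all, but simply quotes from \cite{cotar13a} that $\lim_{\alpha\to0}\inf_{\Psi\mapsto\rho}\bigl(\alpha T[\Psi]+V_{ee}[\Psi]\bigr)$ equals the Kantorovich minimum, whereas you propose to re-derive this by mollifying the singular SCE plan, restoring the marginal, and antisymmetrizing. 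Be aware that this step is not a routine add-on: smoothing a concentrated transport plan while (i) keeping the one-body marginal exactly $\rho$, (ii) controlling the kinetic energy, and (iii) imposing antisymmetry without inflating $V_{ee}$ near the diagonal is precisely the technical core of \cite{cotar13a} (done there for $N=2$), and your paragraph sketches it without supplying the construction; so as written your proof has a gap exactly at the step you flag, which you can close either by carrying out that trial-state construction in detail or, as the paper does, by citing the known result that the constrained-search values converge to the optimal-transport value. The payoff of your route, if completed, would be a self-contained proof independent of that external input; the payoff of the paper's route is brevity.
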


\begin{proof} Let $\Psi_\alpha$ be a minimizer of the variational problem in \eqref{SPD}, and let $\rho_{N,\alpha}(x_1,..,x_N)=\sum_{s_1,..,s_N\in\Z_2} |\Psi_\alpha(x_1,s_1,..,x_N,s_N)|^2$. Since the $\rho_{N,\alpha}$ have marginal $\rho$, they are a tight family of probability measures (to show this one proceeds analogously to the proof of a similar result in the appendix of \cite{lieb83}) and hence possess a subsequence (see \cite{cotar13a}), again denoted $\rho_{N,\alpha}$, converging weak* to a probability measure $\rho_{N,*}$ as $\alpha\to 0$. 
By standard arguments $\rho_{N,*}$ has one-body marginal $\rho$. Moreover, by dropping the kinetic energy from \eqref{SPD} and using the lower semicontinuity of the interaction energy under weak* convergence, and letting $\hat{V}_{ee} = \sum_{i<j}c(x_i-x_j)$,
$$
  \lim_{\alpha\to 0} \inf_{\Psi\mapsto\rho}(\alpha T[\Psi] + V_{ee}[\Psi]) = \lim_{\alpha\to 0} \Bigl(\alpha T[\Psi_\alpha] + \int_{\R^N} 
   \hat{V}_{ee}\rho_{N,\alpha}\Bigr) \ge \int_{\R^N} \hat{V}_{ee} d\rho_{N,*}. 
$$
On the other hand, as proved in \cite{cotar13a} the left hand side equals $\min_{\rho_N} \int \hat{V}_{ee} d\rho_N$, the minimum being over symmetric probability measures on $\R^N$ with marginal $\rho$. It follows that $\rho_{N,*}$ is a minimizer of the latter problem. 
By the results of \cite{seidl99a} as made rigorous in \cite{colombo}, the minimizer of the latter problem is unique and given by \eqref{Monge}, with the above explicit maps $T_1,..,T_N$. The uniqueness implies that the whole sequence $\rho_{N,\alpha}$ converges weak* to $\rho_{N,*}$. Next, this latter convergence implies weak* convergence of the associated two-body density $\rho_{2}^{\Psi_\alpha}$ to the pair density ${N\choose 2}\int \rho_{N,*}dx_2...dx_N$ of $\rho_{N,*}$. The assertion now follows from our result \eqref{rho2geo}. 
\end{proof}
Here and below, we denote the eigenfunctions of the Laplacian on $[-L,L]$ with periodic boundary conditions by 
\begin{equation} \label{orbdef}
	|k\rangle(x) := \frac{1}{\sqrt{2L}} e^{ik\frac{\pi}{L} x} \;\;\; (k\in\Z)
\end{equation}
and the associated spin-orbitals $|k\rangle(x)\delta_{\uparrow}(s)$ and $k\rangle(x)\delta_{\downarrow}(s)$ by $|k\uparrow\rangle$, $|k\downarrow\rangle$. 
\begin{theorem} \label{ThmB} (Large $\alpha$ limit, homogeneous 1D systems) Let $\rho(x)=\bar{\rho}=N/(2L)$ be the homogeneous density on $[-L,L]$, and
let $\rho_{2,\alpha}[\rho]=D_{\alpha^{-1}}\rho_2[D_\alpha\rho]$ be the scaled density-to-pair-density map for periodic boundary conditions on $[-L,L]$ (\eqref{SPD} with $\AN$ given by \eqref{ANper}). Let $\Psi$ be the Slater determinant built from the first $N$ orbitals $\phi_1,\dots,\phi_N$ of the (partially spin-polarized) sequence $|0\uparrow\rangle$, $|0\downarrow\rangle$, $|1\uparrow\rangle$, $|(-1)\uparrow\rangle$, $|1\downarrow\rangle$, 
$|(-1)\downarrow\rangle$, $|2\uparrow\rangle$, $|(-2)\uparrow\rangle$, $|2\downarrow\rangle$, $|(-2)\downarrow\rangle$, $\dots$. 
Then, letting $z=\frac{\pi}{L}(x-y)$, 
\begin{eqnarray*}
   \lim_{\alpha\to\infty} \rho_{2,\alpha}[\rho](x,y) = \rho_2^{\Psi}(x,y) =  
   \begin{cases} 
   \displaystyle      \frac12 \rhobar^2 
        - \frac{1}{(2L)^2} \frac{\sin^2(\frac{N}{4}z)}{\sin^2(\frac12 z)}, 
                               & N\equiv 2 \mod 4 \\[1ex]
    \displaystyle      \frac12 \rhobar^2 
        -\frac12 \frac{1}{(2L)^2} 
\frac{\sin^2(\frac{N-1}{4}z) + \sin^2(\frac{N+1}{4} z)}{\sin^2(\frac12 z)},  
                               & N\equiv 1 \mbox{ or } 3 \mod 4 \\[1ex]
    \displaystyle      \frac12 \rhobar^2 
        -\frac12 \frac{1}{(2L)^2} 
\frac{\sin^2(\frac{N-2}{4}z) + \sin^2(\frac{N+2}{4} z)}{\sin^2(\frac12 z)},   
                               & N\equiv 0 \mod 4, 
              \end{cases}
\end{eqnarray*}
the limit being in the sense of strong convergence in $L^1([-L,L]^2)$. 
\end{theorem}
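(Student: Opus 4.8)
The plan is to combine a $\Gamma$-convergence argument for the $\alpha\to\infty$ limit with an explicit evaluation of the limiting object. After rescaling the functional, \eqref{SPD} reads $\min\{T[\Psi]+\alpha^{-1}V_{ee}[\Psi]:\Psi\in\AN,\ \Psi\mapsto\bar\rho\}$; let $\Psi_\alpha$ be a minimizer. The zeroth-order step identifies $\min_{\Psi\mapsto\bar\rho}T[\Psi]$ and its minimizers. Working with the one-body reduced density matrix $\gamma_\Psi$ and the momentum occupations $p_{k,s}=\langle k\,s|\gamma_\Psi|k\,s\rangle\in[0,1]$, which satisfy $\sum_{k,s}p_{k,s}=N$ and $\tfrac12\sum_{k,s}(k\pi/L)^2\,p_{k,s}=T[\Psi]$, the Aufbau principle gives $T[\Psi]\ge T_{\rm KS}[\bar\rho]$, and equality forces $\gamma_\Psi$ to have the low-momentum shells fully occupied, the high ones empty, and --- when $N\not\equiv 2\bmod 4$ --- exactly one shell $H_F=\mathrm{span}\{|M\!\uparrow\rangle,|M\!\downarrow\rangle,|\!-\!M\!\uparrow\rangle,|\!-\!M\!\downarrow\rangle\}$ partially occupied by $\nu$ electrons, with $\nu=2,1,3$ according as $N\equiv0,3,1\bmod 4$; moreover any such $\Psi$ factorizes as $\Psi=\Psi_{\rm core}\wedge\Phi$ with $\Phi\in\bigwedge^{\!\nu}H_F$. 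For $N\equiv 2\bmod 4$ all occupied shells are full, $\gamma_\Psi$ is a rank-$N$ projection, so $\Psi$ is the unique Slater determinant of the filled Fermi sea and there is nothing left to optimize. Call $\mathcal M$ the set of these minimizers.

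Second, pass to the limit. Since $T[\Psi_\alpha]$ stays bounded, $\{\Psi_\alpha\}$ is bounded in $H^1$ of the compact periodic domain, hence a subsequence converges strongly in $L^2$ (Rellich) to some $\Psi_*$; $L^2$-convergence preserves the normalization and the density constraint, and lower semicontinuity of $T$ together with $T\ge T_{\rm KS}$ places $\Psi_*$ in $\mathcal M$. Comparing $\Psi_\alpha$ with a $V_{ee}$-minimizer over $\mathcal M$ and using $T[\Psi_\alpha]\ge T_{\rm KS}[\bar\rho]$ gives $V_{ee}[\Psi_\alpha]\le\min_{\mathcal M}V_{ee}$; as $c$ is bounded, $V_{ee}$ is continuous under $L^2$-convergence, so $V_{ee}[\Psi_*]=\lim V_{ee}[\Psi_\alpha]=\min_{\mathcal M}V_{ee}$. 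Finally strong $L^2$-convergence yields $\rho_2^{\Psi_\alpha}\to\rho_2^{\Psi_*}$ in $L^1([-L,L]^2)$, exactly the asserted mode of convergence. The whole statement therefore reduces to showing that \emph{every} minimizer of $V_{ee}$ on $\mathcal M$ has the pair density displayed in the theorem.

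Third, carry out the selection in $\mathcal M$ for $N\not\equiv 2\bmod4$. With the core orbitals fixed, $V_{ee}[\Psi_{\rm core}\wedge\Phi]$ equals a $\Phi$-independent constant plus the core--$\Phi$ interaction plus the self-interaction of the $\nu$ Fermi particles. Expanding everything in the plane waves $|k\rangle$ and using that the one-dimensional effective potential $c$ is positive-definite --- its Fourier transform is a positive multiple of a Macdonald function $K_0$, since integrating the three-dimensional Coulomb kernel over the transverse Gaussian profile leaves $\int_{\mathbb R}e^{-ikx}/\sqrt{x^2+r^2}\,dx=2K_0(|k|r)>0$ --- one finds that, because the core is spin-balanced, the core--$\Phi$ exchange does not depend on the spin arrangement of the Fermi particles, whereas their mutual exchange energy is strictly lowered by aligning their spins. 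Hence $V_{ee}$ is minimized exactly at the maximally spin-polarized occupation of $H_F$ permitted by the Pauli principle, i.e. at the Slater determinant built from the stated ordering of orbitals; and for $\nu=1$ one checks directly that the density constraint already pins the pair density of $\Psi_{\rm core}\wedge\phi$ independently of the Fermi orbital $\phi$. To rule out entangled (non-Slater) minimizers $\Phi\in\bigwedge^{\!\nu}H_F$ one uses that $V_{ee}$ is linear in the $N$-body density and runs an extreme-point argument on the finite-dimensional convex set of $N$-body densities of $\Psi\in\mathcal M$, whose extreme points are the Slater determinants above; all minimizers therefore share the spin-resolved densities $\rho_\uparrow=N_\uparrow/(2L)$, $\rho_\downarrow=N_\downarrow/(2L)$ with $(N_\uparrow,N_\downarrow)=(\tfrac N2,\tfrac N2)$, $(\tfrac{N+1}2,\tfrac{N-1}2)$, $(\tfrac N2+1,\tfrac N2-1)$ in the residue classes $N\equiv2$, $N$ odd, $N\equiv0\bmod4$, hence the same pair density.

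It remains to evaluate that pair density. For a Slater determinant of plane-wave spin-orbitals, $\rho_2^\Psi(x,y)=\tfrac12\bar\rho^2-\tfrac12\bigl(|\gamma_\uparrow(x,y)|^2+|\gamma_\downarrow(x,y)|^2\bigr)$ with $\gamma_s(x,y)=\tfrac1{2L}\sum_{k\in S_s}e^{ik\frac{\pi}{L}(x-y)}$; the stated ordering fills each $S_s$ as a block of $N_s$ consecutive integers centred at the origin (up to one unit), so the geometric sum gives $|\gamma_s(x,y)|^2=\tfrac1{(2L)^2}\,\sin^2(\tfrac{N_s}{2}z)/\sin^2(\tfrac z2)$ with $z=\tfrac{\pi}{L}(x-y)$, and substituting the three values of $(N_\uparrow,N_\downarrow)$ produces the three formulae. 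I expect the genuine obstacle to be the third step: proving the Hund's-rule selection rigorously for this particular $c$ --- in particular the positivity of its Fourier coefficients --- and, more delicately, excluding entangled wavefunctions in the degenerate manifold $\mathcal M$ that might conceivably lower $V_{ee}$ or change the pair density; the limit passage and the final computation are routine.
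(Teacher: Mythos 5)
Your overall architecture (identify the kinetic-energy minimizers $\mathcal M$, select among them by $V_{ee}$ via a compactness/comparison argument, then evaluate the pair density of the selected states) parallels the paper's proof, and your Steps 1--2 plus the final Dirichlet-kernel computation are sound; indeed the comparison argument $T[\Psi_\alpha]\le T_{\rm KS}+\alpha^{-1}V_{ee}[\Psi_0]$, $V_{ee}[\Psi_\alpha]\le\min_{\mathcal M}V_{ee}$ is a clean rigorous substitute for the paper's appeal to degenerate first-order perturbation theory. The genuine gap is in Step 3, i.e.\ precisely at the point the paper calls the ``tedious calculation'': you must show that \emph{every} $V_{ee}$-minimizer in $\mathcal M$ has the displayed pair density, and your argument for this fails in two ways. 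First, the extreme-point argument is invalid: the set of pure-state $N$-body densities $\{|\Psi|^2:\Psi\in\mathcal M\}$ is not convex (convex combinations are mixed-state densities), and even over its convex hull a linear-functional argument would only locate the minimal \emph{value} at some determinant, not characterize \emph{all} minimizers -- and the limit point $\Psi_*$ you extracted is an arbitrary minimizer. Second, the conclusion ``all minimizers share the spin-resolved densities $(N_\uparrow,N_\downarrow)$, hence the same pair density'' is false on both counts. For $N\equiv 0\bmod 4$ the minimizing set is the whole Hund triplet subject to the density constraint (eq.\ \eqref{N0} in the paper); its $S_z=0$ member $\tfrac1{\sqrt2}(|\dots K\!\uparrow,(-K)\!\downarrow\rangle+|\dots K\!\downarrow,(-K)\!\uparrow\rangle)$ satisfies the constraint but has \emph{balanced} spin densities, not $(\tfrac N2+1,\tfrac N2-1)$; similarly for odd $N$ the minimizing set is the nontrivial manifold \eqref{S0'} of superpositions, not the single determinant from the stated ordering. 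Moreover ``equal spin-resolved one-body densities $\Rightarrow$ equal pair density'' is a non sequitur in general. The uniqueness of the limiting $\rho_2$ across this degenerate, constrained manifold -- which the paper itself highlights as the remarkable content of the theorem -- is exactly what must be verified by explicitly parametrizing $S'_0$ and computing that the cross terms leave the spin-summed pair density coefficient-independent; your proposal asserts it but does not prove it.

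A smaller point, which you correctly flag yourself: the Hund's-rule selection for $N\equiv0\bmod4$ requires positivity of the specific exchange integral of $c$ on the box $[-L,L]$, not just positivity of the whole-line Fourier transform; this can be repaired (the box integral is the Fourier transform of $c$ against a triangular window, and the window's transform is a nonnegative Fej\'er-type kernel, so positivity of $\hat c$ suffices), but it needs to be said. With that fixed and, more importantly, with the degenerate-manifold pair-density computation carried out (for even and odd $N$ separately, as in \eqref{S0'}--\eqref{N0}), your route would close.
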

\begin{proof} 
We first ignore the constraint $\Psi\mapsto\rho$. Let $X_0$ be the ground state of $\hat{T}=-\frac12\Delta$ on $\AN$, let $P_0$ be the orthogonal projector from $L^2$ 
onto $X_0$, let $X'_0$ be the lowest eigenspace of $P_0 \hat{V}_{ee}P_0$ within $X'_0$ (note that $X'_0=X_0$ if $X_0$ is one-dimensional), and let 
$S'_0=\{\Psi\in X'_0 \, : \, \Psi\mapsto\rho\}$. By degenerate first-order perturbation theory,
together with the fact that by the explicit description below $S'_0$ is nonempty, 
\begin{equation}
   \lim_{\alpha\to\infty} \{ \Psi\in\AN \, | \, \Psi \mbox{ minimizes }T+\mbox{ $\frac{1}{\alpha}$} \mbox{ s/to } \Psi\mapsto\rho\} \subseteq S'_0,
\end{equation}
the limit being in the sense of strong $L^2$ convergence. It follows that the set of pair densities $\rho_{2,\alpha}[\rho]$ satisfies $\lim_{\alpha\to\infty}\rho_{2,\alpha}[\rho]\subseteq\{\rho_2^{\Psi}\, : \, \Psi\in S'_0\}$, 
the limit being in the sense of strong $L^1$ convergence (note that the map $\Psi\mapsto\rho_2$ 
is continuous from $L^2(([-L,L]\times\Z_2)^N)$ to $L^1([-L,L]^2)$).  
To complete the proof of the theorem, we need to understand $S'_0$ explicitly. The ground state $X_0$ of $-\frac12\Delta$ on $\AN$ is given by
\begin{equation}
   \mbox{Span} \; |0^2,1^2,(-1)^2,..,K^2,(-K)^2\rangle 
                                \;\;\; \mbox{ if }N = 2\,\;\mbox{mod}\, 4,\, K=\mbox{$\frac{N-2}{4}$}, \label{N2}
\end{equation}
and by 
\begin{eqnarray}
   & & \mbox{Span} \{ |0^2,1^2,(-1)^2,..,(K\! -\! 1)^2,(-(K\! -\! 1))^2,a_1,..,a_d\rangle \, :   \, a_1,..,a_d = \mbox{ any d} \nonumber \\
   & & \hspace*{1cm} 
     \mbox{orbitals from }K\!\uparrow,K\!\downarrow,-\! K\!\uparrow, -\! K\!\downarrow   \}  
                                   \;\;\;\mbox{otherwise}, \label{Nelse}
\end{eqnarray}
where the notation $k^2$ means that the orbitals $|k\!\uparrow\rangle$ and 
$|k\!\downarrow\rangle$ are both present in the Slater determinant and $d$ and $K$ are as follows: $d=3$ and $K=(N-1)/4$ if $N\equiv 1\mod 4$; $d=2$ and $K=N/4$ if $N\equiv 0 \mod 4$; and $d=1$ and $K=(N+1)/4$ if $N\equiv 3\mod 4$. For $N\neq 0 \mod 4$, $X'_0=X_0$. 
But for $N=0\mod 4$, aligning the two spins is favourable because it generates an additional exchange term. This is a manifestation of the empirical Hund's rule. Thus $X'_0$ is given by the subspace of $X_0$ with total spin $S^2=s(s+1)|_{s=1}$, 
\begin{align}
   X'_0 = \mbox{Span}  \{  & |0^2,1^2,(-1)^2,..,(K\! -\! 1)^2, -(K\! -\! 1)^2, K\uparrow,-K\uparrow \rangle, \nonumber \\
                           & |0^2,1^2,(-1)^2,..,(K\! -\! 1)^2, -(K\! -\! 1)^2, K\downarrow,-K\downarrow \rangle, \nonumber \\
                           & \mbox{$\frac{1}{\sqrt{2}}$}(|0^2,..,-(K\! -\! 1)^2, K\uparrow, (-\! K)\downarrow\rangle + |0^2,..,-(K\! -\! 1)^2, K\downarrow, (-\! K)\uparrow\rangle ) \}  \nonumber \\
                           & \mbox{ if }N=0\;{mod}\, 4. \label{N0}
\end{align}
The three states above are the canonical basis states with $S_3=1$, $-1$, and $0$. 

We now take into account the constraint $\Psi\mapsto\rho$, and determine $S'_0$. 
For even $N$, $S'_0$ is the sphere of unit vectors in $X'_0$. For odd $N$, we claim that
\begin{equation}\label{S0'}
   S'_0 = \{\alpha\Psi_1 + \beta\Psi_2 + \gamma\Psi_3 + \delta\Psi_4 \, : \, 
          |\alpha|^2+|\beta|^2+|\gamma|^2+|\delta|^2=1, \; 
          \begin{pmatrix} \alpha \\ \beta \end{pmatrix} \cdot
          \begin{pmatrix} \overline{\gamma} \\ \overline{\delta}\end{pmatrix} = 0\},
\end{equation}
where for $N\equiv 3\mod 4$ the $\Psi_1,..,\Psi_4$ correspond to the four choice of $a_1$ in \eqref{Nelse} in the listed order, and for $N\equiv 1\mod 4$ they correspond to the four choices $K\downarrow (-K)\downarrow K\uparrow$, $K\uparrow (-K)\uparrow K\downarrow$, $K\downarrow (-K)\downarrow (-K)\uparrow$, and $K\uparrow (-K)\uparrow (-K)\downarrow$ of $a_1,a_2,a_3$. 
For, say, the latter $N$'s, the constraint in \eqref{S0'} follows from the fact that
\begin{eqnarray*}
   \rho^{\alpha\Psi_1+...+\delta\Psi_4}(x) & = & const + |\alpha e^{iK\frac{\pi}{L}x} + \gamma e^{-iK\frac{\pi}{L}x}|^2 + |\beta e^{iK\frac{\pi}{L}x} + \delta e^{-iK\frac{\pi}{L}x}|^2 \\
    & = & const + 2 \, \mbox{Re}(\alpha\overline{\gamma} + \beta\overline{\delta}) \cos(2K\frac{\pi}{L}x) - 2 \, \mbox{Im}(\alpha\overline{\gamma} + \beta\overline{\delta}) \sin(2K\frac{\pi}{L}x)
\end{eqnarray*}
and the linear independence of the three functions $\cos$, $\sin$, and $1$. Finally, for each of the four cases of $N$'s, a tedious calculation gives the corresponding pair densities, as well as the fact that these are independent of the coefficients of the wavefunctions in $S'_0$.
\end{proof}

We find the uniqueness of the limiting $\rho_2$'s despite degeneracy of the limiting ground state wavefunctions remarkable. 

\section{An ansatz for homogeneous systems}\label{sec-ansatz}
\setcounter{equation}{0}\setcounter{figure}{0}

Based on the above numerical and asymptotic results, we shall now design a simple ansatz for
the pair density of homogeneous systems which is accurate across the whole range of coupling constants $\alpha$.

If we look at the pair density graphs for homogeneous systems from a specific angle (see
Figure \ref{fig-rotate} for example), we can observe that they are almost uniform functions of $x-y$. 
\begin{figure}[!htb]
\centering
\includegraphics[width=5.0cm]{figs/f4_1_per.jpg} \hskip 2.5cm
\includegraphics[width=5.0cm]{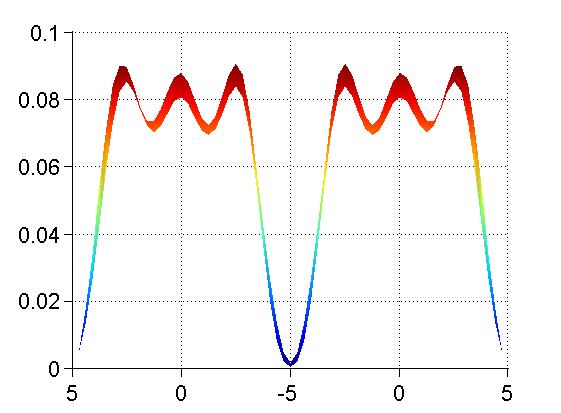}
\put(-185,45){\makebox(10,2){\small view from (-45,0)}}
\put(-185,35){\makebox(5,2){$\longrightarrow$}}
\caption{Rotating the pair density of a homogeneous system with 4 fermions with $\alpha=1$.
Left: view from angle (-35,50). Right: view from angle (-45,0).}
\label{fig-rotate}
\end{figure}
This together with the peaks on the graphs of the transport maps $T_i$ suggests an ansatz of the form
\begin{equation}\label{ansatz-shape}
\rho_2(x,y) \approx c_{\rm n} \left( \sum_{i=2}^{N} \Gamma(d_i(x,y)) \right), \mbox{ where }
d_i(x,y) = \min_{x'}|(x,y)-(x',T_i(x'))|.
\end{equation}
Here $c_{\rm n}$ is a normalization constant and
$\Gamma$ is some shape function. Note that, due to the explicit form of the $T_i$, the above $\rho_2$ depends only on $x-y$. A general formal asymptotic expansion at small $\alpha$ in the physics literature \cite{GVS} or alternatively, in our special case, an elementary calculation detailed below suggests to take $\Gamma$ to be a Gaussian. Thus we make the ansatz
\begin{eqnarray}\label{ansatz-gaussian}
\rho_2(x,y) \approx G^{\rm bos}_{\varsigma}(x,y) =c_{\rm n} \left( \sum_{i=2}^{N} \exp(-\frac{d_i(x,y)^2}{\varsigma^2}) \right)
\end{eqnarray}
where the parameter $\varsigma$ is allowed to depend on the coupling constant $\alpha$ and the particle number $N$. To obtain $\varsigma$, we minimize the $L^1$-error $\|\rho_2-G^{\rm bos}_{\varsigma}\|_{L^1(\Omega^2)}$, where $\rho_2$ is the correct pair density as computed in Section \ref{sec-numerical}. 

See Table \ref{table-appro-boson} for the optimal parameters $\varsigma$ as well as the error (in different norms, calculated by using the finite element discretizations used in Section \ref{sec-numerical}) between the correct pair densities and the ansatz \eqref{ansatz-gaussian}.
We present some cross sections (on $x=-y$) of the pair densities and our ansatz
in Figure \ref{fig-cross-appro-boson}. It appears that the ansatz \eqref{ansatz-gaussian} provides quite an accurate approximation.
Note that the ansatz \eqref{ansatz-gaussian} is accurate at the two limits (by taking $\varsigma=0$ at $\alpha=0$) and ($\varsigma=\infty$ at $\alpha=\infty$), and we can observe from Table \ref{table-appro-boson} that the approximations are better in the regimes where $\alpha$ is very small or large. 

Finally, we give the promised elementary argument which lends theoretical support to our Gaussian ansatz. For $\alpha=0$, $c(r)=1/r$, and, say, $N=2$, the Lagrange multiplier in eq. (6.7) is known exactly and equals $\lambda(x)=|x|/L^2$. Hence the total potential in (6.7) is 
$$
    V(x,y) = \frac{1}{|x-y|} + \frac{|x|}{L^2} + \frac{|y|}{L^2}.
$$ 
This potential is minimal on graph $T_2 = \{ x-y=\pm L\}$. For nonzero but small $\alpha$, the ground state should still be localized near graph $T_2$, and hence we may replace $V(x,y)$ by its second order Taylor polynomial at the nearest point to $(x,y)$ on graph $T_2$. This Taylor approximation is easily calculated to be
$$
    \tilde{V}(x,y) = \frac{2}{L} + \frac{d_2(x,y)^2}{L^3} = \frac{2}{L} + \frac{\min\{(x-y-L)^2,\, (x-y+L)^2\} }{L^3}.
$$ 
Eq. (6.14) with this potential is solved {\it exactly} by a Gaussian of form $e^{-d_2(x,y)^2/const}$, except on the diagonal $x=y$, where the Gaussian and the exact solution should both be small and hence close to each other. This suggests that eq. \eqref{ansatz-gaussian} (with $N=2$) is a good global approximation to the pair density. Giving a rigorous version of this argument is an interesting open problem. 

\begin{small}
\begin{table}[ht]
\centering
\begin{tabular}{|c|c|c|c|c|c|c|}\hline
       $N$  &  $\alpha$  &  optimal $\varsigma^2$  &  $\|\rho_2-G^{\rm bos}_{\varsigma}\|_{L^1}$
       &  $\|\rho_2-G^{\rm bos}_{\varsigma}\|_{L^2}$  &    $\tilde{V}_{ee}[\rho_2]$
       &  $\tilde{V}_{ee}[\rho_2]-\tilde{V}_{ee}[G^{\rm bos}_{\varsigma}]$  \\\hline
       \multirow{6}{*}{$2$}        & 0.1  & 1.21  & 0.0563 & 0.01453 & 0.218 & -0.00742     \\\cline{2-7}
                                   & 0.3  & 1.79  & 0.0773 & 0.01126 & 0.243 & -0.00876     \\\cline{2-7}
                                   & 1    & 2.70  & 0.0604 & 0.00782 & 0.277 & -0.01147     \\\cline{2-7}
                                   & 3    & 3.78  & 0.0472 & 0.00662 & 0.339 & -0.00662     \\\cline{2-7}
                                   & 10   & 6.86  & 0.0026 & 0.00358 & 0.420 & -0.00605     \\\cline{2-7}
                                   & 100  & 52.1  & 0.0013 & 0.00097 & 0.667 & -0.00107     \\\hline
       \multirow{6}{*}{$3$}        & 0.1  & 0.67  & 0.0922 & 0.01850 & 0.814 & -0.01841     \\\cline{2-7}
                                   & 0.3  & 1.01  & 0.1146 & 0.02284 & 0.861 & -0.02098     \\\cline{2-7}
                                   & 1    & 1.42  & 0.1571 & 0.03816 & 0.932 & -0.03367     \\\cline{2-7}
                                   & 3    & 2.19  & 0.1867 & 0.02122 & 1.170 & -0.01880     \\\cline{2-7}
                                   & 10   & 7.02  & 0.1292 & 0.01245 & 1.438 & -0.01138     \\\cline{2-7}
                                   & 100  & 64.0  & 0.0388 & 0.00469 & 2.009 & -0.00738     \\\hline
       \multirow{6}{*}{$4$}        & 0.1  & 0.90  & 0.2408 & 0.03570 & 1.899 & -0.02522     \\\cline{2-7}
                                   & 0.3  & 1.42  & 0.2885 & 0.03687 & 2.092 & -0.04871     \\\cline{2-7}
                                   & 1    & 1.93  & 0.2809 & 0.05231 & 2.162 & -0.05275     \\\cline{2-7}
                                   & 3    & 9.48  & 0.3788 & 0.04266 & 2.773 & -0.05477     \\\cline{2-7}
                                   & 10   & 32.1  & 0.1024 & 0.02387 & 3.234 & -0.02013     \\\cline{2-7}
                                   & 100  & 232.0 & 0.0542 & 0.00681 & 3.751 & -0.00045     \\\hline
\end{tabular}
\caption{Approximations of the pair densities of homogeneous systems (for bosons).
The Coulomb energy in the last two columns is defined by $\tilde{V}_{ee}[\rho_2] = \int\int \rho_2(x,y)c(|x-y|)dxdy$.}
\label{table-appro-boson}
\end{table}
\end{small}

\begin{figure}[htb]
\centering
\includegraphics[width=4.5cm]{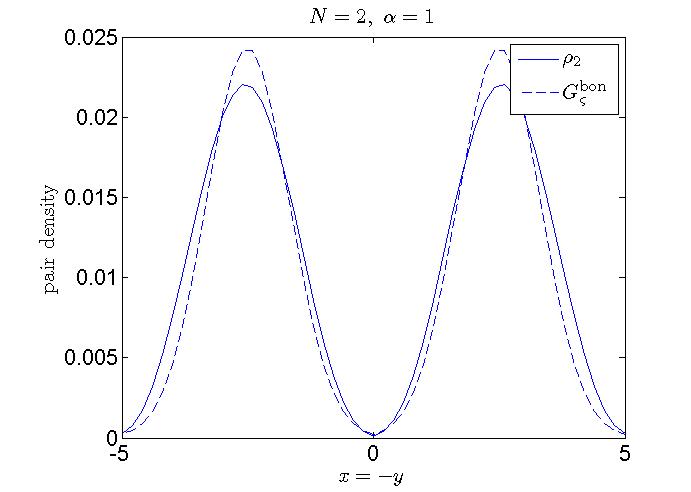}
\includegraphics[width=4.5cm]{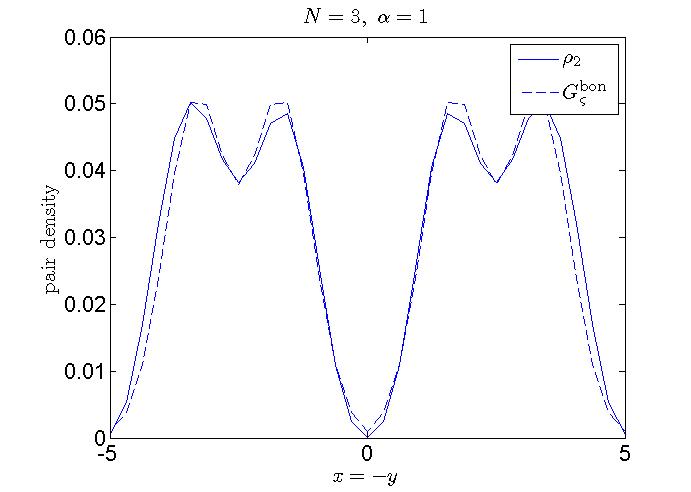}
\includegraphics[width=4.5cm]{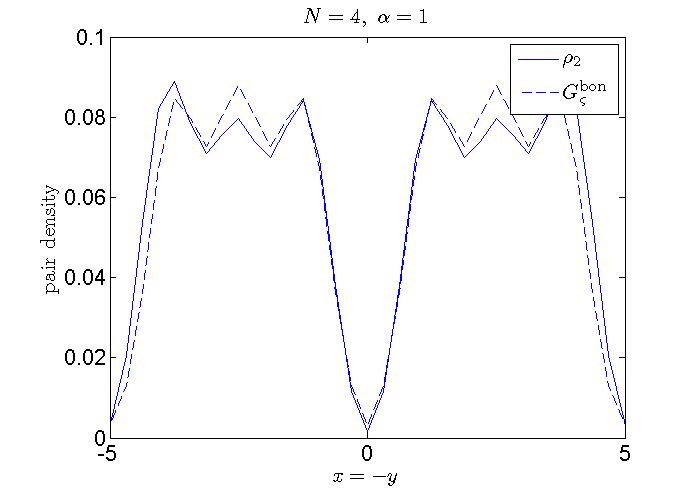}
\caption{\small The cross sections (on $x=-y$) of the pair densities
and their approximations $G^{\rm bos}_{\varsigma}$ for homogeneous electrons.}
\label{fig-cross-appro-boson}
\end{figure}

For fermions, to capture the asymptotic emergence of exact exchange as $\alpha\to\infty$ we make the ansatz
\begin{eqnarray}\label{ansatz-gaussian-HF}
G^{\rm fer}_{\varsigma,\eta}(x,y) = c_{\rm n} G^{\rm bos}_{\varsigma}(x,y)
\frac{1}{2}\left( \rho(x)\rho(y) - \eta\tau(x,y) \right),
\end{eqnarray}
where $c_n$ is a normalization constant, $\eta\in[0,1]$ is a parameter (allowed to depend on $N$ and $\alpha$), and $\tau$ is the exchange term from \eqref{exex'}. 
The freedom of varying $\eta$ allows a seamless crossover between the SCE pair density ($\eta=0,\,\varsigma=0$) and the exact-exchange pair density ($\eta=1,\,\varsigma=\infty$). 
The ansatz \eqref{ansatz-gaussian-HF} is not the only way to achieve this, but it is perhaps the simplest. Note that, unlike in B3LYP \cite{becke93}, exchange is mixed in {\it multiplicatively, not additively}. 
Numerically, we obtain $\eta$ by minimizing the $L^1$-error $\|\rho_2-G^{\rm fer}_{\varsigma,\eta}\|_{L^1(\Omega^2)}$ (while keeping, for simplicity, the bosonic values of $\varsigma$). The results in Table \ref{table-appro-fermion} and Figure \ref{fig-cross-appro-fermion} show that \eqref{ansatz-gaussian-HF} is a good approximation for fermions. In particular, Figure \ref{fig-cross-appro-fermion} (which concerns the case $N=4$ and different values of $\alpha$) shows that the transition from $6~(=2(N-1))$ SCE ridges to 4 exact-exchange ridges is correctly captured. 
The ansatz \eqref{ansatz-gaussian-HF} is accurate at the two limits $\alpha=0$ and $\alpha=\infty$, and the approximations are indeed better in the regimes where $\alpha$ is very small or large, as we can see from Table \ref{table-appro-fermion}.
Moreover, we observe that the errors for fermions are larger than those for bosons, which may be caused by the complicated interplay of Coulomb and exchange holes.

\begin{table}[ht]
\centering
\begin{tabular}{|c|c|c|c|c|c|c|}\hline
       $N$  &  $\alpha$  &  optimal $\eta$  &  $\|\rho_2-G^{\rm fer}_{\varsigma,\eta}\|_{L^1}$
       &  $\|\rho_2-G^{\rm fer}_{\varsigma,\eta}\|_{L^2}$  &    $\tilde{V}_{ee}[\rho_2]$
       &  $\tilde{V}_{ee}[\rho_2]-\tilde{V}_{ee}[G^{\rm fer}_{\varsigma,\eta}]$  \\\hline
       \multirow{6}{*}{$3$}        & 0.1  & 0    & 0.1375 & 0.02267 & 0.814 & -0.01694     \\\cline{2-7}
                                   & 0.3  & 0    & 0.2030 & 0.03601 & 1.416 & -0.01173     \\\cline{2-7}
                                   & 1    & 0.01 & 0.1919 & 0.02706 & 0.926 & -0.04983     \\\cline{2-7}
                                   & 3    & 0.02 & 0.2217 & 0.02681 & 1.094 & -0.04810     \\\cline{2-7}
                                   & 10   & 0.27 & 0.1792 & 0.02139 & 1.345 & -0.04138    \\\cline{2-7}
                                   & 100  & 0.92 & 0.0264 & 0.00324 & 1.676 & -0.00761     \\\hline
       \multirow{6}{*}{$4$}        & 0.1  & 0    & 0.3262 & 0.03189 & 1.898 & -0.04645     \\\cline{2-7}
                                   & 0.3  & 0    & 0.3222 & 0.03991 & 2.061 & -0.07957     \\\cline{2-7}
                                   & 1    & 0.01 & 0.3457 & 0.04296 & 2.133 & -0.07976     \\\cline{2-7}
                                   & 3    & 0.03 & 0.3539 & 0.03961 & 2.675 & -0.08353     \\\cline{2-7}
                                   & 10   & 0.47 & 0.1449 & 0.01752 & 2.971 & -0.06630     \\\cline{2-7}
                                   & 100  & 0.95 & 0.0223 & 0.00306 & 3.134 & -0.00155     \\\hline
\end{tabular}
\caption{Approximations of the pair densities of homogeneous systems (for fermions).}
\label{table-appro-fermion}
\end{table}

\begin{figure}[htb]
\centering
\includegraphics[width=4.5cm]{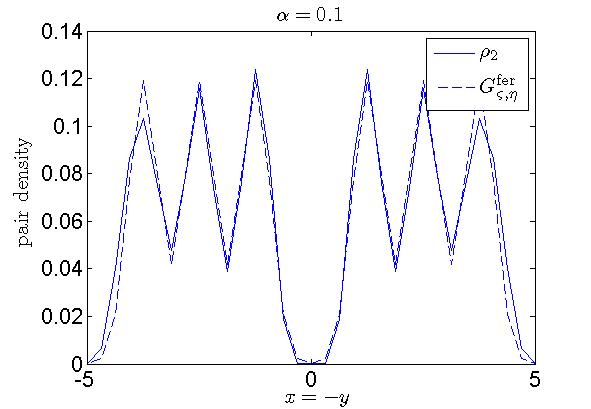}
\includegraphics[width=4.5cm]{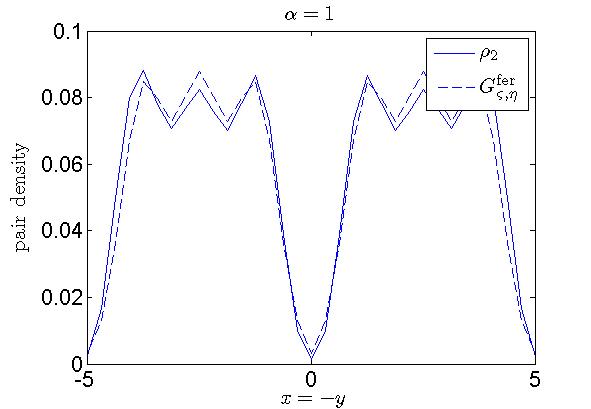}
\includegraphics[width=4.5cm]{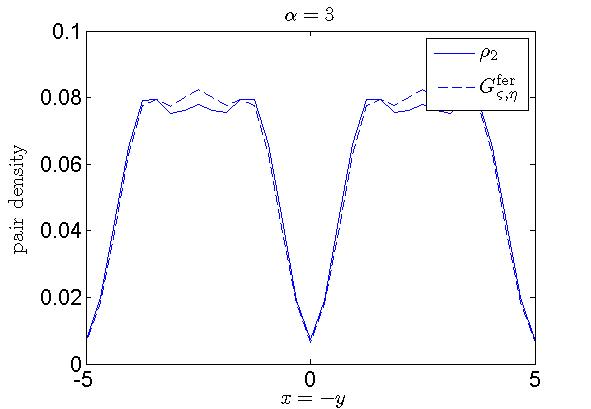}
\includegraphics[width=4.5cm]{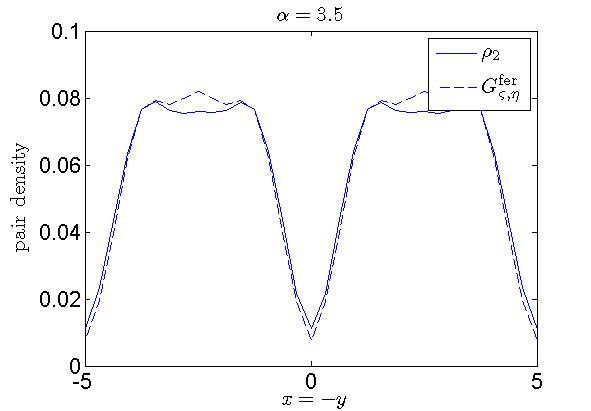}
\includegraphics[width=4.5cm]{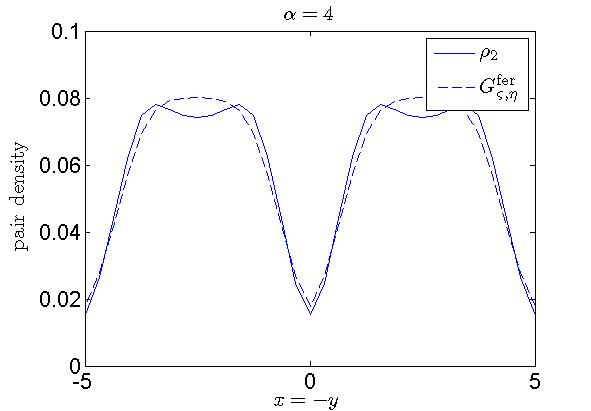}
\includegraphics[width=4.5cm]{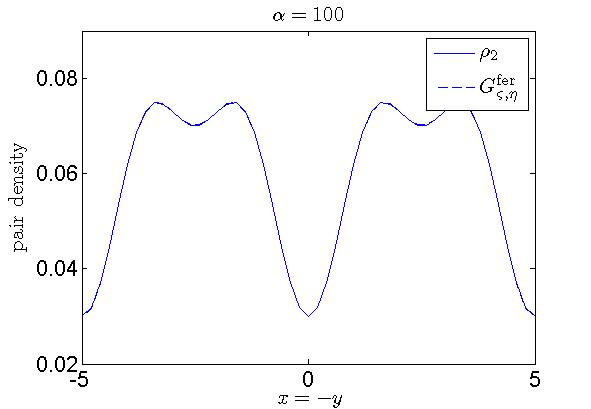}
\caption{\small The cross sections (on $x=-y$) of the pair densities
and their approximations $G^{\rm fer}_{\varsigma,\eta}$ for homogeneous systems with 4 electrons.}
\label{fig-cross-appro-fermion}
\end{figure}

\section{Conclusions}
\label{sec-conclusion} \setcounter{equation}{0}

In this paper we studied the exact density-to-pair-density map in density functional theory.
In the absence of any previous numerical simulations of this map, we
computed it here for typical one-dimensional families of densities obtained by scaling. This is the same as computing the map along the (two-sided) adiabatic connection from the non-interacting limit to the strictly correlated limit. We observed a slow and nontrivial cross-over between the endpoint profiles, which are given by exact exchange respectively by SCE correlations (or mathematically: by first-order perturbation theory respectively by optimal transport with Coulomb cost). The cross-over, while smooth, is very far from a linear interpolation and involves multiple lengthscales. 

This study gives us a deeper insight into the details of electron correlations, and may further lead to novel models for the pair density (and hence the interaction energy). 
As a fist step, we constructed an ansatz for pair densities of homogeneous systems in one dimension which is exact in the weak and the strong interaction limit and has been shown to remain accurate in the whole intermediate regime. The ansatz itself is readily generalized to inhomogeneous three-dimensional systems, but for such systems we have not yet tested its accuracy in the intermediate regime, nor do we know how to pick the correct parameter values just from the one-body density. We hope to come back to these issues in future work.
%
\\[4mm]
{\bf Acknowledgements} We thank Eric Canc\`es and Simen Kvaal for insightful comments on $v$-representability, and Andreas Savin for helpful discussions.

\small


\begin{thebibliography}{99}

\bibitem{ayers05} P.W. Ayers and M. Levy,
{\em Generalized density-functional theory: Conquering the N-representability problem
with exact functionals for the electron pair density and the second-order reduced density
matrix}, J. Chem. Sci., Vol., {\bf 117} (2005), pp. 507-514.

\bibitem{becke87} A.D. Becke, {\em  Correlation energy of an inhomogeneous electron gas: A coordinate-space model}, J. Chem. Phys., {\bf 88} (1988), pp. 1053-1062.

\bibitem{becke93}  A.D. Becke, {\em Density-functional thermochemistry.
III. The role of exact exchange}, J. Chem. Phys., {\bf 98} (1993), pp. 5648-5652.

\bibitem{becke14}  A.D. Becke, {\em Fifty years of density-functional theory in chemical physics}, J. Chem. Phys., {\bf 140} (2014), pp. 18A301 1-18.

\bibitem{bednarek03} S. Bednarek, B. Szafran, T. Chwiej, and J. Adamowski,
{\em Effective interaction for charge carriers confined in quasi-one-dimensional nanostructures}, 
Phys. Rev. B, {\bf 68} (2003), pp. 045328 1-9.

\bibitem{brenier} Y. Brenier, {\em Polar factorization and monotone rearrangement of vector-valued functions}, Comm. Pure Appl. Math., 44 (1991), pp. 375-417.

\bibitem{buttazzo12} G. Buttazzo, L. De Pascale, and P. Gori-Giorgi,
{\em Optimal-transport formulation of electronic density-functional theory},
Phys. Rev. A, {\bf 85} (2012), pp. 062502 1-11.

\bibitem{ChenEtAl} H. Chen, G. Friesecke, and C. Mendl, {\em Numerical methods for a Kohn-Sham density functional model based on optimal transport}, J. Chem. Theory Comp., 10 (2014), pp. 4360-4368.

\bibitem{cohen12} A.J. Cohen, P. Mori-S\'{a}nchez, and W. Yang,
{\em Challenges for density functional theory},
Chem. Rev., {\bf 112} (2012), pp. 289-320.

\bibitem{coleman02} A.J. Coleman and V.I. Yukalov,
{\em Reduced Density Matrices}, Springer Lecture Notes in Chemistry, Vol. 72, 2002.

\bibitem{colombo} M. Colombo, L. De Pascale, and S. Di Marino, {\em 
Multimarginal optimal transport maps for one-dimensional repulsive costs}, Canad. J. Math., {\bf 67} (2015), pp. 350-368.

\bibitem{cotar13a} C. Cotar, G. Friesecke, and C. Kl\"{u}ppelberg,
{\em Density functional theory and optimal transportation with Coulomb cost},
Comm. Pure Appl. Math., {\bf 66} (2013), pp. 548-599, and arXiv:1104.0603 (2011)

\bibitem{cotar13b} C. Cotar, G. Friesecke, and B. Pass,
{\em Infinite-body optimal transport with Coulomb cost}, Calc. Var. PDE, DOI:10.1007/s00526-014-0803-0.

\bibitem{dirac30} P.A.M. Dirac, {\em Note on exchange phenomena in the Thomas atom}, Math. Proc. Cambridge
Philos. Soc., {\bf 26} (1930), pp. 376-385.

\bibitem{friesecke97} G. Friesecke,
{\em Pair correlations and exchange phenomena in the free electron gas},
Comm. Math. Phys., {\bf 184} (1997), pp. 143-171.

\bibitem{friesecke13} G. Friesecke, C. Mendl, B. Pass, C. Cotar, and C.Kl\"{u}ppelberg,
{\em $N$-density representability and the optimal transport limit of the Hohenberg-Kohn functional}, J. Chem. Phys., {\bf 139} (2013), pp. 164109 1-12.

\bibitem{GMC} W. Gangbo and R. McCann, {\em The geometry of optimal transportation}, Acta Math., {\bf 177} (1996), pp, 113-161.

\bibitem{gangbo98} W. Gangbo and A. Swiech,
{\em Optimal maps for the multidimensional Monge-Kantorovich problem},
Comm. Pure Applied Math., {\bf 1} (1998), pp. 23-45.

\bibitem{Ghoussoub} N. Ghoussoub and A. Moameni, {\em Symmetric Monge-Kantorovich problems and polar decompositions of vector fields}, Geom. Funct. Anal., {\bf 24} (2014), pp. 1129-1166.

\bibitem{GVS} P. Gori-Giorgi, G. Vignale and M. Seidl, {\em Electronic zero-point oscillations in the strong-interaction limit of density functional theory}, 
J. Chem. Theory Comput., {\bf 5} (2009), pp. 743-753.

\bibitem{GunLun} O. Gunnarsson and B.I. Lundqvist, {\em Exchange and correlation in atoms, molecules, and solids by the spin-density-functional formalism},
Phys. Rev B, {\bf 13} (1976), pp. 4274-4298.

\bibitem{helgaker00} T. Helgaker, P. Jorgensen, and J. Olsen,
{\em Molecular Electronic-Structure Theory}, Wiley, 2000.

\bibitem{hohenberg64} P. Hohenberg and W. Kohn, {\em Inhomogeneous electron gas},
Phys. Rev. B, {\bf 136} (1964), pp. 864-871.

\bibitem{kohn65} W. Kohn and L.J. Sham,
{\em Self-consistent equations including exchange and correlation effects},
Phys. Rev. A, {\bf 140} (1965), pp. 1133-1138.

\bibitem{lee88} C. Lee, W. Yang, and R.G. Parr, {\em Development of the Colic-Salvetti
correlation-energy formula into a functional of the electron density},
Phys. Rev. B, {\bf 37} (1988), pp. 785-789.

\bibitem{levy79} M. Levy, {\em Universal variational functionals of electron densities, first-order density matrices, and natural spin-orbitals and solution of the v-representability problem}, Proc. Natl. Acad. Sci. USA, 76 (1979), pp. 6062-6065.

\bibitem{PerdewLevy} M. Levy and J. Perdew,
{\em Hellmann-Feynman, virial, and scaling requisites for the exact universal density functionals. Shape of the correlation potential and diamagnetic susceptibility for atoms}, 
Phys. Rev. A, {\bf 32} (1985), pp. 2010-2021. 

\bibitem{perdew95} J.P. Perdew, A. Savin, and K. Burke,
{\em Escaping the symmetry dilemma through a pair-density interpretation of spin-density functional theory}, Phys. Rev. A, {\bf 51} (1995), pp. 4531-4541.

\bibitem{lieb83} E.H. Lieb, {\em Density functionals for Coulomb systems},
International Journal of Quantum Chemistry, {\bf 24} (1983), pp. 243-277.

\bibitem{malet12} F. Malet and P. Gori-Giorgi,
{\em Strong correlation in Kohn-Sham density functional theory},
Phys. Rev. Lett., {\bf 109} (2012), pp. 246402 1-5.

\bibitem{martin05} R.M. Martin, {\em Electronic Structure: Basic Theory and Practical Methods},
Cambridge University Press, 2005.

\bibitem{mendl13} C.B. Mendl and L. Lin, {\em Towards the Kantorovich dual solution for
strictly correlated electrons in atoms and molecules},
Phys. Rev. B, {\bf 87} (2013), pp. 125106 1-6.

\bibitem{PY89} R.G. Parr and W. Yang, {\em Density functional theory of atoms and molecules},  
Oxford University Press, 1989.

\bibitem{PBE} J. Perdew, K. Burke, and M. Ernzerhof, {\em Generalized gradient approximations made 
simple}, Phys. Rev. Lett., {\bf 77} (1996), pp. 3865-3868.

\bibitem{PerdewBurkeWang} J. Perdew, K. Burke, and Y. Wang,
{\em Generalized gradient approximation for the exchange-correlation hole of a many-electron system},
Phys. Rev. B, {\bf 54} (1996), pp. 16533-16539.

\bibitem{PerdewWang} J. Perdew  and Y. Wang,
{\em  Pair-distribution function and its coupling-constant average for the spin-polarized electron gas},
Phys. Rev. B, {\bf 46} (1992), pp. 12947-12954.

\bibitem{seidl99a} M. Seidl,
{\em Strong-interaction limit of density-functional theory}, Phys. Rev. A, {\bf 60}
(1999), pp. 4387-4395.

\bibitem{seidl99b} M. Seidl, J.P. Perdew, and M. Levy,
{\em Strictly correlated electrons in density functional theory}, Phys. Rev. A,
{\bf 59} (1999), pp. 51-54.

\bibitem{Seidl2000} M. Seidl, J.P. Perdew, and S. Kurth, {\em Simulation of all-order density-functional perturbation theory, using the second order and the strong-correlation limit}, Phys. Rev. Lett., {\bf 84} (2000), pp. 5070-5073.

\bibitem{seidl07} M. Seidl, P. Gori-Giorgi, A. Savin,
{\em Strictly correlated electrons in density-functional theory:
A general formulation with applications to spherical densities}, Phys. Rev. A,
{\bf 75} (2007), pp. 042511 1-12.

\bibitem{stephens94} P.J. Stephens, F.J. Devlin, C.F. Chabalowski, and M.J. Frisch,
{\em Ab initio calculation of vibrational absorption and circular dichroism spectra
using density functional force fields}, J. Phys. Chem., {\bf 98} (1994), pp. 11623-11627.

\bibitem{villani09} C. Villani, {\em Optimal Transport: Old and New},
Springer, Heidelberg, 2009.


\end{thebibliography}
\end{document}